\title{Towards a Model Theory of Ordered Logics: Expressivity and Interpolation (Extended version)}
\titlerunning{Towards a Model Theory of Ordered Logics: Expressivity and Interpolation (Extended version)} 
\author{Bartosz Bednarczyk}
{Computational Logic Group, Technische Universit{\"a}t  Dresden, Germany \and 
Institute of Computer Science, University of Wroc\l aw, Poland
\and \url{https://bartoszjanbednarczyk.github.io/}}
{bartosz.bednarczyk@cs.uni.wroc.pl}
{https://orcid.org/0000-0002-8267-7554}
{supported by the ERC Consolidator Grant No. 771779 (DeciGUT).}
\author{Reijo Jaakkola}
{Tampere University, Finland
\and \url{https://reijojaakkola.github.io/}}
{reijo.jaakkola@tuni.fi}
{https://orcid.org/0000-0003-4714-4637}
{}
\authorrunning{B. Bednarczyk and R. Jaakkola}
\keywords{
ordered fragments, 
fluted fragment,
guarded fragment, 
model theory, 
Craig Interpolation Property,
expressive power,
model checking
}
\definecolor{darkmidnightblue}{rgb}{0.0, 0.2, 0.4}
\definecolor{persianplum}{rgb}{0.44, 0.11, 0.11}
\def\desclabel#1#2{\begingroup
\def\@currentlabel{#1}%
#1\label{#2}\endgroup
}
\tikzset{
diagonal fill/.style 2 args={fill=#2, path picture={
\fill[#1, sharp corners] (path picture bounding box.south west) -|
                         (path picture bounding box.north east) -- cycle;}},
reversed diagonal fill/.style 2 args={fill=#2, path picture={
\fill[#1, sharp corners] (path picture bounding box.north west) |- 
                         (path picture bounding box.south east) -- cycle;}}
}
\definecolor{ao(english)}{rgb}{0.0, 0.5, 0.0}
\definecolor{brickred}{rgb}{0.8, 0.25, 0.33}
\newcommand{\Logic}[1]{\ensuremath{\mathsf{#1}}} 
\newcommand{\logicL}{\Logic{L}} 
\newcommand{\Laffix}{\Logic{L}_{\mathsf{affix}}}   
\newcommand{\Linfix}{\Logic{L}_{\mathsf{inf}}}   
\newcommand{\Lsuffix}{\Logic{L}_{\mathsf{suf}}} 
\newcommand{\Lprefix}{\Logic{L}_{\mathsf{pre}}} 
\newcommand{\Gaffix}{\Logic{G}_{\mathsf{affix}}}   
\newcommand{\Ginfix}{\Logic{G}_{\mathsf{inf}}}     
\newcommand{\Gsuffix}{\Logic{G}_{\mathsf{suf}}}    
\newcommand{\Gprefix}{\Logic{G}_{\mathsf{pre}}}    
\newcommand{\GF}{\Logic{GF}}   
\newcommand{\GNFO}{\Logic{GNFO}}   
\newcommand{\UNFO}{\Logic{UNFO}}   
\newcommand{\FO}{\Logic{FO}}   
\newcommand{\complexityclass}[1]{\textsc{#1}} 
\newcommand{\NExpTime}{\complexityclass{NExpTime}} 
\newcommand{\Tower}{\complexityclass{Tower}}
\newcommand{\LogSpace}{\complexityclass{LogSpace}}
\newcommand{\PSpace}{\complexityclass{PSpace}}
\newcommand{\PTime}{\complexityclass{PTime}}
\newcommand{\str}[1]{{\mathfrak{#1}}}
\newcommand{\arity}{\mathsf{ar}}
\newcommand{\N}{{\mathbb{N}}}
\newcommand{\sqin}{%
  \mathrel{\vphantom{\sqsubset}\text{%
    \mathsurround=0pt
    \ooalign{$\sqsubset$\cr$-$\cr}%
  }}%
}
\newcommand{\relsymbol}[1]{\mathrm{#1}}
\newcommand{\relsymbolP}{\relsymbol{P}}
\newcommand{\relsymbolR}{\relsymbol{R}}
\newcommand{\relsymbolQ}{\relsymbol{Q}}
\newcommand{\relsymbolS}{\relsymbol{S}}
\newcommand{\relsymbolT}{\relsymbol{T}}
\newcommand{\relsymbolA}{\relsymbol{A}}
\newcommand{\relsymbolB}{\relsymbol{B}}
\newcommand{\relsymbolH}{\relsymbol{H}}
\newcommand{\sig}{\mathsf{sig}}
\newcommand{\domelem}[1]{\mathrm{#1}}                           
\newcommand{\domelema}{\domelem{a}}                             
\newcommand{\domelemb}{\domelem{b}}                             
\newcommand{\domelemc}{\domelem{c}}                             
\newcommand{\domelemd}{\domelem{d}}                             
\newcommand{\domeleme}{\domelem{e}}                             
\newcommand{\domelemf}{\domelem{f}}                               
\newcommand{\domelemg}{\domelem{g}}                             
\newcommand{\domelemh}{\domelem{h}}                             
\newcommand{\domelemw}{\domelem{w}}                             
\newcommand{\domelemtuplea}{\overline{\domelema}}                         
\newcommand{\domelemtupleb}{\overline{\domelemb}}                         
\newcommand{\domelemtuplec}{\overline{\domelemc}}                         
\newcommand{\domelemtupled}{\overline{\domelemd}}                         
\newcommand{\domelemtuplee}{\overline{\domeleme}}                         
\newcommand{\domelemtuplef}{\overline{\domelemf}}                           
\newcommand{\domelemtupleg}{\overline{\domelemg}}                         
\newcommand{\domelemtupleh}{\overline{\domelemh}}                         
\newcommand{\domelemtupledfromto}[2]{\overline{\domelemd}_{#1\ldots#2}}  
\newcommand{\domelemtupleefromto}[2]{\overline{\domeleme}_{#1\ldots#2}}  
\newcommand{\domelemtuplecfromto}[2]{\overline{\domelemc}_{#1\ldots#2}}  
\newcommand{\domelemtuplebfromto}[2]{\overline{\domelemb}_{#1\ldots#2}}  
\newcommand{\domelemtupleafromto}[2]{\overline{\domelema}_{#1\ldots#2}}  
\newcommand{\domelemtupleffromto}[2]{{\domelemtuplef}_{#1\dots#2}} 
\newcommand{\domelemtuplegfromto}[2]{{\domelemtupleg}_{#1\dots#2}} 
\newcommand{\var}[1]{\mathit{#1}}       
\newcommand{\varx}{\var{x}}             
\newcommand{\vary}{\var{y}}             
\newcommand{\varv}{\var{v}}             
\newcommand{\varu}{\var{u}}             
\newcommand{\vartuplex}{\overline{\varx}}    
\newcommand{\vartupley}{\overline{\vary}}                    
\newcommand{\vartuplexfromto}[2]{\overline{\varx}_{#1\ldots#2}}  
\newcommand{\vartupleyfromto}[2]{\overline{\vary}_{#1\ldots#2}}  
\newcommand{\tp}[3]{\mathsf{tp}^{#1}_{#2}(#3)}
\newcommand{\omegasat}[1]{\widehat{#1}}
\newcommand{\bisimulation}[1]{\mathcal{#1}} 
\newcommand{\bisimZ}{\bisimulation{Z}}
\newcommand{\bisimto}{\sim} 
\newcommand{\wit}{\mathsf{wit}}
\begin{document}

\maketitle

\begin{abstract}
  We consider the family of guarded and unguarded ordered logics, that constitute a recently rediscovered family of decidable fragments of first-order logic ($\FO$), in which the order of quantification of variables coincides with the order in which those variables appear as arguments of predicates.
  While the complexities of their satisfiability problems are now well-established, their model theory, however, is poorly understood. 
  Our paper aims to provide some insight into it.

  We start by providing suitable notions of bisimulation for ordered logics. 
  We next employ bisimulations to compare the relative expressive power of ordered logics, and to characterise our logics as bisimulation-invariant fragments of $\FO$ à la van Benthem.

  Afterwards, we study the Craig Interpolation Property~(CIP). 
  We refute yet another claim from the infamous work by Purdy, by showing that the fluted and forward fragments do not enjoy CIP. 
  We complement this result by showing that the ordered fragment and the guarded ordered logics enjoy CIP.
  These positive results rely on novel and quite intricate model constructions, which take full advantage of the ``forwardness'' of our logics.
\end{abstract}


\newcommand{\wrong}[1]{{\color{blue}\underline{#1}}}
\newcommand{\art}{\relsymbol{artist}}
\newcommand{\adm}{\relsymbol{admire}}
\newcommand{\bkpr}{\relsymbol{beekeeper}}
\newcommand{\env}{\relsymbol{envy}}
\newcommand{\student}{\relsymbol{student}}
\newcommand{\admires}{\relsymbol{admires}}
\newcommand{\prof}{\relsymbol{professor}}
\newcommand{\intro}{\relsymbol{introduce}}
\newcommand{\lecturer}{\relsymbol{lecturer}}
\newcommand{\person}{\relsymbol{person}}
\newcommand{\ispartof}{\relsymbol{isPartOf}}
\newcommand{\Narcissist}{\relsymbol{narcissist}}
\newcommand{\loves}{\relsymbol{loves}}
\newcommand{\haschild}{\relsymbol{hasChild}}
\newcommand{\hasparent}{\relsymbol{hasParent}}

\section{Introduction}\label{sec:introduction}
An ongoing research in computational logic has lead to discovery of new decidable fragments of first-order logics ($\FO$) that extend modal and description logics. 
The main ideas that were proposed in the past involve: restricting the number of variables~\cite{GradelKV97}, relativised quantification~\cite{AndrekaNB98,Benthem97}, restricted use of negation~\cite{SegoufinC13}, relativised negation~\cite{BaranyCS15}, one-dimensionality and uniformity~\cite{HellaK14}, separateness~\cite{0001VW16} and ordered quantification~\cite{Herzig90,Quine76}.
To compare aforementioned logics, the authors of~\cite[Section 4.7]{AndrekaNB98} proposed a list of desirable meta-properties of logic, which can serve as a yardstick to measure how ``nice'' a given logic is. 
We expect a logic $\logicL$ to
\begin{enumerate}[(A)]
  \item\label{intro:lab:A} be decidable and have the \emph{Finite Model Property} (FMP),
  \item\label{intro:lab:B} satisfy the Craig Interpolation Property (CIP), \ie for any $\logicL$-formulae $\varphi,\psi$ such that $\varphi \models \psi$ there should be an $\logicL$-formulae $\chi$, called an \emph{interpolant}, that uses only symbols appearing in the common vocabulary of $\varphi$ and $\psi$, so that $\varphi \models \chi \models \psi$ holds,
  \item\label{intro:lab:C} and to satisfy the analog of Łoś-Tarski Preservation Theorem (ŁTPT), \ie any $\logicL$-formula $\varphi$  preserved under substructures should be equivalent to some universal~$\logicL$-formula. 
 \end{enumerate}
It turned out that $\FO^2$ and $\GF$, example logics based on restricted number of variables and relativised quantification, are not ``nice'' as they do not enjoy CIP~\cite[Examples 1--2]{JungW21}.
In contrast, $\UNFO$ and $\GNFO$, the logics based on relativised negation, fulfil the properties (\ref{intro:lab:A})--(\ref{intro:lab:C}), consult:~\cite{SegoufinC13,BaranyBC18,BenediktCB16}.
For one-dimensionality, separateness and ordered quantification we have partial results only.

In this paper we take a closer look at logics enjoying ordered quantification, which have been receiving increasing attention recently~\cite{PrattHartmannS19,Bednarczyk21,Jaakkola21}. Their syntax can be informally explained as follows. We first require that all variables appearing in formulae are additionally indexed by the quantifier depth and then impose a certain restriction on such numbers in variable sequences in atoms.
Assuming that $\alpha(\vartuplex)$ is in the scope of the $n$-th quantifier (but not the $(n{+}1)$-th), in the fluted fragment $\Lsuffix$ of Quine~\cite{Quine76} (resp. in the ordered fragment $\Lprefix$ by Herzig~\cite{Herzig90}\footnote{Strictly speaking, the syntax of $\Lprefix$ is slightly more liberal than the original syntax of the ordered fragment as defined by Herzig, since the syntax of $\Lprefix$ allows requantifying variables.}) the tuple $\vartuplex$ is required to be a suffix (resp. a prefix) of the sequence $\varx_1, \varx_2, \ldots, \varx_n$.
The forward fragment $\Linfix$~\cite{Bednarczyk21} is more liberal and allows infixes in place of suffixes or prefixes.
An example formula $\varphi \in (\Lsuffix \cap \Linfix) \setminus \Lprefix$ is given below:
\begin{enumerate}\itemsep0em
\item No student admires every professor.
\vspace{-1em}
\[ 
\forall{\varx_1}\; (\student(\varx_1) \to \neg \forall{\varx_2} \;(\prof(\varx_2) \to \admires(\varx_1, \varx_2))) \]
\item No lecturer introduces any professor to every student.   
\vspace{-1em}
\[
\forall{\varx_1}\; \lecturer(\varx_1) \to \neg \exists{\varx_2}\;[\prof(\varx_2) \land \forall{\varx_3}\;(\student(\varx_3) \to \intro(\varx_1, \varx_2, \varx_3))]
\]
\end{enumerate}
Next, we provide a few coexamples, \ie formulae that, as stated, do not belong to any of $\Linfix, \Lprefix, \Lsuffix$. 
The \wrong{blue colour} indicates a mismatch in the variable ordering.
\begin{enumerate}\itemsep0em
    \item The relation $\ispartof$ is transitive.
    \vspace{-1em}
    \[
        \forall{\varx_1}\;\forall{\varx_2}\;\forall{\varx_3}\; \ispartof(\varx_1, \varx_2) \land \ispartof(\varx_2, \varx_3) \to \ispartof(\wrong{\varx_1, \varx_3})
    \]
    \item A narcissist is a person who loves himself.
    \vspace{-1em}
    \[
        \forall{\varx_1}\; \Narcissist(\varx_1) \to \person(\varx_1) \land \loves(\wrong{\varx_1,\varx_1}) 
    \]
    \item The binary relation $\haschild$ is the inverse of the $\hasparent$ relation.    
    \vspace{-1em}
    \[
        \forall{\varx_1}\;\forall{\varx_2}\; \hasparent(\varx_1, \varx_2) \leftrightarrow \hasparent(\wrong{\varx_2, \varx_1}) 
    \]
\end{enumerate}
All of $\Linfix, \Lsuffix, \Lprefix$ are decidable and have the Finite Model Property.
Their satisfiability problem is, respectively, $\Tower$-complete for $\Linfix$ and $\Lsuffix$, and $\PSpace$-complete for $\Lprefix$. 
Somehow unexpectedly, the $\Tower$-completeness of $\Lsuffix$ was established only recently by Pratt-Hartmann et al.~\cite{PrattHartmannS19}, after pointing out a mistake in the proof of the exponential-size model of $\Lsuffix$ by Purdy~\cite{Purdy02} and disproving Purdy's claim of $\NExpTime$-completeness of~$\Lsuffix$. 
The model theory of $\Linfix, \Lsuffix$, and $\Lprefix$ is, however, poorly understood. 
The only results that we are aware of are Purdy's claims that $\Lsuffix$ has CIP~\cite[Thm. 14]{Purdy02} and ŁTPT~\cite[Corr. 17]{Purdy02}.
But in the light of previously discovered errors, one should treat Purdy's paper with caution.

\subsection{Our results}\label{subsec:our-results}
This paper kick-starts a project of understanding the model theory of \emph{ordered logics}, by which we mean the logics $\Lprefix, \Lsuffix$, and $\Linfix$ as well as their intersections with the guarded fragment $\GF$~\cite{AndrekaNB98}, focusing on the problems mentioned in the introduction.

In~\cref{sec:expressive-power}, we design a suitable notion of bisimulations and compare the relative expressive power of ordered logics.
Our proofs employ standard model-theoretic constructions like the Compactness Theorem and $\omega$-saturated structures.
Next, we investigate CIP in~\cref{sec:interpolation}, which is the main technical contribution of the paper.
First, we focus on interpolation for the fluted and the forward fragments. 
We show that, surprisingly, $\Linfix$ and $\Lsuffix$ do not enjoy CIP, refuting yet another claim from the infamous work of Purdy~\cite[Thm. 14]{Purdy02}.
Fortunately, other members of the family of ordered logics enjoy CIP, as shown in Sections~\ref{subsec:CIP-for-Lprefix}--\ref{subsec:CIP-for-Gaffix}.
We stress here that standard techniques for proving CIP, \eg those based on zig-zag products~\cite{marx1995algebraic,HooglandMO99,BaranyBC18,Jaakkola22}, do not seem to work in our case.\footnote{For logics that are closed under negation on the level of formulas, zig-zag constructions seem to work only if the logics are \emph{one-dimensional} and \emph{uniform}, see~\cite{Jaakkola22} for more details. None of our logics are one-dimensional nor uniform.}
This forces us to take a different route: we construct models explicitly by specifying types of tuples.

We believe that our proof methods, which are based on novel and intricate model-theoretic constructions, are very general. 
In particular, we believe that our CIP proof for guarded ordered logics can serve as a useful meta-technique (or even a heuristic) for (dis)proving CIP for fragments of $\GF$. 
For instance, the proof can be adopted to fragments with CIP, deriving existing results (\eg for the $2$-variable $\GF$~\cite{HooglandMO99} or the uniform one-dimensional $\GF$~\cite{Jaakkola22}) and its failure gives hints why a certain fragment may not have CIP (\eg in the case of full~$\GF$).

\section{Preliminaries}\label{sec:preliminaries}
Henceforth, we employ standard terminology from (finite and classical) model theory~\cite{Libkin04,Hodges97}.
All the logics considered here will be fragments of the first-order logic ($\FO$) over purely-relational equality-free vocabularies, under the usual syntax and semantics. 

We fix a countably infinite set of variables $\{x_i \mid i \in \N\}$ and throughout this paper all the formulas use only variables from this set.
With $\sig(\varphi)$ we denote the set of relational symbols appearing in $\varphi$. 
We use $\arity(\relsymbolR)$ to denote the arity of $\relsymbolR$.
For a logic $\logicL$ and a signature $\sigma$ we use~$\logicL[\sigma]$ in place of~$\{ \varphi \in \logicL \mid \sig(\varphi) \subseteq \sigma \}$.
The $k$-variable fragment of $\logicL$ (\ie employing only the variables $\varx_1, \varx_2, \ldots, \varx_k$) is denoted~$\logicL^k$.
We write $\varphi(\vartuplex)$ to indicate that all free variables from $\varphi$ are members of $\vartuplex$. If $\vartuplex$ contains precisely the free
variables of $\varphi$, then we will emphasise this separately. 
Given a structure $\str{A}$ and $B \subseteq A$, we will use $\str{A} \upharpoonright B$ to denote the \emph{substructure} of $\str{A}$ that $B$ induces.\\

\noindent \textbf{Tuples and subsequences.}
An $n$-tuple is a tuple with $n$ elements. The $0$-tuple is denoted with $\epsilon$.
We use $\vartuplexfromto{i}{j}$ to denote the $(j{-}i{+}1)$-tuple $\varx_i, \varx_{i+1}, \ldots, \varx_j$.
We say that $\vartuplexfromto{i}{j}$ is an infix of a tuple $\vartuplexfromto{k}{l}$ if $k \leq i \leq j \leq l$ holds. 
If, in addition, $k = i$ (resp. $j = l$) we say that $\vartuplexfromto{i}{j}$ is a prefix (resp. suffix) of~$\vartuplexfromto{k}{l}$.
We use the word \emph{affix} as a place-holder for the words \emph{pre}fix, \emph{suf}fix or \emph{inf}ix.
For a set $S$, we write $\vartuplex \sqin S$ iff $\varx_i \in S$ for all indices $1 \leq i \leq |\vartuplex|$, where $|\vartuplex|$ denotes the length of~$\vartuplex$. 
A tuple $\domelemtuplea \sqin A$ is \emph{$\sigma$-live} in $\str{A}$ if $|\domelemtuplea| \leq 1$ or $\domelemtuplea \in \relsymbolR^{\str{A}}$ for some~$\relsymbolR \in \sigma$.\\

\noindent \textbf{Logics.}
We next introduce the logics $\Laffix \in \{ \Lprefix, \Lsuffix, \Linfix \}$. We start from $\Lsuffix$, which for technical reasons we need to define separately from $\Lprefix$ and $\Linfix$. For every $n \in \N$, we define the set $\Lsuffix(n)$ as follows:
\begin{itemize}\itemsep0em
    \item an atom $\alpha(\vartuplex)$ is in $\Lsuffix(n)$ if $\vartuplex$ is a suffix of~$\vartuplexfromto{1}{n}$,
    \item $\Lsuffix(n)$ is closed under Boolean connectives $\land, \lor, \neg, \to$,
    \item if $\varphi$ is in $\Lsuffix(n{+}1)$ then $\exists{\varx_{n{+}1}} \; \varphi$ and $\forall{\varx_{n{+}1}} \; \varphi$ are in $\Lsuffix(n)$.
\end{itemize}
We put $\Lsuffix := \Lsuffix(0)$, which is exclusively composed of sentences.

\hspace{-1.7em}To define the fragments $\logicL \in \{\Lprefix, \Linfix\}$, for every $n \in \N$ we define the set of $\logicL(n)$ as follows:
\begin{itemize}\itemsep0em
    \item an atom $\alpha(\vartuplex)$ is in $\Lprefix(n)$ if $\vartuplex$ is a prefix of~$\vartuplexfromto{1}{n}$ and in $\Linfix(n)$ if $\vartuplex$ is an infix of~$\vartuplexfromto{1}{n}$.
    \item if $\varphi \in \logicL(n_1)$ and $\psi \in \logicL(n_2)$, then for all $n\geq \max\{n_1,n_2\}$ we have that $\neg \varphi, (\varphi \lor \psi), (\varphi \land \psi), (\varphi \to \psi)$ are in $\logicL(n)$.
    \item if $\varphi$ is in $\logicL(n{+}1)$ then $\exists{\varx_{n{+}1}} \; \varphi$ and $\forall{\varx_{n{+}1}} \; \varphi$ are in $\logicL(n)$.
\end{itemize}
We set $\logicL := \logicL(0)$, which is exclusively composed of sentences. 
We stress that in contrast to $\Lsuffix$, the logics $\logicL \in \{\Lprefix, \Linfix\}$ allow us to requantify variables.
We recommend the reader to employ the above definition to show that $\forall{\varx_1}\forall{\varx_2}\forall{\varx_3} (\relsymbolR(\varx_1\varx_2\varx_3) \to (\relsymbolA(\varx_1) \land \exists{\varx_2}\exists{\varx_3} \relsymbolS(\varx_1\varx_2\varx_3))) \in \Linfix$.

Notice that if $\varphi(\vartuplex) \in \Laffix(n)$, where $\vartuplex$ lists all the free variables of $\varphi$ in order (with respect to their indices), then $\vartuplex$ is an affix of the tuple $\vartuplexfromto{1}{n}$.
The logics $\Lprefix, \Lsuffix$, and $\Linfix$ were studied under the names of ordered~\cite{Herzig90}, fluted~\cite{Quine76}, and forward~\cite{Bednarczyk21} fragments.
The \emph{guarded} counterparts $\Gaffix$ of~$\Laffix$, are defined as the intersection of $\Laffix$ and the guarded fragment~$\GF$~\cite{AndrekaNB98}, \ie by imposing that blocks of quantifiers are relativised by atoms (recalled below).
Abusing notation, we speak about all these logics collectively as \emph{ordered logics}.

\hspace{-1.75em} For reader's convenience we recall that $\GF$ is the smallest fragment of $\FO$  such that:
\begin{itemize}\itemsep0em
    \item Every atomic formula is in $\GF$;
    \item $\GF$ is closed under boolean connectives $\land, \lor, \neg, \to$;
    \item If $\varphi(\vartuplex, \vartupley)$ is in $\GF$ and $\alpha(\vartuplex, \vartupley)$ is an atom containing all free variables of $\varphi$ then both $\forall{\vartupley} \; (\alpha(\vartuplex, \vartupley) \to \varphi(\vartuplex, \vartupley))$ and $\exists{\vartupley} \; (\alpha(\vartuplex, \vartupley) \land \varphi(\vartuplex, \vartupley))$ are in $\GF$; 
    \item If $\varphi(\varx)$ has only a single free-variable $\varx$, then $\forall{\varx}\; \varphi$ and $\exists{\varx}\; \varphi$ are in $\GF$.
\end{itemize}
The atoms $\alpha$, appearing in the 3rd item of the above definition is called a \emph{guard}.

For a finite signature $\sigma$ and $n \in \N$, a \emph{$(\sigma, n)$-affix-type} is a conjunction of atoms with $n$ free variables $\vartuplexfromto{1}{n}$, in which for every $\relsymbolR \in \sigma$ and every affix $\vartuplexfromto{l}{k}$ of $\vartuplexfromto{1}{n}$, of length $\arity(\relsymbolR)$, exactly one of $\relsymbolR(\vartuplexfromto{l}{k})$, $\neg \relsymbolR(\vartuplexfromto{l}{k})$ appears as a conjunct. 
For a $\sigma$-structure $\str{A}$ and a tuple $\domelemtuplea \sqin A$ with $\tp{\Laffix[\sigma]}{\str{A}}{\domelemtuplea}$ we denote the \emph{unique} $(\sigma, |\domelemtuplea|)$-affix-type realised by $\domelemtuplea$ in $\str{A}$.


\newcommand{\menc}{\mathrm{menc}}

\subsection{Model Checking}\label{subsec:model-checking}

Before jumping into the main part of the paper, we would like to point out some results on the combined complexity of model checking problems of ordered logics, since these seem to be missing from the literature.
In what follows we will employ the \emph{matrix encoding of structure}, that is a standard encoding in finite model theory~\cite[p. 88]{Libkin04}. 
Given a $\{\relsymbolR_1,\ldots,\relsymbolR_m\}$-structure $\str{A}$ with a linearly-ordered domain $A$, by its \emph{matrix encoding} we mean a binary string $\menc(\str{A}) := 0^n 1\menc(\relsymbolR_1)\ldots\menc(\relsymbolR_m)$, where $\menc(\relsymbolR_i)$ is a binary sequence of length $|A|^{\arity(\relsymbolR_i)}$, in which the $j$-th bit is $1$ iff the $j$-th tuple in the lexicographic ordering of $|A|^{\arity(\relsymbolR_i)}$ belongs to $\relsymbolR_i^\str{A}$. 

The following theorem collects our complexity results. 
We have not tried to optimise the upper bounds for $\Gprefix$ and $\Lprefix$: it is quite possible that they can be improved further.

\begin{theorem}
	Under the matrix encoding of structures, the combined complexity of the model-checking problem for a logic $\logicL$ is 
	\begin{enumerate}\itemsep0em
		\item decidable in $\PTime$ for $\Gprefix$ and $\Lprefix$,
		\item $\PTime$-complete for $\logicL \in \{\Gsuffix, \Ginfix, \Lsuffix\}$, and
		\item $\PSpace$-complete for $\logicL = \Linfix$.
	\end{enumerate}
\end{theorem}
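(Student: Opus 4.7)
The plan is to design a bottom-up dynamic-programming (DP) algorithm for the tractable cases and to derive the $\PSpace$ bound from standard facts about $\FO$ together with a QBF reduction for $\Linfix$. For the upper bounds in items~1--2, I would tabulate for every subformula $\psi$ of $\varphi$ the set $T_\psi \subseteq A^{|F_\psi|}$ of assignments to the free variables $F_\psi$ satisfying $\psi$, processed bottom-up: atom tables are read directly off the matrix encoding, Boolean connectives amount to elementary set operations, and quantifiers to projection. Polynomial runtime for $\Lprefix, \Gprefix, \Lsuffix, \Gsuffix$ reduces to the following syntactic lemma, which I would establish by structural induction: \emph{the number of free variables $|F_\psi|$ of any subformula $\psi$ is bounded by the maximum arity $m_\psi$ of a predicate symbol appearing in $\psi$.} Indeed, any free variable $\varx_k$ of $\psi$ must occur in some atom of $\psi$ that is not inside a quantifier binding $\varx_k$; by the prefix (resp.\ suffix) discipline that atom has arity at least $|F_\psi|$. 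Consequently each table has size at most $|A|^{m_\psi} \leq |\menc(\str{A})|$. For $\Ginfix$ the same bound fails (witness $A(\varx_1,\varx_2) \land B(\varx_2,\varx_3)$, which has three free variables yet only binary atoms), so I would instead appeal to the standard guard-based algorithm for $\GF$ model-checking: only assignments originating from the live tuples of some guarding atom need to be enumerated, and their total number is bounded by the matrix-encoding size. The $\PSpace$ upper bound for $\Linfix$ is immediate since $\Linfix \subseteq \FO$.

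For the lower bounds, $\PTime$-hardness of $\Lsuffix, \Gsuffix, \Ginfix$ follows from a direct reduction from the propositional formula value problem, observing that $0$-ary predicates are permitted in all three logics (the empty tuple $\epsilon$ is an affix of every tuple and is trivially guarded): a propositional formula together with a truth assignment becomes a sentence of $\Lsuffix \cap \Gsuffix \cap \Ginfix$ over a one-element structure whose model-checking returns exactly the original formula's value. For $\PSpace$-hardness of $\Linfix$, I would reduce from QBF. Fix the two-element structure $\str{A}$ with a single unary predicate $T^{\str{A}} = \{1\}$; given a QBF sentence $Q_1 x_1 \cdots Q_n x_n \phi$, replace each propositional atom $x_i$ by the unary $T(\varx_i)$ to obtain $\phi'$. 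Every atom used is unary and hence trivially an infix of $\varx_1, \ldots, \varx_n$, so $Q_1 \varx_1 \cdots Q_n \varx_n \phi'$ belongs to $\Linfix$ and preserves the QBF's truth value over $\str{A}$.

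The main obstacle is the syntactic free-variable lemma, which requires some care through requantification in $\Lprefix$ and some attention to how Boolean connectives propagate free variables upward. A further subtlety is that the lemma \emph{genuinely fails} for $\Ginfix$, forcing reliance on the $\GF$-style guard algorithm rather than naive tabulation; the analogous failure for $\Linfix$ is not an accident but precisely the source of the $\PSpace$ jump, since the infix discipline without a guard permits alternating quantifiers over rebound variables --- exactly the ingredient needed to encode QBF.
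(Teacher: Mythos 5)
Your upper-bound arguments are essentially sound: the bottom-up tabulation driven by the free-variable/arity lemma is a correct, if different, route to the same place as the paper, which instead runs a top-down alternating algorithm that keeps at most $N = \max\{\arity(\relsymbolR) \mid \relsymbolR \in \sig(\varphi)\}$ variable bindings in memory and invokes alternating $\LogSpace = \PTime$; both approaches hinge on the identical observation that $|A|^{N}$ is bounded by the length of the matrix encoding. Your QBF reduction for $\Linfix$ is likewise exactly the standard $\PSpace$-hardness argument for monadic $\FO$ model checking that the paper cites.

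The genuine gap is in your $\PTime$-hardness claim for $\Lsuffix$, $\Gsuffix$ and $\Ginfix$. The propositional \emph{formula} value problem is not $\PTime$-complete: by Buss's theorem it is decidable in alternating logarithmic time and is complete for $\mathrm{NC}^1$, so a reduction from it establishes only $\mathrm{NC}^1$-hardness. The problem that is $\PTime$-complete is the \emph{circuit} value problem, and a circuit is a DAG whose unfolding into a tree-shaped formula can blow up exponentially; your one-element structure with $0$-ary predicates gives you no mechanism for sharing subcomputations, so the reduction cannot be repaired in place. This is precisely why the paper reduces instead from the combined complexity of modal-logic model checking, which is known to be $\PTime$-complete: there the Kripke structure, rather than the formula, carries the DAG of the circuit, and modal logic embeds into each of $\Lsuffix$, $\Gsuffix$ and $\Ginfix$. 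To fix your proof you would need a reduction from CVP or alternating graph reachability that encodes the circuit into the input structure in this way.
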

\begin{proof}
	The upper bound for $\Gprefix$ follows from the second item while the upper bound for $\Lprefix$ is proved in~\cref{appendix:model-checking-fluted-logic}. For the second item, the lower bound follows for all of the logics from the fact that
	they embed standard modal logic, for which the combined complexity is $\PTime$-complete~\cite[Cor.~3.1.7]{GradelKLMSVVW07}. For $\Gsuffix$ and $\Ginfix$ matching upper bounds follow from the fact that
	the combined complexity of the guarded fragment is $\PTime$-complete, while for $\Lsuffix$ the matching upper bound is proved in~\cref{appendix:model-checking-fluted-logic}.
	Finally, for the third item, the upper bound follows from the fact that the combined complexity of $\FO$ is $\PSpace$-complete~\cite{BerwangerG01}, while the matching lower bound follows
	from the fact that $\Linfix$ contains monadic $\FO$, for which the combined complexity of model-checking is known to be $\PSpace$-complete~\cite[p.~99]{Libkin04}.
\end{proof}

The matrix encoding is not the only natural way of encoding models. Another option would be to use the \emph{list}/\emph{database encoding} of models, where one essentially encodes relations by
listing the tuples that they contain, as opposed to describing their adjacency matrices. It is easy to see that, if there is no bound on the arities of the relation symbols, then
the list encoding of a model can be exponentially more succinct than its matrix encoding. Our proofs for the upper bounds of $\Lprefix$ and $\Lsuffix$ are heavily dependent on
the fact that we are using the matrix encoding of models, and hence it is conceivable that the complexities are higher if we are using list encoding.\footnote{They can not decrease, because a list encoding of a model can always be constructed efficiently from its matrix encoding.}
We leave the related investigations as a very interesting future research direction.
\section{Expressive power}\label{sec:expressive-power}
We study the relative expressive power of ordered logics with a suitable notion of bisimulations.
\vspace{-\baselineskip}
\begin{definition}\label{def:Laffix-bisimulations}
  A non-empty set $\bisimZ \subseteq \bigcup_{n < \omega} (A^n \times B^n)$
  is a \emph{$\Laffix[\sigma]$-bisimulation} between pointed structures $\str{A},\domelemtuplea$ and $\str{B},\domelemtupleb$, where $|\domelemtuplea|=|\domelemtupleb|$, if and only if $(\domelemtuplea, \domelemtupleb) \in \bisimZ$ and for all~$(\domelemtuplec, \domelemtupled) \in \bisimZ$ the following conditions hold:
  \begin{description} \itemsep0em
    \item[\desclabel{(atomic harmony)}{bisim:atomic-harmony}] $\tp{\Laffix[\sigma]}{\str{A}}{\domelemtuplec} = \tp{\Laffix[\sigma]}{\str{B}}{\domelemtupled}$.
    \item[\desclabel{(forth)}{bisim:forth}]  For a (possibly empty) affix $\domelemtuplecfromto{i}{j}$ of $\domelemtuplec$ and $\domeleme\in A$ there is $\domelemf \in B$ s.t.~$(\domelemtuplecfromto{i}{j}\domeleme, \domelemtupledfromto{i}{j}\domelemf) \in \bisimZ$.
    \item[\desclabel{(back)}{bisim:back}]  For a (possibly empty) affix $\domelemtupledfromto{i}{j}$ of $\domelemtupled$ and $\domelemf\in B$ there is $\domeleme \in A$ s.t.~$(\domelemtuplecfromto{i}{j}\domelemd, \domelemtupledfromto{i}{j}\domelemf) \in \bisimZ$.
  \end{description}
\end{definition}

For $\Gaffix$, we replace the conditions \ref{bisim:forth}, \ref{bisim:back} by their guarded counterparts:
\begin{description} \itemsep0cm 
  \item[\desclabel{(gforth)}{bisim:gforth}] For a (possibly empty) affix $\domelemtuplecfromto{i}{j}$ of $\domelemtuplec$ and a $\sigma$-live tuple $\domelemtuplee$ in $\str{A}$ such that $\domelemtuplecfromto{i}{j} = \domelemtupleefromto{1}{j{-}i{+}1}$ there is a $\sigma$-live tuple $\domelemtuplef$ with $\domelemtupledfromto{i}{j} = \domelemtupleffromto{1}{j{-}i{+}1}$ and $(\domelemtuplee, \domelemtuplef) \in \bisimZ$,
  \item[\desclabel{(gback)}{bisim:gback}] For a (possibly empty) affix $\domelemtupledfromto{i}{j}$ of $\domelemtupled$ and a $\sigma$-live tuple $\domelemtuplef$ in $\str{B}$ such that $\domelemtupledfromto{i}{j} = \domelemtupleffromto{1}{j{-}i{+}1}$ there is a $\sigma$-live tuple $\domelemtuplee$ with $\domelemtuplecfromto{i}{j} = \domelemtupleefromto{1}{j{-}i{+}1}$ and $(\domelemtuplee, \domelemtuplef) \in \bisimZ$,
\end{description} 

For a logic $\logicL$ and a finite signature $\sigma$, we write $\str{A} \equiv_{\logicL[\sigma]} \str{B}$ if $\str{A}$ and $\str{B}$
 satisfy the same $\logicL[\sigma]$-sentences, and we write $\str{A} \bisimto_{\logicL[\sigma]} \str{B}$ if there is an $\logicL[\sigma]$-bisimulation between $\str{A}$ and $\str{B}$. 
 If $|\domelemtuplea|=|\domelemtupleb|$, we use $\str{A}, \domelemtuplea \equiv_{\logicL[\sigma]} \str{B}, \domelemtupleb$ to denote that for every (possibly empty) affix $\domelemtupleafromto{i}{j}$ of $\domelemtuplea$ and $\varphi(\vartuplexfromto{i}{j}) \in \logicL[\sigma]$, where $\vartuplexfromto{i}{j}$ is an affix of $(\varx_1,\dots,\varx_n)$, we have that
$\str{A} \models \varphi(\domelemtupleafromto{i}{j})$ if and only if $\str{B} \models \varphi(\domelemtuplebfromto{i}{j})$. For the next lemma consult~\cref{appendix:lemma:linking-bisimilarity-and-equivalence}. 

\begin{lemma}\label{lemma:linking-bisimilarity-and-equivalence}
  Let $\logicL \in \{ \Laffix, \Gaffix \}$. Then $\str{A}, \domelemtuplea \bisimto_{\logicL[\sigma]} \str{B}, \domelemtupleb$ implies $\str{A}, \domelemtuplea \equiv_{\logicL[\sigma]} \str{B}, \domelemtupleb$.
  The converse holds over $\omega$-saturated $\str{A}$ and $\str{B}$.
\end{lemma}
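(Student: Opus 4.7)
The plan is to handle the two directions separately, with the forward direction by induction on formula structure and the backward direction by a back-and-forth construction exploiting $\omega$-saturation.

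For the forward implication, I will prove by induction on $\varphi(\vartuplexfromto{i}{j}) \in \logicL[\sigma]$ that whenever $(\domelemtuplec, \domelemtupled) \in \bisimZ$ and $\domelemtuplecfromto{i}{j}$ is a corresponding affix, we have $\str{A} \models \varphi(\domelemtuplecfromto{i}{j}) \iff \str{B} \models \varphi(\domelemtupledfromto{i}{j})$. The atomic case is immediate from \ref{bisim:atomic-harmony} (since affix-types are preserved). Boolean connectives are routine. For quantifiers $\exists \varx_{n+1} \psi$ or $\forall \varx_{n+1} \psi$, I apply \ref{bisim:forth} or \ref{bisim:back} to the affix $\domelemtuplecfromto{i}{j}$: extending by one element on the right exactly matches the quantification step, and the inductive hypothesis applied to the newly paired tuples closes the case. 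For $\Gaffix$, the quantification in $\GF$ is over a guard atom, so the witness for $\exists$ is a $\sigma$-live tuple; this is precisely what \ref{bisim:gforth} and \ref{bisim:gback} supply, after noting that the $\sigma$-liveness of the witness tuple is guaranteed by the guard. The only care needed is checking that in $\Lsuffix$, where requantification is forbidden, the variable indices remain aligned with the affix structure — but this is exactly why the bisimulation conditions are parameterised by affixes rather than by the whole tuple.

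For the converse over $\omega$-saturated structures, I define
\[
  \bisimZ \;\deff\; \{(\domelemtuplec, \domelemtupled) \in \textstyle\bigcup_n (A^n \times B^n) \mid \str{A}, \domelemtuplec \equiv_{\logicL[\sigma]} \str{B}, \domelemtupled\},
\]
which contains $(\domelemtuplea, \domelemtupleb)$ by hypothesis. Atomic harmony is immediate from the definition of $\equiv_{\logicL[\sigma]}$. For \ref{bisim:forth}, fix $(\domelemtuplec, \domelemtupled) \in \bisimZ$, an affix $\domelemtuplecfromto{i}{j}$, and $\domeleme \in A$. Consider the $\logicL[\sigma]$-type
\[
  \Gamma(\vary) \;\deff\; \{\varphi(\vartuplexfromto{i}{j}, \vary) \in \logicL[\sigma] \mid \str{A} \models \varphi(\domelemtuplecfromto{i}{j}, \domeleme)\},
\]
whose free variables together with $\vary$ form an affix of the appropriate $\vartuplexfromto{1}{n+1}$. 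Any finite $\Phi \subseteq \Gamma$ yields a conjunction $\bigwedge \Phi$, and then $\exists \vary \bigwedge \Phi$ is an $\logicL[\sigma]$-formula (using closure under Booleans and quantification — for $\Lsuffix$ one simply invokes its definition at the right level $n+1$) that $\str{A}$ satisfies at $\domelemtuplecfromto{i}{j}$; hence $\str{B} \models \exists \vary \bigwedge \Phi$ at $\domelemtupledfromto{i}{j}$, so $\Gamma$ is finitely realised over $\domelemtupledfromto{i}{j}$. By $\omega$-saturation of $\str{B}$, some $\domelemf \in B$ realises $\Gamma$, and then $(\domelemtuplecfromto{i}{j}\domeleme, \domelemtupledfromto{i}{j}\domelemf) \in \bisimZ$. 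The \ref{bisim:back} direction is symmetric.

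For the guarded variants $\Gaffix$, the same scheme works with two adjustments: first, the witness type must additionally include a guard atom ensuring $\sigma$-liveness, which is realised in $\str{B}$ because the corresponding guarded existential formula $\exists \vartupley \, (\alpha \land \bigwedge \Phi)$ is itself in $\Gaffix[\sigma]$ and preserved by $\equiv_{\Gaffix[\sigma]}$; second, forth and back produce the whole $\sigma$-live tuple $\domelemtuplef$ rather than a single element, which is again accommodated by $\omega$-saturation applied to the finite type in the several variables of $\vartupley$. The main obstacle throughout is bookkeeping of the indices: ensuring that each intermediate formula used in the saturation argument really lies in the right level $\Laffix(n)$ or $\Gaffix(n)$ so that one may legitimately apply Boolean closure and the (possibly guarded) quantifier step — particularly delicate for $\Lsuffix$, where one must reset the level to match the length of the current affix before any quantification.
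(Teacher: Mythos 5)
Your proposal is correct and follows essentially the same route as the paper: the forward direction by induction on formula structure using \ref{bisim:forth}/\ref{bisim:back} at the quantifier step, and the converse by taking $\bisimZ$ to be the set of pairs of tuples with the same $\logicL[\sigma]$-theory and realising the relevant type via $\omega$-saturation after checking finite satisfiability through $\exists$-prefixed finite conjunctions. The only difference is that you spell out the guarded case (guard atoms forcing $\sigma$-liveness of the witness tuple, saturation in several variables), which the paper dismisses as ``exactly the same way''; this is a welcome addition rather than a divergence.
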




A logic $\logicL_2$ is \emph{at least as expressive as} a logic $\logicL_1$ (written $\logicL_1 \preceq \logicL_2$) if for all $\varphi \in \logicL_1$ there is a $\psi \in \logicL_2$ such that~$\varphi \equiv \psi$.
We write $\logicL_1 \approx \logicL_2$ iff $\logicL_1 \preceq \logicL_2$ and $\logicL_2 \preceq \logicL_1$.
In case $\logicL_1 \not\preceq \logicL_2$ and $\logicL_2 \not\preceq \logicL_1$ we call $\logicL_1$ and $\logicL_2$  incomparable.
Lastly, $\logicL_1 \prec \logicL_2$ denotes that $\logicL_2$ is \emph{strictly more expressive} than $\logicL_1$, \ie $\logicL_1 \preceq \logicL_2$ and $\logicL_1 \not \approx \logicL_2$.
Note that, by definition, all the considered fragments $\logicL$ satisfy $\logicL \preceq \Linfix$ and $\logicL \prec \FO$ (every such $\logicL$ is decidable).
Moreover, $\Gaffix \prec \Laffix$ is a consequence of $\forall{\varx_1}\forall{\varx_2} \relsymbolR(\varx_1, \varx_2)$ not being $\GF[\{ \relsymbolR \}]$-definable (which is well-known and follows from the fact that $\GF$ has the tree-model property).
Our results are as follows:

\begin{theorem}\label{thm:expressive-power-full-characterisation}
  (a) $\Lprefix \prec \Lsuffix {\approx} \Linfix \prec \FO$,
  (b) $\Gaffix \prec \Laffix$ for all affixes,
  (c) $\Gsuffix \prec \Ginfix$,
  (d) $\Gprefix \prec \Ginfix$, and
  (e) otherwise the logics are incomparable.
\end{theorem}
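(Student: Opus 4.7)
I split the statement into three groups of claims, handled separately: positive containments $\logicL_1 \preceq \logicL_2$, strict separations $\logicL_1 \not\preceq \logicL_2$, and the remaining incomparabilities (each obtained by pairing two strict separations). All positive containments will be established by syntactic translation; every negative claim will rely on \cref{lemma:linking-bisimilarity-and-equivalence}: to prove $\logicL_1 \not\preceq \logicL_2$ it suffices to exhibit a sentence $\varphi_0 \in \logicL_2$ together with a pair of $\omega$-saturated structures $\str{A}, \str{B}$ linked by an $\logicL_1$-bisimulation but disagreeing on $\varphi_0$.

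For the positive containments of (a)-(d), in particular $\Lprefix \preceq \Lsuffix$ and the non-trivial direction $\Linfix \preceq \Lsuffix$ of $\Lsuffix \approx \Linfix$, my strategy is a normalization that relocates every atom $\alpha(\vartuplexfromto{i}{j})$ to its natural depth $j$. The key tool is the family of equivalences
\[
  Q\varx_k(\varphi \star \psi) \equiv \varphi \star Q\varx_k\psi \qquad (Q \in \{\exists,\forall\},\ \star \in \{\land,\lor\},\ \varx_k \text{ not free in } \varphi),
\]
supplemented with distributivity laws to propagate atoms through implications. Once at depth $j$, the tuple $(\varx_i,\ldots,\varx_j)$ is simultaneously a prefix, a suffix, and an infix of $(\varx_1,\ldots,\varx_j)$, so the normalized formula is a $\Lsuffix$-formula. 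I verify that this translation preserves guards in the guarded variants and, where necessary, adjusts scoping to bypass the no-requantification restriction of $\Lsuffix$.

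Strict separations are witnessed as follows. For $\Lprefix \prec \Lsuffix$, I use $\forall\varx_1(\student(\varx_1) \to \exists\varx_2(\prof(\varx_2) \land \admires(\varx_1,\varx_2)))$ together with a pair of structures where the admired element's profession differs; a $\Lprefix$-bisimulation exists because $\Lprefix$-pair-types never encode unary information about the second argument. For $\Gaffix \prec \Laffix$, I use $\forall\varx_1\varx_2\,\relsymbolR(\varx_1,\varx_2)$, classically not $\GF$-definable via the tree-model argument. For $\Gsuffix, \Gprefix \prec \Ginfix$, I use $\exists\varx_1\varx_2\varx_3(\relsymbolR(\varx_1,\varx_2,\varx_3) \land \relsymbolS(\varx_2))$, whose middle atom $\relsymbolS(\varx_2)$ is infix but neither prefix nor suffix of $(\varx_1,\varx_2,\varx_3)$, and for which the guarded setting provides no binary guard enabling push-out. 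The separation $\Linfix \prec \FO$ is immediate from decidability. Incomparabilities of (e) then follow by pairing witnesses on both sides; e.g., $\Lprefix$ and $\Gsuffix$ are separated by $\forall\varx_1\varx_2\,\relsymbolR(\varx_1,\varx_2) \in \Lprefix \setminus \Gsuffix$ and by $\exists\varx_1\varx_2(\relsymbolR(\varx_1,\varx_2) \land \relsymbolA(\varx_2)) \in \Gsuffix \setminus \Lprefix$.

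The main obstacle is the verification of the bisimulation conditions for the chosen witnesses, especially in the guarded setting: the clauses \ref{bisim:gforth} and \ref{bisim:gback} quantify over $\sigma$-live tuples whose selected affix matches that of the current pair, demanding a case analysis over tuple lengths and over the particular affix notion attached to each logic. A secondary subtlety is ensuring that the push-out translation preserves guardedness and respects the non-requantification restriction of $\Lsuffix$, which requires an inductive argument on formula structure together with explicit rewritings of intermediate subformulas.
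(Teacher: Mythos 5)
Your overall architecture mirrors the paper's: the positive directions are obtained by relocating atoms to the quantifier depth at which they become suffixes (the paper does this via DNF conversion of maximally nested subformulae), and every negative direction is reduced via \cref{lemma:linking-bisimilarity-and-equivalence} to exhibiting two (finite, hence $\omega$-saturated) bisimilar structures that disagree on a witness sentence. However, there is a concrete gap in your coverage of part (e): the incomparability of $\Gprefix$ and $\Gsuffix$ requires a sentence of $\Gprefix$ that is \emph{not} definable in $\Gsuffix$, and none of your witnesses qualifies. The sentence $\forall\varx_1\forall\varx_2\,\relsymbolR(\varx_1,\varx_2)$ is not in (nor definable in) $\Gprefix$, and both $\exists\varx_1\varx_2\varx_3(\relsymbolR(\varx_1,\varx_2,\varx_3)\land\relsymbolS(\varx_2))$ and $\exists\varx_1\varx_2(\relsymbolR(\varx_1,\varx_2)\land\relsymbolA(\varx_2))$ lie outside $\Gprefix$ as well, since $(\varx_2)$ is not a prefix. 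This missing direction is genuinely delicate: the natural binary candidate $\forall\varx_1\varx_2(\relsymbolR(\varx_1,\varx_2)\to\relsymbolA(\varx_1))$ \emph{is} $\Gsuffix$-definable, namely as $\forall\varx_1(\relsymbolA(\varx_1)\lor\neg\exists\varx_2\,\relsymbolR(\varx_1,\varx_2))$, so one is forced to arity at least three. The paper takes $\varphi_{\textit{pre}}:=\forall\varx_1\varx_2\varx_3(\relsymbolR(\varx_1,\varx_2,\varx_3)\to\relsymbolS(\varx_1,\varx_2))$ and devotes one of its only two non-trivial bisimulation constructions (a six-element $\relsymbolR$-cycle against an eight-element variant with one extra triple) precisely to showing that $\varphi_{\textit{pre}}$ is not $\Gsuffix$-definable; your proposal contains no analogue of this step, and it does not follow by ``pairing'' your existing witnesses.

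Two smaller remarks. First, your claim that the atom-relocating translation ``preserves guards in the guarded variants'' cannot be right as stated: if it were, it would yield $\Ginfix\preceq\Gsuffix$, contradicting part (c). The only non-trivial positive containments are the unguarded ones ($\Lprefix\preceq\Lsuffix$ and $\Linfix\preceq\Lsuffix$); all guarded inclusions needed for (b)--(d) are purely syntactic, so no guard-preserving rewriting should be attempted. Second, your remaining witnesses are plausible (your $\Lprefix$-separating example is essentially the paper's ``no student admires every professor'' sentence, and the non-definability arguments go through on small finite structures), but they are less economical than the paper's: the paper reduces the entire theorem to exactly two bisimulation lemmas ($\varphi_{\textit{suf}}\notin\Lprefix$ and $\varphi_{\textit{pre}}\notin\Gsuffix$) plus the folklore fact that $\forall\varx_1\forall\varx_2\,\relsymbolR(\varx_1,\varx_2)$ is not $\GF$-definable, whereas your plan requires verifying bisimilarity for several distinct pairs of structures.
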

\begin{proof}
  Full proofs are in~\cref{appendix:thm:expressive-power-full-characterisation}.
  The relationships between different logics with separating examples (omitting trivial examples due to guardedness) are depicted below. 
  With $\varphi_{\textit{pre}}$ we denote the formula $\forall{\varx_1\varx_2\varx_3}\ \relsymbolR(\varx_1\varx_2\varx_3) \to \relsymbolS(\varx_1\varx_2)$, while $\varphi_{\textit{suf}}$ denotes $\forall{\varx_1\varx_2\varx_3}\ \relsymbolR(\varx_1\varx_2\varx_3) \to \relsymbolT(\varx_2\varx_3)$.
  Solid (resp. dashed) arrows from $\logicL_1$ to $\logicL_2$ denote that $\logicL_1 \prec \logicL_2$ holds (resp. that the logics are incomparable).
  
  \vspace{-1em} 
  \begin{figure}[H]
    \centering
    \begin{tikzpicture}[transform shape]
        \node[] at (-0.2, 0) {$\Lsuffix \approx \Linfix$};
        \node[] at (3.2, 2) {$\Lprefix$};
        \node[] at (7, 2) {$\Gprefix$};
        \node[] at (3, 0) {$\Ginfix$};
        \node[] at (7, 0) {$\Gsuffix$};

        \path[->] (2.75,1.75) edge node[rotate=30, yshift=6] {$\varphi_{\textit{suf}}$} (0.25,0.25); 
        \path[->] (2.5, 0) edge node[yshift=6] {} (0.6, 0); 
        \path[->] (6.2, 2.1) edge node[yshift=6] {} (3.6, 2.1); 
        \path[->] (6.5, 0) edge node[yshift=6] {$\varphi_{\textit{pre}}$} (3.5, 0); 

        \path[dashed] (7.2,0.3)  edge node[rotate=90, yshift=-8] {$\varphi_{\textit{suf}}$ / $\varphi_{\textit{pre}}$} (7.2, 1.7); 

        \path[dashed] (6.9,0.25)  edge node[rotate=-23.5, yshift=4] {\; \; \; \; \; \; \; $ \varphi_{\textit{pre}}$}  (3.4, 1.75); 

        \path[dashed] (3, 0.5)  edge node[rotate=90, yshift=-8] {$ \varphi_{\textit{suf}}$}  (3, 1.8); 

        \path[->] (6.8, 1.8)  edge node[rotate=25, yshift=5.5, xshift=-1] {\; \; \; \; \; \; $\varphi_{\textit{suf}}$} (3.0, 0.25); 
    \end{tikzpicture}
  \end{figure}
  \vspace{-1em} 
  \noindent The equi-expressivity of $\Linfix$ and $\Lsuffix$ is an easy observation: we turn each maximally nested subformulae into DNF and push the atoms violating the definition of $\Laffix$ outside.
\end{proof}

\noindent Knowing the relative expressive power of our logics, we would like to characterise them as bisimulation-invariant fragments of $\FO$, as was done with other decidable logics, see \eg~\cite{Gradel014}.
\noindent Given a formula $\varphi(\vartuplex) \in \logicL$, we say that it is $\bisimto_{\logicL}$-invariant iff for all 
$\str{A},\domelemtuplea \bisimto_{\logicL}^{\sig(\varphi)} \str{B}, \domelemtupleb$
 we have $\str{A} \models \varphi(\domelemtuplea) \Leftrightarrow \str{B} \models \varphi(\domelemtupleb)$. $\logicL$ is $\bisimto_{\logicL}$-invariant iff all its formulae are $\bisimto_{\logicL}$-invariant. We will next show that $\Laffix$ (resp. $\Gaffix$) are exactly the $\bisimto_{\Laffix}$- (resp. $\bisimto_{\Gaffix}$-) invariant fragments of~$\FO$. 
This confirms that our notion of bisimulation is the right~one.

\begin{theorem}\label{thm:van-benthem}
 Let $\logicL \in \{ \Laffix, \Gaffix \}$ and let $\varphi(\vartuplex)$ be a $\bisimto_\logicL$-invariant $\FO$ formula. 
 Then there exists a formula $\psi(\vartuplex)$ in $\logicL$ which is equivalent with $\varphi(\vartuplex)$.
\end{theorem}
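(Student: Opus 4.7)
The plan is to follow the classical van Benthem-style argument, using Lemma~\ref{lemma:linking-bisimilarity-and-equivalence} as the bridge between $\equiv_{\logicL}$ and $\bisimto_{\logicL}$ on $\omega$-saturated structures. Let $\sigma := \sig(\varphi)$ and set
\[
T(\varphi) := \{\, \psi(\vartuplex) \in \logicL[\sigma] : \varphi \models \psi \,\}.
\]
Since every logic in $\{\Laffix, \Gaffix\}$ is closed under finite conjunctions of formulae sharing the same free-variable tuple, it suffices to show $T(\varphi) \models \varphi$: compactness then yields a finite $T_0 \subseteq T(\varphi)$ with $\bigwedge T_0 \models \varphi$, and $\psi(\vartuplex) := \bigwedge T_0$ is the sought $\logicL$-formula equivalent to $\varphi$.

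To prove $T(\varphi) \models \varphi$, I would fix an arbitrary model $(\str{B}, \domelemtupleb) \models T(\varphi)$ and write $\Delta := \{\, \theta(\vartuplex) \in \logicL[\sigma] : \str{B} \models \theta(\domelemtupleb) \,\}$ for its $\logicL[\sigma]$-theory. The first thing to check is that $\Delta \cup \{\varphi(\vartuplex)\}$ is finitely satisfiable: otherwise compactness would supply $\theta_1, \ldots, \theta_k \in \Delta$ with $\varphi \models \neg(\theta_1 \land \cdots \land \theta_k)$; since $\logicL[\sigma]$ is closed under Boolean connectives, that negated conjunction lies in $T(\varphi)$ and would thus be satisfied by $(\str{B}, \domelemtupleb)$, contradicting $\theta_i \in \Delta$. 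So $\Delta \cup \{\varphi\}$ has a model $(\str{A}, \domelemtuplea)$; by construction $(\str{A}, \domelemtuplea) \equiv_{\logicL[\sigma]} (\str{B}, \domelemtupleb)$ and $(\str{A}, \domelemtuplea) \models \varphi$.

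I would then pass to $\omega$-saturated elementary extensions $(\str{A}^*, \domelemtuplea)$ and $(\str{B}^*, \domelemtupleb)$; as these preserve all $\FO$-properties, we still have $(\str{A}^*, \domelemtuplea) \models \varphi$ and $(\str{A}^*, \domelemtuplea) \equiv_{\logicL[\sigma]} (\str{B}^*, \domelemtupleb)$. Lemma~\ref{lemma:linking-bisimilarity-and-equivalence} supplies an $\logicL[\sigma]$-bisimulation between the two saturated pointed structures, the $\bisimto_{\logicL}$-invariance hypothesis transfers $\varphi$ to $(\str{B}^*, \domelemtupleb)$, and a final appeal to elementary equivalence drops it back to $(\str{B}, \domelemtupleb)$, as required.

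The main obstacle is already hidden inside the saturated direction of Lemma~\ref{lemma:linking-bisimilarity-and-equivalence}: one must build an explicit back-and-forth relation whose \ref{bisim:forth}/\ref{bisim:back} clauses (or the guarded versions \ref{bisim:gforth}/\ref{bisim:gback}) are verified by extracting witnesses from the saturation hypothesis while respecting the affix constraints and, in the guarded case, the $\sigma$-liveness requirement on the extending tuples. Once that lemma is in place, the outer argument above is the standard compactness/saturation skeleton, with only the routine side-check that each fragment in $\{\Laffix,\Gaffix\}$ is closed under $\land$ and $\neg$ over the same free-variable affix.
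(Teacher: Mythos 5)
Your proposal is correct and follows essentially the same route as the paper: take the set of $\logicL$-consequences of $\varphi$, reduce to showing it entails $\varphi$ via compactness, build an $\equiv_{\logicL[\sigma]}$-equivalent pair of pointed models (one satisfying $\varphi$), pass to $\omega$-saturated elementary extensions, invoke Lemma~\ref{lemma:linking-bisimilarity-and-equivalence} to obtain a bisimulation, and transfer $\varphi$ by invariance. The only differences are cosmetic (the roles of $\str{A}$ and $\str{B}$ are swapped, and you spell out the finite-satisfiability step that the paper leaves implicit).
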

\begin{proof}
  We follow standard proof methods, see \eg \cite[Thm. 3.2]{BaranyBC18}.
  Suppose $\varphi(\varx_1,\dots,\varx_n) \in \FO$ is $\bisimto_\logicL$-invariant, where $\vartuplex = (\varx_1,\dots,\varx_n)$ enumerates precisely the set of free variables of~$\varphi$. The case when $\varphi$ is unsatisfiable $\varphi$ is trivial, thus assume otherwise.
  Consider the set 
  $\Gamma := \{\chi(\vartuplexfromto{i}{j}) \in \logicL \mid \varphi(\vartuplex) \models \chi(\vartuplexfromto{i}{j})\}$.
  Clearly $\varphi(\vartuplex) \models \Gamma$.
  Since $\FO$ is compact, it suffices to show that $\Gamma \models \varphi(\vartuplex)$. Let $\str{A}$ be a structure and $ \domelemtuplea \in A^n$ so that $\str{A} \models \chi(\domelemtupleafromto{i}{j})$, for every $\chi(\vartuplexfromto{i}{j}) \in \Gamma$.
  Next, consider the set $\Sigma := \{\chi(\vartuplexfromto{i}{j}) \in \logicL \mid \str{A} \models \chi(\domelemtupleafromto{i}{j})\}$.
  Again, by compactness of $\FO$ we can show that $\Sigma \cup \{\varphi\}$ is consistent. 
  Take a structure $\str{B}$ and $\domelemtupleb \in B^n$ so that $\str{B} \models \varphi(\domelemtupleb)$ and $\str{B} \models \chi(\domelemtuplebfromto{i}{j})$, for every $\chi(\vartuplexfromto{i}{j}) \in \Sigma$. 
  Observe that by construction $\str{A}, \domelemtuplea \equiv_{\logicL} \str{B}, \domelemtupleb$. 
  Replacing $\str{A}$ and $\str{B}$ with their $\omega$-saturated elementary extensions $\hat{\str{A}}$ and $\hat{\str{B}}$, we know by~\cref{lemma:linking-bisimilarity-and-equivalence} that $\hat{\str{A}}, \domelemtuplea \bisimto_{\logicL} \hat{\str{B}}, \domelemtupleb$.
  Chasing the~resulting~diagram~we~get~$\str{A} \models \varphi(\domelemtuplea)$.
\end{proof}


\section{Craig Interpolation}\label{sec:interpolation}

Recall that the Craig Interpolation Property (CIP) for a logic $\logicL$ states that if $\varphi(\vartuplex) \models \psi(\vartuplex)$ holds (with $\varphi$ and $\psi$ having the same free variables), then there is a $\chi(\vartuplex) \in \logicL[\sig(\varphi) \cap \sig(\psi)]$ (an $\logicL$-\emph{interpolant}) such that $\varphi(\vartuplex) \models \chi(\vartuplex)$ and $\chi(\vartuplex) \models \psi(\vartuplex)$~hold. We always assume that both $\varphi$ and $\psi$ are satisfiable, otherwise we can take $\bot$ as a trivial interpolant.


To reason about interpolants we employ the notion of \emph{joint consistency}~\cite{Robinson1960}. 
We say that $\logicL$-formulae $\varphi(\vartuplexfromto{1}{n})$ and $\psi(\vartuplexfromto{1}{n})$ (having exactly $\vartuplexfromto{1}{n}$ free) are \emph{jointly-$\logicL[\tau]$-consistent} (or just \emph{jointly consistent} in case $\tau := \sig(\varphi) \cap \sig(\psi)$ and $\logicL$ are known from the context), if there are structures $\str{A} \models \varphi(\domelemtuplea)$ and $\str{B} \models \psi(\domelemtupleb)$ such that $\str{A}, \domelemtuplea \bisimto_{\logicL[\tau]} \str{B}, \domelemtupleb$.
The next lemma is classic and links joint consistency and interpolation: see~\cref{appendix:lemma:joint-consistency-vs-interpolation}.
\begin{lemma}\label{lemma:joint-consistency-vs-interpolation}
  Let $\logicL \subseteq \FO$, and let $\varphi(\vartuplex), \psi(\vartuplex) \in \logicL$ with $\tau := \sig(\varphi) \cap \sig(\psi)$. 
  Then $\varphi(\vartuplex)$ and $\neg \psi(\vartuplex)$ are jointly consistent iff there is no $\logicL[\tau]$-interpolant for $\varphi(\vartuplex) \models \psi(\vartuplex)$.
\end{lemma}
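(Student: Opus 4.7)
The plan is to follow the classical Robinson-style joint-consistency proof, splitting into two contrapositive arguments. The $(\Leftarrow)$ direction uses the soundness half of~\cref{lemma:linking-bisimilarity-and-equivalence} directly; the $(\Rightarrow)$ direction mimics the compactness-plus-saturation strategy already employed in the proof of~\cref{thm:van-benthem}.

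For $(\Leftarrow)$ I would argue contrapositively: suppose that $\chi(\vartuplex) \in \logicL[\tau]$ is an interpolant. Any hypothetical witnesses $\str{A}, \domelemtuplea \models \varphi$ and $\str{B}, \domelemtupleb \models \neg\psi$ would then satisfy $\str{A} \models \chi(\domelemtuplea)$ and $\str{B} \models \neg\chi(\domelemtupleb)$, so an $\logicL[\tau]$-bisimulation linking them would, via~\cref{lemma:linking-bisimilarity-and-equivalence}, force them to agree on $\chi$, a contradiction. For the harder $(\Rightarrow)$ direction, assume no interpolant exists, and set $\Gamma := \{\chi(\vartuplex) \in \logicL[\tau] \mid \varphi \models \chi\}$, restricted to formulas with exactly $\vartuplex$ free. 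Then $\Gamma \cup \{\neg\psi\}$ must be satisfiable: otherwise, $\FO$-compactness would yield a finite $\Gamma_0 \subseteq \Gamma$ with $\bigwedge \Gamma_0 \models \psi$, and closure of $\logicL$ under finite conjunctions on a common free-variable tuple would turn $\bigwedge \Gamma_0$ into an $\logicL[\tau]$-interpolant. Let $\str{B}, \domelemtupleb$ realise $\Gamma \cup \{\neg\psi\}$, and define $\Sigma := \{\chi(\vartuplexfromto{i}{j}) \in \logicL[\tau] \mid \vartuplexfromto{i}{j} \text{ is an affix of } \vartuplex,\; \str{B} \models \chi(\domelemtuplebfromto{i}{j})\}$. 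A symmetric argument shows $\Sigma \cup \{\varphi\}$ is consistent: any finite inconsistent $\Sigma_0$ would push $\neg \bigwedge \Sigma_0$ into $\Gamma$ via the closure of $\logicL[\tau]$ under Booleans, contradicting $\str{B} \models \Sigma_0$. Picking $\str{A}, \domelemtuplea$ realising $\Sigma \cup \{\varphi\}$ then gives $\str{A}, \domelemtuplea \equiv_{\logicL[\tau]} \str{B}, \domelemtupleb$ on every affix of $\vartuplex$; passing to $\omega$-saturated elementary extensions and invoking~\cref{lemma:linking-bisimilarity-and-equivalence} supplies the required bisimulation and hence joint consistency.

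The only delicate point, and the step I expect to require care, is ensuring that the Boolean combinations formed during the compactness step remain inside $\logicL[\tau]$. For $\Lprefix$, $\Linfix$ and their guarded variants this is immediate from the syntax rules, which explicitly permit conjunctions and negations to mix formulas of different levels. For $\Lsuffix$ and $\Gsuffix$ one has to observe that every $\chi$ appearing in $\Sigma_0$ or $\Gamma_0$ already lives at the single level $n = |\vartuplex|$, so the per-level Boolean closure of $\Lsuffix(n)$ suffices, and no cross-level combination is ever attempted.
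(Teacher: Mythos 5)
Your proposal is correct and follows essentially the same route as the paper's own proof: the easy direction transfers the interpolant across the bisimulation via \cref{lemma:linking-bisimilarity-and-equivalence}, and the hard direction runs the standard Robinson-style argument with the set of $\logicL[\tau]$-consequences of $\varphi$, two applications of compactness, and $\omega$-saturated elementary extensions to upgrade $\logicL[\tau]$-equivalence to bisimilarity. The paper phrases the hard direction contrapositively (assuming non-joint-consistency and extracting the interpolant at the end) and is less explicit than you are about the affix/free-variable bookkeeping and about closure of $\logicL[\tau]$ under the Boolean combinations formed in the compactness step, but these are presentational differences only.
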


We simplify the reasoning about ordered logics by employing suitable normal forms.
We say that a formula $\varphi(\vartuplex)$ from\footnote{To avoid notational glitter we will be a bit careless when dealing with formulae with free-variables.} $\Lprefix$ (resp. from $\Gaffix$) is in \emph{normal form} if it has the shape:
    \begin{description} \itemsep0em
    \item[\desclabel{(NForm-$\Lprefix$)}{NForm-Lprefix}] 
    $
    \relsymbolH(\vartuplex) \; \land \; 
    \bigwedge_{i=1}^{s} \forall{\vartuplexfromto{1}{\ell_i}} (\alpha_i \to \exists \varx_{\ell_i{+}1} \beta_i) \; \land \;
    \bigwedge_{j=1}^{t} \forall{\vartuplexfromto{1}{\ell_j}} (\alpha_j \to \forall \varx_{\ell_j{+}1} \beta_j)
    $,
    \item[\desclabel{(NForm-$\Gaffix$)}{NForm-Gaffix}] 
    $
    \relsymbolH(\vartuplex) \; \land \;
    \bigwedge_{i=1}^{s} \forall{\vartuplexfromto{1}{\ell_i}} (\relsymbolR_i( \vartuplexfromto{1}{\ell_i}) \to 
    \exists \vartuplexfromto{\ell_i{+}1}{\ell_i {+} k_i} (\relsymbolS_i(\vartuplexfromto{1}{\ell_i {+} k_i}) \land \psi_i(\vartuplexfromto{1}{\ell_i {+} k_i}))) \; \land \;
    \bigwedge_{j=1}^{t} \forall{\vartuplexfromto{1}{\ell_j}} (\relsymbolR_j( \vartuplexfromto{1}{\ell_j}) {\to} \psi_j(\vartuplexfromto{1}{\ell_j}) {\to} \forall{\vartuplexfromto{\ell_j{+}1}{\ell_j'}} (\relsymbolT_j(\vartuplexfromto{1}{\ell_j'}) \to \psi_j'(\vartuplexfromto{1}{\ell_j'}))),
    $
    \end{description}
    where $\alpha_i, \alpha_j, \beta_i$ and $\beta_j$ are quantifier-free $\Lprefix$-formulae, $\relsymbolR_i, \relsymbolR_j$, $\relsymbolT_j$ and $\relsymbolH$ are relational symbols, and $\psi_i, \psi_j$ and $\psi_j'$ are $\Gaffix$-formulae.
    The symbol $\relsymbolH$ is called the \emph{head} of $\varphi(\vartuplex)$.
    We will often speak about \emph{existential/universal requirements} of a formula, meaning the appropriate subformulae with the maximal quantifier prefix $\forall^*\exists^*$ and $\forall^*$. 
    In aforementioned normal forms we implicitly allow various parameters to be zero, \eg in subformulae of the form $\forall{\vartuplexfromto{1}{\ell_i}} (\alpha_i \to \exists \varx_{\ell_i{+}1} \beta_i)$ we allow $\ell_i = 0$, and we agree that the result is $\exists \varx_{\ell_i{+}1} \beta_i$.

    The following lemma can be shown using standard renaming techniques, in complete analogy to~\cite{Bednarczyk21,Jaakkola21}, with a minor (but technically tedious) modification in the case of~$\Gsuffix$, see~\cref{appendix:sec:normal-forms}.
    \begin{lemma}\label{lemma:normal-forms}
    Let $\logicL \in \{\Lprefix, \Gaffix\}$, and take $\varphi(\vartuplex), \psi(\vartuplex) \in \logicL$. Suppose that there are models $\str{A}$ and $\str{B}$ such that $\str{A} \models \varphi(\domelemtuplea)$, $\str{B} \models \psi(\domelemtupleb)$ and $\str{A},\domelemtuplea \bisimto_{\logicL[\tau]}
 \str{B},\domelemtupleb$, where $\tau = \sig(\varphi) \cap \sig(\psi)$. Then there exist formulae $\varphi'(\vartuplex),\psi'(\vartuplex) \in \logicL$ in normal form and extensions $\str{A}'$ and $\str{B}'$ of $\str{A}$ and $\str{B}$ respectively, such that (i) $\varphi'(\vartuplex)$ and $\psi'(\vartuplex)$ have the same head $\relsymbolH$, (ii) $\sig(\varphi') \cap \sig(\psi') = \tau \cup \{ \relsymbolH\}$, (iii) $\varphi'(\vartuplex) \models \varphi(\vartuplex)$ and $\psi'(\vartuplex) \models \psi(\vartuplex)$, and (iii) $(\str{A}',\domelemtuplea) \bisimto_{\logicL[\tau \cup \{\relsymbolH\}]} (\str{B}',\domelemtupleb)$ holds.
    \end{lemma}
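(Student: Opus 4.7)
The plan is to adapt a Scott-style renaming procedure. Starting from $\varphi(\vartuplex)$, I would recursively pick a maximally nested subformula of the form $Q\varx_{k+1}\,\gamma$ (for $Q \in \{\exists,\forall\}$) occurring under the scope of $\vartuplexfromto{1}{k}$, introduce a fresh relation symbol $\relsymbolP$ of arity $k$ (guarded by the natural guard atom in the $\Gaffix$-case), and replace the subformula by $\relsymbolP(\vartuplexfromto{1}{k})$ while conjoining the defining implication $\forall\vartuplexfromto{1}{k}\,(\relsymbolP(\vartuplexfromto{1}{k}) \to Q\varx_{k+1}\gamma)$ to the top level. Iterating from the innermost subformulae outwards produces a conjunction of requirements of the required shape; in the $\Gaffix$-case, the guard-preservation of the translation means every universal/existential block inherits a natural guard from the original formula, which is why the normal form lists $\relsymbolR_i, \relsymbolT_j$ alongside the defining atoms. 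I would then collect all the resulting clauses into the form \ref{NForm-Lprefix} or \ref{NForm-Gaffix}, finally conjoining a top-level head predicate $\relsymbolH(\vartuplex)$ of arity $|\vartuplex|$ (fresh, not in $\sig(\varphi) \cup \sig(\psi)$). The same construction is applied to $\psi(\vartuplex)$ using distinct fresh symbols but the \emph{same} head symbol $\relsymbolH$, which directly gives clauses (i) and (ii).

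To obtain the extended models $\str{A}'$ and $\str{B}'$, I would interpret each freshly introduced auxiliary predicate $\relsymbolP$ by the set of tuples on which the corresponding subformula actually holds in $\str{A}$ (respectively in $\str{B}$), and interpret $\relsymbolH$ by the singleton $\{\domelemtuplea\}$ (respectively $\{\domelemtupleb\}$). With these interpretations the defining implications hold by construction, so $\str{A}' \models \varphi'(\domelemtuplea)$ and $\str{B}' \models \psi'(\domelemtupleb)$. Moreover, since $\varphi' \models \varphi$ and $\psi' \models \psi$ by erasing the fresh symbols and re-expanding the definitions, clause (iii) is immediate.

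For clause (iv), I need to verify that the existing bisimulation $\bisimZ$ between $\str{A},\domelemtuplea$ and $\str{B},\domelemtupleb$ remains a bisimulation after extending the vocabulary by $\tau \cup \{\relsymbolH\}$. The auxiliary predicates from $\varphi'$ are outside the common signature, and likewise those from $\psi'$, so atomic harmony on $\tau$ is preserved. For the head $\relsymbolH$, I interpret it as $\{\domelemtuplea\}$ in $\str{A}'$ and $\{\domelemtupleb\}$ in $\str{B}'$; since the bisimulation identifies $(\domelemtuplea,\domelemtupleb)$ and fresh symbols do not force any new forth/back obligations, atomic harmony is preserved on the anchor and vacuously elsewhere. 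Here I rely on the fact that for a pair $(\domelemtuplec,\domelemtupled) \in \bisimZ$ distinct from $(\domelemtuplea,\domelemtupleb)$, one can ensure $\domelemtuplec \neq \domelemtuplea$ iff $\domelemtupled \neq \domelemtupleb$ by shrinking $\bisimZ$; if this is not automatic from atomic harmony via a distinguishing relation, I would augment both structures and $\tau$ with a unary marker, which is harmless.

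The main obstacle will be the $\Gsuffix$ case, which the authors flag as technically tedious. In $\Gsuffix$ variables cannot be requantified and the argument tuples of atoms must be suffixes, so introducing a fresh predicate $\relsymbolP(\vartuplexfromto{1}{k})$ immediately violates the suffix discipline for $k \geq 1$. The fix I would attempt is to push the renaming through blocks of quantifiers \emph{together with their guards}: rather than naming a subformula by an intermediate-arity predicate, I name the whole guarded block by a predicate of the same arity and argument pattern as its guard, piggybacking on the guard-atom's suffix shape. The delicate point is that existential and universal requirements then need to be split and normalised so that every new predicate introduced has a valid suffix position; this is where the extra index $\ell_j'$ in \ref{NForm-Gaffix} comes from and why the more liberal shape with intermediate $\psi_j(\vartuplexfromto{1}{\ell_j})$ is needed in the universal clause. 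The detailed bookkeeping is routine but verbose and I would relegate it to the appendix as the authors do.
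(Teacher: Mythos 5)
There is a genuine gap in your renaming step, and it is precisely the point the paper's proof is organised around. You replace a subformula $\chi := Q\varx_{k+1}\,\gamma$ by a fresh atom $\relsymbolP(\vartuplexfromto{1}{k})$ and conjoin only the one-directional definition $\forall\vartuplexfromto{1}{k}\,(\relsymbolP(\vartuplexfromto{1}{k}) \to \chi)$, and then claim $\varphi' \models \varphi$ ``by re-expanding the definitions''. This fails whenever the occurrence of $\chi$ being replaced is negative, which cannot be excluded since all of these logics are closed under $\neg$ and $\to$: already for $\varphi := \neg\exists\varx_1\,\relsymbolA(\varx_1)$ your $\varphi'$ is $\neg\relsymbolP \land (\relsymbolP \to \exists\varx_1\relsymbolA(\varx_1))$, which has models violating $\varphi$. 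The paper avoids this by asserting, in effect, the \emph{biconditional} $\relsymbolR \leftrightarrow \chi$: for $\Gaffix$ it adds, besides $\relsymbolR \to \chi$, a second clause $\forall\dots(\relsymbolS'(\dots) \to (\neg\relsymbolR(\dots) \to \forall\dots(\relsymbolS(\dots)\to\neg\lambda(\dots))))$ relativised to the guard $\relsymbolS'$ of the innermost enclosing guarded block --- this relativisation is what makes the converse direction expressible inside $\GF$ at all, and is the real technical content of the lemma, which your sketch of the $\Gsuffix$ case gestures at but does not supply. For closed quantified subformulae the paper instead \emph{evaluates} $\chi$ in the given model $\str{A}$ and substitutes $\top$ or $\bot$, conjoining $\chi$ or $\neg\chi$ accordingly; this is why the lemma is stated relative to given models $\str{A},\str{B}$ and why $\varphi'$ genuinely depends on $\str{A}$, whereas your construction of $\varphi'$ never consults the models. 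Your scheme could be repaired by first passing to negation normal form so that every renamed occurrence is positive, or by adding the guarded converse implications as the paper does, but as written clause (iii) does not hold.

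Two smaller remarks. First, your interpretation of the fresh predicates as the exact truth sets in $\str{A}$ and $\str{B}$ does give $\str{A}'\models\varphi'(\domelemtuplea)$ and keeps the fresh symbols out of the common signature, so the bisimulation part of your argument for the non-head symbols is fine and matches the paper. Second, your worry about atomic harmony for $\relsymbolH$ on pairs $(\domelemtuplec,\domelemtupled)\in\bisimZ$ with $\domelemtuplec=\domelemtuplea$ but $\domelemtupled\neq\domelemtupleb$ is legitimate (the paper passes over it silently), but ``shrinking $\bisimZ$'' or adding a marker to $\tau$ changes the common signature and would need more care than one sentence; this is a secondary issue compared with the polarity gap above.
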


    The following lemma is a useful tool when dealing with interpolation, allowing us to switch our attention to a certain satisfiability problem. 
    Its proof is routine, consult~\cref{appendix:lemma:aux-lemma-for-interpolation}.

\begin{lemma}\label{lemma:aux-lemma-for-interpolation}
  Let $\logicL \in \{ \Lprefix, \Gaffix \}$. 
  If for any jointly-consistent $\logicL$-formulae $\varphi(\vartuplex), \psi(\vartuplex)$ in normal forms from~\cref{lemma:normal-forms} with the same head, there is $\str{U} \models \varphi(\vartuplex) \land \psi(\vartuplex)$, then $\logicL$ has CIP.\@
\end{lemma}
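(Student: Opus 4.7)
The plan is to prove the contrapositive of CIP: if $\logicL$ lacks CIP for some pair, then the hypothesis of the lemma must fail. The key ingredients will be Lemma~\ref{lemma:joint-consistency-vs-interpolation}, which rephrases the absence of an interpolant as joint consistency, and Lemma~\ref{lemma:normal-forms}, which lets us strengthen any jointly consistent pair into a normal-form pair sharing a common head symbol. With both in hand, the lemma reduces to a short chain of invocations.

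Concretely, I would start with arbitrary $\varphi(\vartuplex), \psi(\vartuplex) \in \logicL$ such that $\varphi \models \psi$ admits no $\logicL[\tau]$-interpolant, for $\tau := \sig(\varphi) \cap \sig(\psi)$. Closure of $\logicL$ under negation places $\neg \psi$ in $\logicL$, so Lemma~\ref{lemma:joint-consistency-vs-interpolation} applied to $\varphi$ and $\psi$ produces witnesses $\str{A} \models \varphi(\domelemtuplea)$, $\str{B} \models \neg \psi(\domelemtupleb)$, and a bisimulation $\str{A}, \domelemtuplea \bisimto_{\logicL[\tau]} \str{B}, \domelemtupleb$. I then feed the pair $(\varphi, \neg \psi)$ into Lemma~\ref{lemma:normal-forms} to obtain normal-form strengthenings $\varphi'(\vartuplex), \chi'(\vartuplex) \in \logicL$ with a fresh shared head $\relsymbolH$, extensions $\str{A}', \str{B}'$ still witnessing satisfiability, and a bisimulation $\str{A}', \domelemtuplea \bisimto_{\logicL[\tau \cup \{\relsymbolH\}]} \str{B}', \domelemtupleb$. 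This exhibits $\varphi'$ and $\chi'$ as jointly $\logicL[\sig(\varphi') \cap \sig(\chi')]$-consistent normal-form formulae with the same head, so the hypothesis of the lemma delivers some structure $\str{U}$ and tuple $\domelemtupleu$ with $\str{U} \models \varphi'(\domelemtupleu) \land \chi'(\domelemtupleu)$. Finally, since $\varphi' \models \varphi$ and $\chi' \models \neg \psi$, the same model refutes $\varphi \models \psi$, the desired contradiction.

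The only step that requires real care is matching the signatures at the interface between the two lemmas: the normal-form construction enlarges the shared vocabulary by exactly one fresh symbol $\relsymbolH$, and one has to notice that this is harmless because the hypothesis of the present lemma is stated relative to $\sig(\varphi') \cap \sig(\chi')$, which is precisely $\tau \cup \{\relsymbolH\}$. No genuine obstacle arises here—the substantive model-theoretic work has been quarantined inside the merging hypothesis, and the lemma itself functions as a clean reduction telling us that, to prove CIP for $\Lprefix$ or $\Gaffix$, it suffices to solve the single combinatorial task of collapsing two jointly consistent normal-form formulae with a common head into one shared model.
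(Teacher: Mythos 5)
Your proposal is correct and follows essentially the same route as the paper's own proof: argue by contraposition, use Lemma~\ref{lemma:joint-consistency-vs-interpolation} to turn the absence of an interpolant into joint consistency of $\varphi$ and $\neg\psi$, pass to normal forms with a shared head via Lemma~\ref{lemma:normal-forms}, invoke the merging hypothesis to get a common model, and observe that this model satisfies $\varphi \land \neg\psi$, contradicting $\varphi \models \psi$. Your extra remark about the shared signature being $\tau \cup \{\relsymbolH\}$ is a correct and welcome clarification of a point the paper leaves implicit.
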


\subsection{Disproving CIP in $\Linfix$ and $\Lsuffix$}

We start our investigation of CIP for $\Laffix$ and $\Gaffix$ by further discrediting the infamous work of Purdy~\cite{Purdy02}. 
We prove, in stark contrast to~\cite[Thm. 14]{Purdy02}, that $\Lsuffix$ does not have~CIP. 

\begin{theorem}\label{thm:FL-and-FF-doesnt-have-CIP}
  $\Linfix$ and $\Lsuffix$ do not have CIP. More specifically, there are $\Lsuffix^2$-sentences $\varphi, \psi$ with $\varphi \models \psi$ but without any $\Linfix[\sig(\varphi) \cap \sig(\psi)]$-interpolant.
\end{theorem}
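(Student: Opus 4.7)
The plan is to apply \cref{lemma:joint-consistency-vs-interpolation} in its contrapositive form: exhibit $\Lsuffix^{2}$-sentences $\varphi$ and $\psi$ with $\varphi\models\psi$ such that, writing $\tau:=\sig(\varphi)\cap\sig(\psi)$, the sentences $\varphi$ and $\neg\psi$ are jointly $\Linfix[\tau]$-consistent. Since every $\Lsuffix^{2}$-sentence is syntactically also an $\Linfix$-sentence, the absence of an $\Linfix[\tau]$-interpolant automatically refutes CIP for $\Lsuffix$ as well, so a single example suffices for both logics.

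The first step is to choose $\tau$ containing a binary symbol $\relsymbolR$ together with a handful of unary predicates, and to introduce private unary auxiliaries $\relsymbolA\in\sig(\varphi)\setminus\tau$ and $\relsymbolB\in\sig(\psi)\setminus\tau$. The critical structural feature to exploit is that, regardless of how many variables are used, every infix of $(\varx_{1},\dots,\varx_{n})$ is a contiguous subsequence preserving the index order, so $\Linfix[\tau]$ can never form an inverse binary atom, an equality atom, or a self-loop atom. The sentences $\varphi$ and $\psi$ would then use $\relsymbolA$ and $\relsymbolB$ as ``witnesses'' of an otherwise backward-looking $\tau$-property via purely forward-only axioms of the shape $\relsymbolA(\varx_{1})\to\exists\varx_{2}(\relsymbolR(\varx_{1},\varx_{2})\wedge\ldots)$, tuned so that any common model of $\varphi$ and $\psi$ would require $\relsymbolA$ and $\relsymbolB$ to jointly cover an incompatible pattern; the entailment $\varphi\models\psi$ is then verified by a direct case analysis on the axioms and the interpretation of $\relsymbolR$ and the common unary predicates.

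The second step is to construct two $(\sig(\varphi)\cup\sig(\psi))$-structures $\str{A}\models\varphi$ and $\str{B}\models\neg\psi$ whose $\tau$-reducts are $\bisimto_{\Linfix[\tau]}$. By \cref{lemma:linking-bisimilarity-and-equivalence} it suffices to pass to $\omega$-saturated elementary extensions and exhibit an explicit bisimulation $\bisimZ$ satisfying the \ref{bisim:forth} and \ref{bisim:back} clauses of \cref{def:Laffix-bisimulations}. A natural attempt is to take both $\str{A}$ and $\str{B}$ as sufficiently homogeneous infinite gadgets (for instance, disjoint unions of $\relsymbolR$-labelled trees with matched unary colourings) whose $\tau$-reducts agree on every finite forward pattern yet differ only in the backward property that $\relsymbolA$ and $\relsymbolB$ secretly record; the canonical pairing of matching finite forward patterns should then constitute the required bisimulation.

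The main obstacle is precisely this last step: because the forth/back clauses may extend the play along \emph{any} affix of the currently visible tuple, the matching has to persist through arbitrarily deep re-quantifications, not only along the outermost quantifier. I would address this by arranging that each finite forward pattern occurs with sufficient multiplicity on both sides, so that every possible affix-extension on one side is mirrored on the other and the candidate $\bisimZ$ is genuinely closed. Once the models and the bisimulation are in place, \cref{lemma:joint-consistency-vs-interpolation} immediately yields the absence of any $\Linfix[\tau]$-interpolant, completing the refutation of CIP for both $\Linfix$ and $\Lsuffix$.
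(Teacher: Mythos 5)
Your high-level skeleton matches the paper's: reduce to joint consistency via \cref{lemma:joint-consistency-vs-interpolation}, exhibit two models of $\varphi$ and $\neg\psi$ that are $\bisimto_{\Linfix[\tau]}$-bisimilar, and note that an $\Lsuffix^2$ counterexample kills CIP for $\Linfix$ as well. However, the proposal stops exactly where the real work begins, and the one concrete hint you do give points in the wrong direction. The paper's counterexample hinges on \emph{transitivity}: $\varphi$ forces $(\relsymbolR\circ\relsymbolR)\subseteq\relsymbolP_1\times\relsymbolP_2\subseteq\relsymbolR$ via private unary predicates $\relsymbolP_1,\relsymbolP_2$, hence $\relsymbolR$ is transitive, while $\psi$ uses private $\relsymbolQ_1,\relsymbolQ_2$ to assert a concrete failure of transitivity; the inexpressible pattern being exploited is the atom $\relsymbolR(\varx_1,\varx_3)$, which \emph{skips} the middle variable and so is not an infix of $\vartuplexfromto{1}{3}$. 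Your list of forbidden patterns (inverse atoms, equality, self-loops) does not include this skipping pattern, and none of the patterns you list would by itself yield an entailment $\varphi\models\psi$ of the required shape. You never write down $\varphi$ and $\psi$, never verify the entailment, and never construct the bisimulation, which is the core creative content of the proof.

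On the model-construction side, your plan of ``sufficiently homogeneous infinite gadgets'' plus $\omega$-saturation is both heavier and less convincing than what is actually needed: the paper uses two three-element structures whose $\{\relsymbolR\}$-reducts differ only in one composed edge ($(\domelema,\domelemc)\in\relsymbolR^{\str{A}}$ versus $(1,3)\notin\relsymbolR^{\str{B}}$), and gives an explicit positional strategy defining $\bisimZ$ by hand. The difficulty you correctly flag --- that \ref{bisim:forth}/\ref{bisim:back} restart from an arbitrary affix of the current tuple --- is resolved there by a finite case analysis on the last visited pair, not by a multiplicity argument, and your proposed fix (``sufficient multiplicity of forward patterns'') is not a proof. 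Finally, the theorem demands the sentences themselves lie in $\Lsuffix^2$; the paper obtains this by first building the counterexample in $\Linfix^3$ and then shifting quantifiers and introducing auxiliary unary predicates $\relsymbolA,\relsymbolB$ to eliminate the third variable --- a step your proposal does not address. As it stands the proposal is a plausible strategy outline, not a proof.
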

\begin{proof}
  Consider the following $\Linfix^3$-sentences $\varphi$ and $\psi$, presented respectively below:
  \begin{align*}
  \hspace{-1.5em}\forall{\vartuplexfromto{1}{3}}[(\relsymbolR(\varx_1,\varx_2) \land \relsymbolR(\varx_2,\varx_3)) \to (\relsymbolP_1(\varx_1) \land \relsymbolP_2(\varx_3))] \; \land \; \forall{\varx_1}\forall{\varx_2}[(\relsymbolP_1(\varx_1) \land \relsymbolP_2(\varx_2)) \to \relsymbolR(\varx_1,\varx_2)]\\
  \hspace{-1em}\exists{\vartuplexfromto{1}{3}} [\relsymbolR(\varx_1,\varx_2) \land \relsymbolR(\varx_2,\varx_3) \land \relsymbolQ_1(\varx_1) \land \relsymbolQ_2(\varx_3)] \; \land \;  \forall{\varx_1}\forall{\varx_2}[(\relsymbolQ_1(\varx_1) \land \relsymbolQ_2(\varx_2)) \to \neg\relsymbolR(\varx_1,\varx_2)],
  \end{align*}
  with $\str{A} \models \varphi$ iff $(\relsymbolR^{\str{A}} \circ \relsymbolR^{\str{A}}) \subseteq \relsymbolP_1^{\str{A}} \times \relsymbolP_2^{\str{A}} $ and $\relsymbolP_1^{\str{A}} \times \relsymbolP_2^{\str{A}} \subseteq \relsymbolR^{\str{A}}$,
  and $\str{B} \models \psi$ iff $\relsymbolQ_1^{\str{B}} \times \relsymbolQ_2^{\str{B}} \subseteq B^2 \setminus \relsymbolR^{\str{B}} $ and there are $(\domelema, \domelemb), (\domelemb, \domelemc) \in \relsymbolR^{\str{B}}$ with $\domelema \in \relsymbolQ_1^{\str{B}}$ and $\domelemc \in \relsymbolQ_2^{\str{B}}$.
  Observe that $\varphi \models \neg \psi$, since $\varphi$ entails transitivity of $\relsymbolR$, while $\psi$ entails that this is not the case. But $\varphi$ and $\psi$ are jointly-$\Linfix[\{ \relsymbolR \}]$-consistent (it suffices to take $\str{A}$ and $\str{B}$ depicted below, \cf~\cref{appendix:thm:FL-and-FF-doesnt-have-CIP}). 
  \begin{figure}[H]
    \centering
    \begin{tikzpicture}[transform shape]
        \node[] at (1.5, 0.5) {$\str{A} :=$};
        \draw (2, 0) node[ptrond, label=center:\small{$\domelema$}] (a) {};
        \draw (4, 0) node[ptrond, label=center:\small{$\domelemb$}] (b) {};
        \draw (6, 0) node[ptrond, label=center:\small{$\domelemc$}] (c) {};

        \path[->] (a) edge node[yshift=-6] {\( \relsymbolR \)} (b); 
        \path[->] (b) edge node[yshift=-6] {\( \relsymbolR \)} (c); 
        \path[->] (a) edge[bend left=30] node[yshift=6] {\( \relsymbolR \)} (c); 
        \path[->] (c) edge[loop above] node[] {\( \relsymbolR \)} (c); 

        \node[] at (7.5, 0.5) {$\str{B} :=$};
        \draw (8, 0) node[ptrond, label=center:\small{$1$}] (one) {};
        \draw (10, 0) node[ptrond, label=center:\small{$2$}] (two) {};
        \draw (12, 0) node[ptrond, label=center:\small{$3$}] (three) {};

        \path[->] (one) edge node[yshift=-6] {\( \relsymbolR \)} (two); 
        \path[->] (two) edge node[yshift=-6] {\( \relsymbolR \)} (three); 
        \path[->] (three) edge[loop above] node[] {\( \relsymbolR \)} (three); 

        \draw (2, -0.5) node[label=center:\small{$\relsymbolP_1$}] (below_a) {};
        \draw (4, -0.5) node[label=center:\small{$\relsymbolP_1$}] (below_b) {};
        \draw (6, -0.5) node[label=center:\small{$\relsymbolP_1, \relsymbolP_2$}] (below_c) {};
        \draw (8, -0.5) node[label=center:\small{$\relsymbolQ_1$}] (below_one) {};
        \draw (12, -0.5) node[label=center:\small{$\relsymbolQ_2$}] (below_three) {};
    \end{tikzpicture}
  \end{figure}
  Hence, by~\cref{lemma:joint-consistency-vs-interpolation} there is no $\Linfix[\{ \relsymbolR \}]$-interpolant for $\varphi \models \neg \psi$.
  By slightly obfuscating $\varphi$ and $\psi$ (\ie by shifting quantifiers and introducing a unary symbol to get rid of the third variable) we can make our counterexample formulae to be in $\Lsuffix^2$; consult~\cref{appendix:thm:two-variable-fragments-do-not-have-CIP}.
\end{proof}

\subsection{Restoring CIP in $\Lprefix$}\label{subsec:CIP-for-Lprefix}

Even though $\Lsuffix$ and $\Linfix$ fail to have CIP, it turns out that $\Lprefix$ still has it. 
To prove interpolation for $\Lprefix$, we are going to construct a model for two jointly consistent $\Lprefix$ formulae $\varphi(\vartuplex)$ and $\psi(\vartuplex)$. 
However, rather than modifying existing amalgamation-based arguments used, for instance, in~\cite{marx1995algebraic,HooglandMO99,BaranyBC18}, we will construct our model explicitly by specifying prefix-types for tuples.  
We feel that our approach, which is more direct in nature than other arguments found in the literature, could potentially be useful also in other contexts.

Take $\varphi(\vartuplex)$ and $\psi(\vartuplex)$ in normal form~\ref{NForm-Lprefix} satisfying the premise of~\cref{lemma:aux-lemma-for-interpolation}.
Hence, there are structures $\str{A}$ and $\str{B}$ and tuples $\domelemtuplea \in A^k$ and $\domelemtupleb \in B^k$ such that that $\str{A} \models \varphi(\domelemtuplea), \str{B} \models \psi(\domelemtupleb)$ and $(\str{A},\domelemtuplea) \bisimto_{\Lprefix[\sigma]} (\str{B}, \domelemtupleb)$, where $\sigma := \sig(\varphi) \cap \sig(\psi)$. 
Let $\tau := \sig(\varphi) \cup \sig(\psi)$. 

We will define a sequence of $\tau$-structures $\str{U}_1 \leq \ldots \leq \str{U}_M := \str{U}$, where $M = \max \{\arity(\relsymbolR) \mid \relsymbolR \in \tau\}$, satisfying the following inductive assumptions: 
(i) $U_i = \N$, 
(ii) the interpretation of symbols from $\tau$ of arity $> i$ is empty, and 
(iii) for any $i$-tuple $\domelemtuplec$ in $\str{U}_i$ there are $i$-tuples $\domelemtupled$ in $\str{A}$ and $\domelemtuplee$ in $\str{B}$ so that $(\str{A}, \domelemtupled) \bisimto_{\Lprefix[\sigma]} (\str{B}, \domelemtuplee)$ and $\tp{\Lprefix[\tau]}{\str{U}_i}{\domelemtuplec} = \tp{\Lprefix[\sig(\varphi)]}{\str{A}}{\domelemtupled} \cup \tp{\Lprefix[\sig(\psi)]}{\str{B}}{\domelemtuplee}$ hold. 
The last condition guarantees that no tuple $\domelemtuplec$ of $\str{U}_i$ violates the universal requirements of $\varphi$ and $\psi$, since otherwise the corresponding tuple would violate them, contradicting modelhood of $\str{A}$ or~$\str{B}$.\footnote{We note that this claim no longer holds if $\Lprefix$ is replaced by either $\Lsuffix$ or $\Laffix$, which is why the forthcoming construction does not work for these logics.}

For the inductive base, take $\str{U}_1$ with domain $\N$ and empty interpretation of symbols from~$\tau$.
Our goal is to realise each $(\sig(\varphi),1)$-prefix-type, which is realised in $\str{A}$ and $\str{B}$, in~$\str{U}_1$ in a careful way, suggested by the inductive assumption.
Let $t$ be a $(\sig(\varphi),1)$-prefix type realised in $\str{A}$ and let $\domelemc \in A$ be some element witnessing it.
Since $(\str{A},\domelemtuplea) \bisimto_{\Lprefix[\sigma]} (\str{B}, \domelemtupleb)$ holds,
there exists an element $\domelemd$ of $\str{B}$ so that 
$\tp{\Lprefix[\sigma]}{\str{A}}{\domelemc} = \tp{\Lprefix[\sigma]}{\str{B}}{\domelemd}$. 
Now we will assign the $(\tau,1)$-prefix-type 
$\tp{\Lprefix[\sig(\varphi)]}{\str{A}}{\domelemc} \cup \tp{\Lprefix[\sig(\psi)]}{\str{B}}{\domelemd}$ to some element $\domeleme$ of~$\str{U}_1$, for which we have not yet assigned a $(\tau,1)$-prefix-type. 
For the remaining elements of $\str{U}_1$, having no $(\tau,1)$-prefix-type assigned, we assign any of the previously realised types.

Suppose then that $\str{U}_k$ is defined. To define $\str{U}_{k{+}1}$, we will start by providing witnesses for the existential requirements of $\varphi$ and $\psi$;
since the two cases are rather analogous, we will restrict our attention to the former case. 
Consider an existential requirement $\varphi_i^\exists$ of $\varphi(\vartuplex)$ and let $\domelemtuplee \in U_k^k$ be a $k$-tuple so that $\str{U} \models \alpha_i(\domelemtuplee)$. 
By construction, there exists a tuple $\domelemtuplea \in A^k$ witnessing $\tp{\Lprefix[\sig(\varphi)]}{\str{U}_k}{\domelemtuplee} = \tp{\Lprefix[\sig(\varphi)]}{\str{A}}{\domelemtuplea}$. 
Since $\str{A} \models \varphi_i^\exists$, there exists an element $\domelemc\in A$ so that $\str{A} \models \beta_i(\domelemtuplea,\domelemc)$. 
Due to $(\str{A},\domelemtuplea) \bisimto_{\Lprefix[\sigma]} (\str{B}, \domelemtupleb)$, we know that there exists an element $\domelemd \in B$ satisfying $\tp{\Lprefix[\sigma]}{\str{A}}{\domelemtuplea,\domelemc} = \tp{\Lprefix[\sigma]}{\str{B}}{\domelemtupleb,\domelemd}$. 
Now we pick an element $f\in U$ for which we have not yet assigned a $(\tau,k{+}1)$-prefix-type for the tuple $(\domelemtuplee,\domelemf)$ (recall that the domain of our model is $\N$, so such an element always exists). 
We assign the following $(\tau,k{+}1)$-prefix-type to the tuple $(\domelemtuplee,\domelemf)$: $\tp{\Lprefix[\sig(\varphi)]}{\str{A}}{(\domelemtuplea,\domelemc)} \cup \tp{\Lprefix[\sig(\psi)]}{\str{B}}{(\domelemtupleb,\domelemd)}$.
Note that the assigned $(\tau,k{+}1)$-prefix-type is consistent with the $(\tau,k)$-prefix-type that we assigned to $\domelemtuplee$. Having assigned witnesses to relevant existential requirements of $\varphi(\vartuplex)$ and $\psi(\vartuplex)$, there are still $(k{+}1)$-tuples of elements of $\str{U}$ for which we have not yet assigned a $(\tau,k{+}1)$-prefix-type. For those tuples we will assign any $(\tau,k{+}1)$-prefix-type that we have already
assigned to some other $(k{+}1)$-tuple of elements of $\str{U}_{k{+}1}$.
This completes the construction of $\str{U}_{k{+}1}$. 

By construction, it is clear that there exists a tuple $\domelemtuplee$ of elements of $\str{U}$ so that $\domelemtuplee \in \relsymbolH^{\str{U}}$; in particular, $\str{U} \models \varphi(\domelemtuplee) \land \psi(\domelemtuplee)$ holds. 
Thus, by~\cref{lemma:aux-lemma-for-interpolation} we conclude:
\begin{theorem}
$\Lprefix$ enjoys the Craig Interpolation Property.
\end{theorem}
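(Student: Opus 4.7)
The plan is to invoke~\cref{lemma:aux-lemma-for-interpolation}, which reduces CIP for $\Lprefix$ to the following model-existence task: given any jointly consistent $\Lprefix$-formulas $\varphi(\vartuplex),\psi(\vartuplex)$ in normal form~\ref{NForm-Lprefix} sharing a head predicate $\relsymbolH$, produce a single model $\str{U}$ satisfying both. The witnesses to joint consistency supply structures $\str{A}\models\varphi(\domelemtuplea)$ and $\str{B}\models\psi(\domelemtupleb)$ together with a bisimulation $(\str{A},\domelemtuplea)\bisimto_{\Lprefix[\sigma]}(\str{B},\domelemtupleb)$ over $\sigma := \sig(\varphi)\cap\sig(\psi)$, which will serve as the ``database'' from which to synthesise prefix-types for tuples of $\str{U}$.

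I would build $\str{U}$ as a chain $\str{U}_1\leq\cdots\leq\str{U}_M$ with common domain $\N$, where $M$ is the maximum arity in $\tau := \sig(\varphi)\cup\sig(\psi)$ and $\str{U}_i$ interprets only symbols of arity at most $i$. The key invariant is that for every $i$-tuple $\domelemtuplec$ in $\str{U}_i$ there exist $i$-tuples $\domelemtupled$ in $\str{A}$ and $\domelemtuplee$ in $\str{B}$ with $(\str{A},\domelemtupled)\bisimto_{\Lprefix[\sigma]}(\str{B},\domelemtuplee)$ such that $\tp{\Lprefix[\tau]}{\str{U}_i}{\domelemtuplec} = \tp{\Lprefix[\sig(\varphi)]}{\str{A}}{\domelemtupled}\cup\tp{\Lprefix[\sig(\psi)]}{\str{B}}{\domelemtuplee}$. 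This invariant immediately rules out any violation of a universal requirement of $\varphi$ or $\psi$ in $\str{U}$, since such a violation would translate back to a violation already in $\str{A}$ or $\str{B}$, contradicting modelhood.

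For the inductive step $\str{U}_k\to\str{U}_{k+1}$, I would walk through every existential requirement of $\varphi$ (and symmetrically of $\psi$): for each tuple $\domelemtuplec$ of $\str{U}_k$ satisfying the antecedent, pull it back through the invariant to $\domelemtupled$ in $\str{A}$ and $\domelemtuplee$ in $\str{B}$, take the $\varphi$-witness $\domelemf\in A$ for the existential, use bisimilarity to obtain a matching $\domelemg\in B$, pick some $\domelemh\in\N$ for which no $(\tau,k{+}1)$-type has yet been declared on $(\domelemtuplec,\domelemh)$, and declare the type of $(\domelemtuplec,\domelemh)$ to be $\tp{\Lprefix[\sig(\varphi)]}{\str{A}}{(\domelemtupled,\domelemf)}\cup\tp{\Lprefix[\sig(\psi)]}{\str{B}}{(\domelemtuplee,\domelemg)}$. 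Leftover $(k{+}1)$-tuples receive an arbitrary already-realised $(\tau,k{+}1)$-type. Stopping at $i=M$ yields $\str{U}$, and the tuple tagged by $\relsymbolH$ in the very first stage will witness $\varphi\land\psi$.

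The main obstacle is consistency: does the newly declared $(\tau,k{+}1)$-type on $(\domelemtuplec,\domelemh)$ cohere with the $(\tau,k)$-type already fixed on its prefix $\domelemtuplec$? This is exactly where the prefix restriction pays off: in a $(\tau,k{+}1)$-prefix-type, every atom on fewer than $k{+}1$ variables uses a \emph{prefix} of $\vartuplexfromto{1}{k{+}1}$, and hence depends only on a prefix of $(\domelemtuplec,\domelemh)$---whose type was fixed earlier as the union of $\varphi$- and $\psi$-types of the corresponding prefixes of $(\domelemtupled,\domelemf)$ and $(\domelemtuplee,\domelemg)$, so the two assignments agree automatically. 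This compatibility argument collapses for $\Lsuffix$ and $\Linfix$, because there an atom of lower arity could involve the freshly added element and thereby constrain the already-fixed prefix-type, which explains why the same construction cannot rescue CIP for those logics (consistent with~\cref{thm:FL-and-FF-doesnt-have-CIP}).
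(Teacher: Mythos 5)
Your proposal is correct and follows essentially the same route as the paper: reduce to the joint-satisfiability task via~\cref{lemma:aux-lemma-for-interpolation}, then build $\str{U}$ arity-by-arity over domain $\N$ with exactly the same invariant (every tuple's $(\tau,i)$-prefix-type is the union of the types of a bisimilar pair from $\str{A}$ and $\str{B}$), providing existential witnesses by pulling back through the invariant and the bisimulation. Your closing remark on why prefix-types make the new type automatically cohere with the already-fixed type of its prefix --- and why this breaks for $\Lsuffix$ and $\Linfix$ --- is precisely the observation the paper records in a footnote.
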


\subsection{Restoring CIP in guarded logics}\label{subsec:CIP-for-Gaffix}

\newcommand\reijosorder{\mathrel{\ooalign{$\prec$\cr
  \hidewidth\raise0.03ex\hbox{$\cdot\mkern0.5mu$}\cr}}}

\newcommand{\lesslex}{<_{\textit{lex}}}

Finally we turn our attention to the logics $\Gaffix$ and present the main contribution of the paper.
It will be convenient to employ suitable tree-like models. Intuitively, HATs~\cite{Bednarczyk21} are just trees in which relations connect elements but only in a level-by-level ascending order; see~\cref{fig:hat}.~HAHs~are~collections~of~HATs.
\begin{definition}\label{def:hat}
    A structure $\str{T}$ is a \emph{higher-arity tree} (HAT) if its domain is a prefix-closed subset of sequences from $\N^*$ and for all relation symbols $\relsymbolR$ we have that $(d_1,\dots,d_k) = \domelemtupled \in \relsymbolR^{\str{T}}$ implies that for each index $i < k$ there exists a number $n_i$ such that $\domelemd_{i+1} = \domelemd_i \cdot n_i$, where $\domelemd_i \cdot n_i$ means that the element $n_i$ is appended to the sequence $\domelemd_i$.
    A structure $\str{H}$ is a \emph{higher-arity hedge} (HAH) if $\str{H}$ becomes a HAT if extended by a single element $\varepsilon$. 
\end{definition}


\begin{figure}[ht] 
    \centering
\begin{tikzpicture}[scale=0.6, transform shape]
  \draw (0,-2) node[ptrond, vert, label=center:\small{0}] (V00) {\phantom{0000}};
  \node[] at (2, -1.8) {\( \relsymbolT \)};

  \draw (-2,-4) node[ptrond, jaune, label=center:\small{00}] (V000) {\phantom{0000}};
  \draw (0,-4) node[ptrond, rouge, label=center:\small{01}] (V001) {\phantom{0000}};
  \draw (-2, -6) node[ptrond, vert, label=center:\small{000}] (V0000) {\phantom{0000}};
  \draw (0, -6) node[ptrond, vert, label=center:\small{010}] (V0010) {\phantom{0000}};

  \path[->] (V00) edge [red] node[xshift=-6] {\(\relsymbolR \)} (V000);
  \path[->] (V00) edge [red] node[xshift=6] {\(\relsymbolR \)} (V001);
  \path[->>,dashdotdotted] (V000) edge [blue] node[xshift=-4] {\(\relsymbolS \)} (V0000);
  \path[->>,dashdotdotted] (V001) edge [blue] node[xshift=4] {\(\relsymbolS \)} (V0010);

  \draw (4, 0) node[ptrond, vert, label=center:\small{$\varepsilon$}] (V0) {\phantom{0000}};
  \path[->>,dashdotdotted] (V0) edge [blue] node[xshift=-10] {\(\relsymbolS \)} (V00);

  \draw (4, -2) node[ptrond, jaune, label=center:\small{1}] (V01) {\phantom{0000}};
  \draw (4, -4) node[ptrond, vert, label=center:\small{10}] (V010) {\phantom{0000}};
  \draw (3, -6) node[ptrond, jaune, label=center:\small{100}] (V0100) {\phantom{0000}};
  \draw (5, -6) node[ptrond, rouge, label=center:\small{101}] (V0101) {\phantom{0000}};

  \path[->] (V0) edge [red] node[xshift=-5] {\(\relsymbolR \)} (V01);
  \path[->>,dashdotdotted] (V01) edge [bend right=30,blue] node[xshift=4] {\(\relsymbolS \)} (V010);
  \path[->] (V01) edge [bend left=30,red] node[xshift=4] {\(\relsymbolR \)} (V010);
  \path[->] (V010) edge [red] node[xshift=4] {\(\relsymbolR \)} (V0101);

  \draw (6, -2) node[ptrond, rouge, label=center:\small{2}] (V02) {\phantom{0000}};
  \path[->>,dashdotdotted] (V0) edge [blue] node[xshift=-10] {\(\relsymbolS \)} (V02);

  \draw (8, -2) node[ptrond, vert, label=center:\small{3}] (V03) {\phantom{0000}};

\scoped[on background layer] \filldraw [blue!10, line width=2.5em, line join=round,] (V01.center) -- (V010.center) -- (V0100.center) -- cycle;
\scoped[on background layer] \filldraw[red!10, line width=1.5em, line join=round,] (V0.center) -- (V00.center) -- (V000.center) -- cycle;

\end{tikzpicture}
\caption{An example HAT $\str{T}$. All relations go down lvl-by-lvl. The red area means~$(\varepsilon,0,00) \in \relsymbolT^{\str{T}}$.}\label{fig:hat}
\end{figure}
By a subtree of a HAT $\str{T}$ rooted at an element $\domelemd$ we mean a substructure of $\str{T}$ with the domain composed of all elements of the form $\domelemd \domelemw$ for a possibly empty word $\domelemw$.
Note that such a subtree is also a HAT after an obvious renaming.

We are going to employ the following lemma, stating that for our purposes we can focus on tree-like models only.
Its proof relies on the suitable notion of unravelling, see~\cref{appendix:sec:proof-unrav}.
\begin{lemma}\label{lemma:hat-vs-jointconsistency}
Let a logic $\logicL$ be any of $\Gaffix$, $\varphi, \psi$ be $\logicL$-formulae and $\sigma := \sig(\varphi) \cap \sig(\psi)$ containing the predicate $\relsymbolH$.
Assume that models $\str{A} \models \varphi(\domelemtuplea), \str{B} \models \psi(\domelemtupleb)$ are given such that $\domelemtuplea \in \relsymbolH^{\str{A}}, \domelemtupleb \in \relsymbolH^{\str{B}}$ and $(\str{A},\domelemtuplea) \bisimto_{\Gaffix[\sigma]} (\str{B}, \domelemtupleb)$ hold.
Then there are HAH models $\str{T}_{\str{A}} \models \varphi(\domelemtuplec), \str{T}_{\str{B}} \models \psi(\domelemtupled)$ satisfying  $\domelemtuplec \in \relsymbolH^{\str{T}_\str{A}}, \domelemtupled \in \relsymbolH^{\str{T}_\str{B}}$ and $(\str{T}_\str{A},\domelemtuplec) \bisimto_{\Gaffix[\sigma]} (\str{T}_\str{B}, \domelemtupled)$.
\end{lemma}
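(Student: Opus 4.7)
The plan is to construct $\str{T}_\str{A}$ (and, symmetrically, $\str{T}_\str{B}$) by an \emph{ordered guarded unravelling} of $\str{A}$ rooted at $\domelemtuplea$, and then transport the original bisimulation $\str{A},\domelemtuplea \bisimto_{\Gaffix[\sigma]} \str{B},\domelemtupleb$ to the two unravellings via composition. The key observation that makes this approach viable is that in a HAT, every live tuple is exactly an ascending path of nodes, and the gforth/gback conditions only ever extend a tuple by appending new coordinates after an affix — so we can always realise such extensions as fresh descendants of the deepest node of that affix.

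First, I would build $\str{T}_\str{A}$ stage-by-stage, together with a map $f : T_\str{A} \to A$ such that every live tuple $\domelemtupled$ of $\str{T}_\str{A}$ is sent by $f$ to a live tuple in $\str{A}$ realising the same atomic $\sig(\varphi)$-type. At the initial stage, place $\domelemtuplea = (\domelema_1, \ldots, \domelema_k)$ along an ascending path of the form $(c_1, c_1{\cdot}1, \ldots, c_1{\cdot}1^{k-1})$ in $\N^*$, set $f(c_1{\cdot}1^{i-1}) := \domelema_i$, and copy onto this path every atomic $\sig(\varphi)$-fact true of $\domelemtuplea$ in $\str{A}$. At a successor stage, for every live tuple $\domelemtupled$ already present in the partial structure, every affix $\domelemtupled_{i\ldots j}$, and every $\sig(\varphi)$-live tuple $\domelemtuplee$ in $\str{A}$ whose initial segment coincides with $f(\domelemtupled_{i\ldots j})$, introduce fresh descendants of $\domelemtupled_j$ realising the missing tail of $\domelemtuplee$, extend $f$ to send those fresh nodes to the matching elements of $\domelemtuplee$, and copy the $\sig(\varphi)$-atomic type of $\domelemtuplee$ onto the newly produced live tuple. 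Empty affixes are permitted, so this in particular inserts a representative of every $\sig(\varphi)$-live tuple of $\str{A}$.

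Three things then need checking. (i) $\str{T}_\str{A}$ is a HAH: by construction all relational atoms hold only on ascending paths, which is exactly the HAT condition, the domain is prefix-closed in $\N^*$, and adjoining a common root $\varepsilon$ yields a HAT. (ii) The relation $\bisimZ := \{(\domelemtupled, f(\domelemtupled)) \mid \domelemtupled \text{ live in } \str{T}_\str{A}\}$ is a $\Gaffix[\sig(\varphi)]$-bisimulation between $\str{T}_\str{A}, \domelemtuplec$ and $\str{A}, \domelemtuplea$: atomic harmony is by the type-preservation of $f$; gforth is immediate, because each live tuple in $\str{T}_\str{A}$ was itself copied from one in $\str{A}$; gback follows from the saturation built into the successor stages. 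Therefore, by the forward direction of~\cref{lemma:linking-bisimilarity-and-equivalence}, $\str{T}_\str{A} \models \varphi(\domelemtuplec)$, and $\domelemtuplec \in \relsymbolH^{\str{T}_\str{A}}$ since $\relsymbolH \in \sig(\varphi)$. (iii) Running the analogous construction from $\str{B}, \domelemtupleb$ produces $\str{T}_\str{B}, \domelemtupled$ with $\str{T}_\str{B} \models \psi(\domelemtupled)$ and $\domelemtupled \in \relsymbolH^{\str{T}_\str{B}}$. The desired $\Gaffix[\sigma]$-bisimulation between $\str{T}_\str{A}, \domelemtuplec$ and $\str{T}_\str{B}, \domelemtupled$ is obtained by composing $\str{T}_\str{A} \bisimto_{\Gaffix[\sigma]} \str{A} \bisimto_{\Gaffix[\sigma]} \str{B} \bisimto_{\Gaffix[\sigma]} \str{T}_\str{B}$, where the outer two come from restricting the unravelling bisimulations down to the common signature $\sigma$.

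The main obstacle, and the point that has to be handled with care, is verifying that the unravelling procedure is genuinely compatible with the rigid HAT geometry uniformly for all three flavours of affix demanded by $\Gprefix$, $\Gsuffix$, and $\Ginfix$. What salvages the construction is the elementary observation that any affix of an ascending path is itself an ascending path, and the gforth extension always appends new coordinates after the deepest node $\domelemtupled_j$ of the chosen affix; hence the new witnesses can be safely placed as fresh descendants of $\domelemtupled_j$ without clashing with previously added relational atoms. Using always-fresh descendants also sidesteps the subtler danger of two independent extension steps assigning incompatible atomic facts to the same tuple, since distinct extensions then live in disjoint subtrees.
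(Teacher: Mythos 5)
Your proposal is correct and follows essentially the same route as the paper: the paper likewise unravels each structure into a HAH (defining the domain directly as the set of paths in the ``forward Gaifman graph'' rather than building it stage-by-stage), proves that the unravelling is $\Gaffix[\sig(\varphi)]$-bisimilar to the original via the map sending each path-tuple back to its endpoints, transfers modelhood through \cref{lemma:linking-bisimilarity-and-equivalence}, and obtains the final $\Gaffix[\sigma]$-bisimulation by composing the two unravelling bisimulations (restricted to $\sigma$) with the given one. The only cosmetic difference is your inductive presentation of the unravelling versus the paper's one-shot definition.
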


Take $\varphi(\vartuplex)$, $\psi(\vartuplex)$ in form~\ref{NForm-Gaffix} with the same head, satisfying the premise of~\cref{lemma:aux-lemma-for-interpolation}.
We have structures $\str{A}$ and $\str{B}$ and tuples $\domelemtuplea \in A^k$ and $\domelemtupleb \in B^k$ so that $\str{A} \models \varphi(\domelemtuplea), \str{B} \models \psi(\domelemtupleb)$ and $(\str{A},\domelemtuplea) \bisimto_{\Gaffix[\sigma]} (\str{B}, \domelemtupleb)$,
where $\sigma := \sig(\varphi) \cap \sig(\psi)$ and~$\tau := \sig(\varphi) \cup \sig(\psi)$. 
Using~\cref{lemma:hat-vs-jointconsistency} we can assume that $\str{A}$ and $\str{B}$ are $\tau$-HAHs.
As done before, we aim at constructing a $\tau$-structure being a model of both $\varphi$ and $\psi$.
To do this, we will construct a growing sequence of $\tau$-HAHs $\str{U}_0 := \str{A} \leq \str{U}_1 \leq \ldots \leq \str{U}_n \leq \ldots$, whose limit $\str{U}$ will be a model of $\varphi\land\psi$. 
For simplicity, let us employ the following naming scheme.
A tuple $\domelemtupled$ from $\str{U}_{n}$ ($\domelemtupled \sqin U_n$) is called (a) $n$-\emph{fresh} if $\domelemtupled \sqin U_{n{-}1}$ and $n$-\emph{aged} otherwise, (b) \emph{maximal} if its not an affix of any different $\sigma$-live tuple.

A high-level idea of the construction of the sequence $\str{U}_i$, obfuscated by many challenging technical details, is as follows.
Starting from $\str{A}$ we inductively ``complete'' types of all $\sigma$-live tuples to become proper $\tau$-live tuples. 
This will help, if done carefully and in a bisimilarity-preserving way, the structure $\str{U}_i$ to fulfil the universal constraints of $\varphi$ and $\psi$, but may introduce tuples without witnesses for the existential constraints. 
Hence, after each ``completion'' phase, we will ``repair'' the obtained structure by ``copying'' some substructures of $\str{A}$ and $\str{B}$ and ``gluing'' them on existing witness-lacking tuples (providing~the~required~witnesses).

During the construction we will make sure that for every $n$-aged $\sig(\varphi)$-live (resp. $\sig(\psi)$-live) $k$-tuple in $\str{U}$, there exists a $k$-tuple in $\str{A}$ (resp. in~$\str{B}$)
having equal $(\sig(\varphi),k)$-affix-type (resp. $(\sig(\psi),k)$-affix-type).
This will be controlled by means of partial \emph{witness functions} $\wit_{\str{A}}: U_n \to A, \wit_{\str{B}}: U_n \to B$, intuitively pinpointing from where a tuple in $\str{U}$ originated from.
To make the construction work, the witness function will fulfil several technical criteria, that are listed below.
Conditions (\ref{enum:a}) and (\ref{enum:b}) speak about the compatibility of types between a tuple and its witness tuple; this guarantees that no tuple from $\str{U}_n$ violate the universal requirements of $\varphi$ and $\psi$.
Conditions (\ref{enum:c})--(\ref{enum:d}) guarantees the satisfaction of the existential requirements of $\varphi$ and $\psi$ (condition (\ref{enum:c}) takes care of ``local'' requirements while (\ref{enum:d}) handles the ``global'' ones). 
Formally, for every $n$-aged $\domelemtuplec$ from $\str{U}_{n}$ we have that:
\begin{enumerate}[(a)]
   \item \label{enum:a} If $\domelemtuplec$ is $\sigma$-live then both $\domelemtupled := \wit_{\str{A}}(\domelemtuplec)$ and $\domelemtuplee := \wit_{\str{B}}(\domelemtuplec)$ are defined,  $(\str{A},\domelemtupled) \bisimto_{\Gaffix[\sigma]} (\str{B}, \domelemtuplee) \; \text{holds and} \;
   \tp{\Gaffix[\tau]}{\str{U}_n}{\domelemtuplec}$ is equal to $\tp{\Gaffix[\sig(\varphi)]}{\str{A}}{\domelemtupled} \cup \tp{\Gaffix[\sig(\psi)]}{\str{B}}{\domelemtuplee}$.
    
   \item \label{enum:b} If $\domelemtuplec$ is not $\sigma$-live but is $\sig(\varphi)$-live (resp. $\sig(\psi)$-live), then $\domelemtupled := \wit_{\str{A}}(\domelemtuplec)$ (resp. $\domelemtupled := \wit_{\str{B}}(\domelemtuplec)$) is defined, and $\tp{\Gaffix[\sig(\varphi)]}{\str{U}_n}{\domelemtuplec}$ is equal to $\tp{\Gaffix[\sig(\varphi)]}{\str{A}}{\domelemtupled}$ (resp. $\tp{\Gaffix[\sig(\psi)]}{\str{B}}{\domelemtupled}$).
    
   \item \label{enum:c} if $\domelemtuplec$ is $\sig(\varphi)$-live (resp. $\sig(\psi)$-live), then for every existential requirement $\lambda := \relsymbolR_i( \vartuplexfromto{1}{\ell_i}) \to \exists \vartuplexfromto{\ell_i}{\ell_i {+} k_i} (\relsymbolS_i(\vartuplexfromto{1}{\ell_i {+} k_i}) \land \theta_i(\vartuplexfromto{1}{\ell_i {+} k_i}))$ from $\varphi$ (resp. from $\psi$) with $\domelemtuplec$ satisfying the premise of $\lambda$, there is a tuple $\domelemtupled$ in $\str{U}_{n}$ so that $\domelemtuplec\domelemtupled$ satisfies the conclusion of $\lambda$.

   \item \label{enum:d} For every $\sig(\varphi)$-live (resp. $\sig(\psi)$-live) tuple $\domelemtupled$ from $\str{A}$ (resp. from $\str{B}$) there is a tuple $\domelemtuplee$ in $\str{U}_1$ such that $\tp{\Gaffix[\sig(\varphi)]}{\str{U}_n}{\domelemtuplee} = \tp{\Gaffix[\sig(\varphi)]}{\str{A}}{\domelemtupled}$ (resp. $\tp{\Gaffix[\sig(\varphi)]}{\str{U}_n}{\domelemtuplee} = \tp{\Gaffix[\sig(\psi)]}{\str{B}}{\domelemtupled}$).

\noindent \hspace{-2em} While the following property is not necessary to guarantee that the limit $\str{U}$ is a model of
$\varphi \land \psi$, it plays an important technical role in the construction:
   \item \label{enum:e} If $\domelemtupled$ is an $n$-fresh $\sigma$-live tuple such that either $\wit_{\str{A}}(\domelemtupled)$ or $\wit_{\str{B}}(\domelemtupled)$ is undefined, then for every prefix $\domelemtupledfromto{1}{k}$ of $\domelemtupled$ that is contained in $U_{n{-}1}$, meaning that $\domelemtupledfromto{1}{k} \sqin U_{n{-}1}$, there exists an $n$-aged $\sigma$-live tuple $\domelemtuplec$ which contains $\domelemtupledfromto{1}{k}$ as its affix.
\end{enumerate}
Using conditions (\ref{enum:a})--(\ref{enum:d}) it follows that $\str{U} \models \varphi \land \psi$, allowing us to conclude (by~\cref{lemma:aux-lemma-for-interpolation}):
\begin{theorem}\label{thm:Gaffix-have-CIP}
$\Ginfix, \Gsuffix$ and $\Gprefix$ enjoy the Craig Interpolation Property.
\end{theorem}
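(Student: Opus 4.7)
The plan is to carry out the inductive construction of the sequence $\str{U}_0 \leq \str{U}_1 \leq \ldots$ sketched in the text, and then verify that its limit $\str{U}$ satisfies $\varphi \land \psi$, allowing us to conclude by~\cref{lemma:aux-lemma-for-interpolation}. For the base case I would set $\str{U}_0 := \str{A}$, viewed as a $\tau$-HAH with empty interpretation of symbols in $\sig(\psi)\setminus\sig(\varphi)$, declare $\wit_{\str{A}}$ to be the identity on $\sig(\varphi)$-live tuples, and use the bisimulation $(\str{A},\domelemtuplea)\bisimto_{\Gaffix[\sigma]}(\str{B},\domelemtupleb)$ together with its \ref{bisim:gforth}/\ref{bisim:gback} clauses to pick, for every $\sigma$-live tuple $\domelemtupled$ of $\str{A}$, a bisimilar companion $\domelemtuplee$ in $\str{B}$ which becomes $\wit_{\str{B}}(\domelemtupled)$. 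Conditions~(\ref{enum:a})--(\ref{enum:b}) hold by construction at $n=0$ and~(\ref{enum:e}) is vacuous.

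For the inductive step, assuming $\str{U}_n$ has been built satisfying (\ref{enum:a})--(\ref{enum:e}), I would construct $\str{U}_{n+1}$ in two successive phases. In the \emph{completion phase}, for every $\sigma$-live tuple $\domelemtuplec$ of $\str{U}_n$ whose $\tau$-type is not yet fully specified, I would use the bisimilar pair $\domelemtupled := \wit_{\str{A}}(\domelemtuplec)$ and $\domelemtuplee := \wit_{\str{B}}(\domelemtuplec)$ supplied by~(\ref{enum:a}) to stipulate the missing $\sig(\psi)\setminus\sig(\varphi)$-atoms on $\domelemtuplec$ exactly as they hold on $\domelemtuplee$ in $\str{B}$, and symmetrically for $\sig(\varphi)\setminus\sig(\psi)$-atoms. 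Completion may turn previously non-$\sigma$-live subtuples into $\sig(\psi)$-live or $\sig(\varphi)$-live tuples; I would extend the corresponding witness function on each such subtuple by applying the bisimulation clauses along the appropriate affix of $\domelemtuplee$ (resp.\ $\domelemtupled$), which secures~(\ref{enum:b}). In the subsequent \emph{repair phase}, for each $\sig(\varphi)$-live tuple of $\str{U}_n$ lacking a witness for an existential requirement of $\varphi$, I would graft a freshly renamed disjoint copy of the relevant subtree rooted at $\wit_{\str{A}}(\domelemtuplec)$ in the HAH $\str{A}$, gluing it along $\domelemtuplec$ and propagating $\wit_{\str{A}}$ by the identity from $\str{A}$; the $\sig(\psi)$-side is symmetric. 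For each new $\sigma$-live tuple introduced by a graft I would invoke the bisimulation to set the dual witness function and rerun completion on that tuple. Condition~(\ref{enum:d}) is secured by performing, at stage $n=1$ only, one additional grafting of the subtree rooted at each $\sig(\varphi)$-live tuple of $\str{A}$ (resp.\ each $\sig(\psi)$-live tuple of $\str{B}$) onto a fresh root of~$\str{U}_1$.

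Passing to the limit $\str{U} := \bigcup_n \str{U}_n$, invariants (\ref{enum:a})--(\ref{enum:b}) ensure, exactly as in the $\Lprefix$ case of Section~\ref{subsec:CIP-for-Lprefix}, that no $\tau$-live tuple of $\str{U}$ violates a universal requirement of $\varphi$ or $\psi$: any such violation would be transported by the witness maps back to $\str{A}$ or $\str{B}$, contradicting modelhood. Condition~(\ref{enum:c}) delivers all guarded existential witnesses and~(\ref{enum:d}) supplies the unguarded one-variable existentials $\exists \varx\,\varphi(\varx)$ permitted by the normal form~\ref{NForm-Gaffix}. Since $\str{A} \leq \str{U}$ preserves the head atom $\relsymbolH(\domelemtuplea)$, we obtain $\str{U} \models \varphi(\domelemtuplea) \land \psi(\domelemtuplea)$, and~\cref{lemma:aux-lemma-for-interpolation} concludes that $\Ginfix$, $\Gsuffix$ and $\Gprefix$ enjoy CIP.

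The main obstacle will be ensuring that invariant~(\ref{enum:e}) survives the repair phase: whenever grafting creates an $(n{+}1)$-fresh $\sigma$-live tuple for which one of $\wit_{\str{A}}, \wit_{\str{B}}$ is undefined, some proper affix of it must already sit inside an $(n{+}1)$-aged $\sigma$-live tuple, so that the next completion step can still reach it. Here the HAH property furnished by~\cref{lemma:hat-vs-jointconsistency} is crucial: since relations in a HAT ascend level by level, each grafted subtree meets $\str{U}_n$ along a unique affix, and the seam of a graft can be controlled explicitly. A secondary delicate point, specific to the universally quantified blocks nested inside existentials that~\ref{NForm-Gaffix} allows, is that completion must be performed for entire guarded subtuples rather than only their maximal instances, so that the \ref{bisim:gforth}/\ref{bisim:gback} clauses still propagate the bisimilarity invariant~(\ref{enum:a}) to every newly attached guarded position.
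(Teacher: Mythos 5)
Your proposal reproduces the overall architecture of the paper's proof (pass to HAHs, maintain witness functions $\wit_{\str{A}},\wit_{\str{B}}$ subject to invariants (\ref{enum:a})--(\ref{enum:e}), alternate completion and repair phases, take the limit), but it has a genuine gap at the single point that the paper itself identifies as the crux. In your base case you ``use the bisimulation \ldots{} to pick, for every $\sigma$-live tuple $\domelemtupled$ of $\str{A}$, a bisimilar companion $\domelemtuplee$ in $\str{B}$''. Picking these companions independently does not work: two distinct $\sigma$-live tuples $\domelemtuplec$ and $\domelemtupled$ of $\str{A}$ may overlap (share a common affix), and if $\wit_{\str{B}}(\domelemtuplec)$ and $\wit_{\str{B}}(\domelemtupled)$ disagree on the image of the shared subtuple, then the completion phase --- which stipulates the $\sig(\psi)\setminus\sigma$-atoms on $\domelemtuplec$ according to $\wit_{\str{B}}(\domelemtuplec)$ and on $\domelemtupled$ according to $\wit_{\str{B}}(\domelemtupled)$ --- can demand both $\relsymbolR(\domelemtuplecfromto{i}{j})$ and $\neg\relsymbolR(\domelemtuplecfromto{i}{j})$ for some $\relsymbolR\in\sig(\psi)$ and some affix lying in the intersection. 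Your proposal never argues that completion is conflict-free, and with independently chosen witnesses it generally is not.

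The paper closes this gap by restricting attention to \emph{maximal} $\sigma$-live tuples, equipping them with a well-founded linear order $\reijosorder$ derived from the lexicographic order on $\N^*$, proving the overlap property $(\heartsuit)$ (which crucially exploits the HAH shape: if two maximal live tuples share an element, the shared part is simultaneously a prefix of one and an affix of the other), and then defining $\wit_{\str{B}}$ by induction along $\reijosorder$ so that each new witness tuple is chosen, via the forth clause starting from the already-fixed image of the shared affix, to \emph{agree} with all previously assigned witnesses on the overlap. Only then does the conflict-freeness of completion follow, namely because $\wit_{\str{B}}$ takes equal values on equal affixes of distinct maximal live tuples. The same coherent-extension argument is what makes the inductive step go through for $\Gprefix$ and $\Ginfix$ (where, as you correctly note, condition (\ref{enum:e}) is needed to locate an aged live tuple containing the seam). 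You correctly flagged invariant (\ref{enum:e}) as delicate, but the coherence of the witness assignment across overlapping live tuples is the part the paper's own closing remark singles out as ``the only part of the construction which is really specific to $\Gaffix$'', and your proposal does not supply it.
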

We will now move on to the construction of $\str{U}$, described below.
We start from the crucial, aforementioned notions of \emph{completions} and \emph{repairs}.
Intuitively the \emph{completion} just ``completes a type of a tuple'' in a bisimulation-preserving way, taking all symbols of $\tau$ into account. 
\emph{Repair} simply ``plugs in'' certain subtrees from $\str{A}$ or $\str{B}$ into $\str{U}$, providing missing witnesses.
\begin{definition}[completion]
Let $(\str{T}, \domelemtupled)$ be a pointed $\tau$-HAH, where $\domelemtupled$ is $\sigma$-live with $\wit_{\str{A}}, \wit_{\str{B}}$ defined.
The $\domelemtupled$-\emph{completion} of $\str{T}$ is obtained from $\str{T}$ by redefining interpretation of symbols from $\tau$ in a min. way so that $\tp{\Gaffix[\tau]}{\str{T}}{\domelemtupled}$ equals $\tp{\Gaffix[\sig(\varphi)]}{\str{A}}{\wit_{\str{A}}(\domelemtupled)} \cup \tp{\Gaffix[\sig(\psi)]}{\str{B}}{\wit_{\str{B}}(\domelemtupled)}$.
\end{definition}

\begin{definition}[repair]
Let $(\str{T}, \domelemtuplec)$ be a pointed $\tau$-HAH with only $\domelemtupled := \wit_{\str{A}}(\domelemtuplec)$ defined, where $\domelemtuplec$ is $\sigma$-live in $\str{T}$.
Suppose also that there is a tuple $\domelemtuplee$ in $\str{B}$ such that $(\str{A},\domelemtupled) \bisimto_{\Gaffix[\sigma]} (\str{B}, \domelemtuplee)$ holds.
The $(\str{B}, \domelemtuplee)$-\emph{repair} of $\domelemtuplec$ is a $\tau$-HAH $\str{T}'$ obtained from $\str{T}$ in the following five steps:
\begin{enumerate}
    \item Let $\str{B}_0$ be the subtree of $\str{B}$ rooted at the first element of $\domelemtuplee$. 
   \item Take $\str{T}'$ to be the union of $\str{T}$ and $\str{B}_0$ without $\domelemtuplee$.
    \item $\str{T}'$ will contain $\str{T}$ as a substructure.
    \item By identifying $\domelemtuplec$ with $\domelemtuplee$, we interpret the relation symbols for tuples of elements of $\str{T}' \upharpoonright B_0$ in such a way that
            the resulting substructure of $\str{T}'$ is isomorphic with $\str{B}_0$.
    \item We set $\wit_{\str{B}}$ on freshly added elements to be the identity on $\str{B}_0$.
\end{enumerate}
The substructure $\str{T}' \upharpoonright (B_0 \cup \domelemtuplec)$ is called a $\domelemtuplec$-\emph{component} of $\str{T}'$. $\str{T}'$ becomes a HAH after a routine renaming.
We define $(\str{A}, \domelemtupled)$-\emph{repair} of $\domelemtuplec$ analogously.
\end{definition}

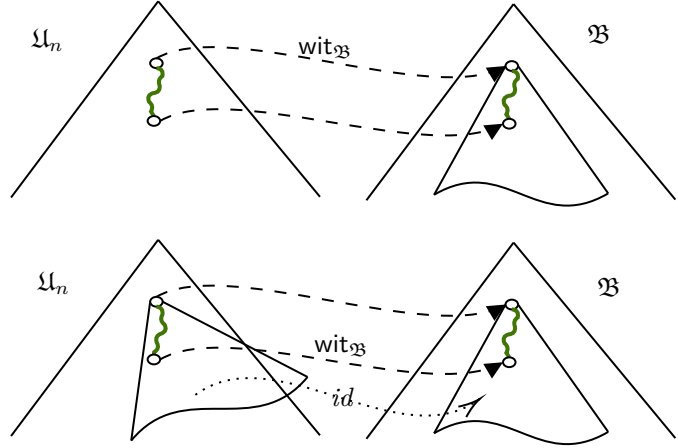
\begin{figure}[H]
\centering

\tikzset{every picture/.style={line width=0.75pt}} 

\begin{tikzpicture}[x=0.75pt,y=0.75pt,yscale=-1,xscale=1]

\draw    (80.45,4) -- (161.21,102.35) ;
\draw    (80.45,4) -- (7,102.7) ;
\draw   (76.44,35.16) .. controls (76.44,33.81) and (77.84,32.71) .. (79.56,32.71) .. controls (81.28,32.71) and (82.68,33.81) .. (82.68,35.16) .. controls (82.68,36.51) and (81.28,37.61) .. (79.56,37.61) .. controls (77.84,37.61) and (76.44,36.51) .. (76.44,35.16) -- cycle ;
\draw [color={rgb, 255:red, 65; green, 117; blue, 5 }  ,draw opacity=1 ][line width=1.5]    (79.56,37.61) .. controls (82.07,37.8) and (83.05,39.15) .. (82.52,41.67) .. controls (80.85,42.76) and (80.4,44.31) .. (81.17,46.32) .. controls (81.44,48.65) and (80.4,49.93) .. (78.07,50.16) .. controls (75.68,50.81) and (74.96,52.29) .. (75.92,54.61) .. controls (77.59,55.94) and (77.9,57.56) .. (76.84,59.47) -- (78.31,61.73) ;
\draw   (75.19,64.18) .. controls (75.19,62.83) and (76.59,61.73) .. (78.31,61.73) .. controls (80.03,61.73) and (81.43,62.83) .. (81.43,64.18) .. controls (81.43,65.54) and (80.03,66.64) .. (78.31,66.64) .. controls (76.59,66.64) and (75.19,65.54) .. (75.19,64.18) -- cycle ;
\draw    (257.84,5.4) -- (338.6,103.75) ;
\draw    (257.84,5.4) -- (184.39,104.1) ;
\draw   (253.83,36.56) .. controls (253.83,35.21) and (255.22,34.11) .. (256.95,34.11) .. controls (258.67,34.11) and (260.07,35.21) .. (260.07,36.56) .. controls (260.07,37.91) and (258.67,39.01) .. (256.95,39.01) .. controls (255.22,39.01) and (253.83,37.91) .. (253.83,36.56) -- cycle ;
\draw [color={rgb, 255:red, 65; green, 117; blue, 5 }  ,draw opacity=1 ][line width=1.5]    (256.95,39.01) .. controls (259.46,39.2) and (260.44,40.55) .. (259.9,43.07) .. controls (258.23,44.16) and (257.78,45.71) .. (258.55,47.72) .. controls (258.82,50.05) and (257.79,51.33) .. (255.45,51.56) .. controls (253.06,52.21) and (252.35,53.69) .. (253.31,56.01) .. controls (254.98,57.34) and (255.29,58.96) .. (254.23,60.87) -- (255.7,63.13) ;
\draw   (252.58,65.58) .. controls (252.58,64.23) and (253.98,63.13) .. (255.7,63.13) .. controls (257.42,63.13) and (258.82,64.23) .. (258.82,65.58) .. controls (258.82,66.94) and (257.42,68.04) .. (255.7,68.04) .. controls (253.98,68.04) and (252.58,66.94) .. (252.58,65.58) -- cycle ;
\draw  [dash pattern={on 4.5pt off 4.5pt}]  (81.43,64.18) .. controls (116.37,43.6) and (212.96,84.88) .. (250.36,66.77) ;
\draw [shift={(252.58,65.58)}, rotate = 149.5] [fill={rgb, 255:red, 0; green, 0; blue, 0 }  ][line width=0.08]  [draw opacity=0] (8.93,-4.29) -- (0,0) -- (8.93,4.29) -- cycle    ;
\draw  [dash pattern={on 4.5pt off 4.5pt}]  (79.56,32.71) .. controls (114.5,12.12) and (214.08,55.76) .. (251.61,37.74) ;
\draw [shift={(253.83,36.56)}, rotate = 149.5] [fill={rgb, 255:red, 0; green, 0; blue, 0 }  ][line width=0.08]  [draw opacity=0] (8.93,-4.29) -- (0,0) -- (8.93,4.29) -- cycle    ;
\draw    (260.07,36.56) -- (305.08,100.63) ;
\draw    (253.83,36.56) -- (218.17,101.33) ;
\draw    (218.17,101.33) .. controls (253.83,80.32) and (269.43,121.64) .. (305.08,100.63) ;

\draw (16,16.4) node [anchor=north west][inner sep=0.75pt]    {$\str{U}_n$};
\draw (293,14.4) node [anchor=north west][inner sep=0.75pt]    {$\str{B}$};
\draw (149,20.4) node [anchor=north west][inner sep=0.75pt]    {$\wit_{\str{B}}$};
\end{tikzpicture}

\begin{tikzpicture}[x=0.75pt,y=0.75pt,yscale=-1,xscale=1]

\draw    (82.45,4) -- (163.21,102.35) ;
\draw    (82.45,4) -- (9,102.7) ;
\draw   (78.44,35.16) .. controls (78.44,33.81) and (79.84,32.71) .. (81.56,32.71) .. controls (83.28,32.71) and (84.68,33.81) .. (84.68,35.16) .. controls (84.68,36.51) and (83.28,37.61) .. (81.56,37.61) .. controls (79.84,37.61) and (78.44,36.51) .. (78.44,35.16) -- cycle ;
\draw [color={rgb, 255:red, 65; green, 117; blue, 5 }  ,draw opacity=1 ][line width=1.5]    (81.56,37.61) .. controls (83.97,37.91) and (84.96,39.28) .. (84.53,41.71) .. controls (83.48,43.62) and (83.82,45.24) .. (85.55,46.56) .. controls (86.73,48.71) and (86.21,50.25) .. (83.98,51.18) .. controls (81.73,51.93) and (80.99,53.44) .. (81.76,55.73) .. controls (82.83,57.72) and (82.39,59.31) .. (80.44,60.48) -- (80.31,61.73) ;
\draw   (77.19,64.18) .. controls (77.19,62.83) and (78.59,61.73) .. (80.31,61.73) .. controls (82.03,61.73) and (83.43,62.83) .. (83.43,64.18) .. controls (83.43,65.54) and (82.03,66.64) .. (80.31,66.64) .. controls (78.59,66.64) and (77.19,65.54) .. (77.19,64.18) -- cycle ;
\draw    (259.84,5.4) -- (340.6,103.75) ;
\draw    (259.84,5.4) -- (186.39,104.1) ;
\draw   (255.83,36.56) .. controls (255.83,35.21) and (257.22,34.11) .. (258.95,34.11) .. controls (260.67,34.11) and (262.07,35.21) .. (262.07,36.56) .. controls (262.07,37.91) and (260.67,39.01) .. (258.95,39.01) .. controls (257.22,39.01) and (255.83,37.91) .. (255.83,36.56) -- cycle ;
\draw [color={rgb, 255:red, 65; green, 117; blue, 5 }  ,draw opacity=1 ][line width=1.5]    (258.95,39.01) .. controls (261.46,39.2) and (262.44,40.55) .. (261.9,43.07) .. controls (260.23,44.16) and (259.78,45.71) .. (260.55,47.72) .. controls (260.82,50.05) and (259.79,51.33) .. (257.45,51.56) .. controls (255.06,52.21) and (254.35,53.69) .. (255.31,56.01) .. controls (256.98,57.34) and (257.29,58.96) .. (256.23,60.87) -- (257.7,63.13) ;
\draw   (254.58,65.58) .. controls (254.58,64.23) and (255.98,63.13) .. (257.7,63.13) .. controls (259.42,63.13) and (260.82,64.23) .. (260.82,65.58) .. controls (260.82,66.94) and (259.42,68.04) .. (257.7,68.04) .. controls (255.98,68.04) and (254.58,66.94) .. (254.58,65.58) -- cycle ;
\draw  [dash pattern={on 4.5pt off 4.5pt}]  (83.43,64.18) .. controls (118.37,43.6) and (214.96,84.88) .. (252.36,66.77) ;
\draw [shift={(254.58,65.58)}, rotate = 149.5] [fill={rgb, 255:red, 0; green, 0; blue, 0 }  ][line width=0.08]  [draw opacity=0] (8.93,-4.29) -- (0,0) -- (8.93,4.29) -- cycle    ;
\draw  [dash pattern={on 4.5pt off 4.5pt}]  (81.56,32.71) .. controls (116.5,12.12) and (216.08,55.76) .. (253.61,37.74) ;
\draw [shift={(255.83,36.56)}, rotate = 149.5] [fill={rgb, 255:red, 0; green, 0; blue, 0 }  ][line width=0.08]  [draw opacity=0] (8.93,-4.29) -- (0,0) -- (8.93,4.29) -- cycle    ;
\draw    (262.07,36.56) -- (307.08,100.63) ;
\draw    (255.83,36.56) -- (220.17,101.33) ;
\draw    (220.17,101.33) .. controls (255.83,80.32) and (271.43,121.64) .. (307.08,100.63) ;
\draw    (84.68,35.16) -- (157,73) ;
\draw    (78.44,35.16) -- (69,105) ;
\draw    (69,105) .. controls (95.21,72.97) and (130.79,105.03) .. (157,73) ;
\draw  [dash pattern={on 0.84pt off 2.51pt}]  (101,82) .. controls (140.6,52.3) and (201.76,112.77) .. (241.79,84.87) ;
\draw [shift={(243,84)}, rotate = 143.13] [color={rgb, 255:red, 0; green, 0; blue, 0 }  ][line width=0.75]    (10.93,-3.29) .. controls (6.95,-1.4) and (3.31,-0.3) .. (0,0) .. controls (3.31,0.3) and (6.95,1.4) .. (10.93,3.29)   ;

\draw (159,50) node [anchor=north west][inner sep=0.75pt]    {$\wit_{\str{B}}$};
\draw (167,77) node [anchor=north west][inner sep=0.75pt]    {$\textit{id}$};
\draw (21,18.4) node [anchor=north west][inner sep=0.75pt]    {$\str{U}_n$};
\draw (300,21.4) node [anchor=north west][inner sep=0.75pt]    {$\str{B}$};
\end{tikzpicture}

\caption{An example structure $\str{U}_n$ before and after we performed a ``$\str{B}$''-repair.}
\end{figure}

We proceed with the base of induction, setting first $\str{U}_0$ to be $\str{A}$. It will be four-fold.\\

\noindent\textbf{Base case: Step I.} We set up $\wit_{\str{A}}$ and $\wit_{\str{B}}$ functions. 
For $\wit_{\str{A}}$ we will simply take the identity function.
To define $\wit_{\str{B}}$, we intuitively proceed by traversing $\str{U}_0$ from top to bottom. 
More precisely, let $L_0^{\str{A}}$ denote the set of all maximal $\sigma$-live tuples in $\str{U}_0$. 
Letting $\lesslex$ denote the lexicographic ordering of $\N^*$, we construct a well-founded linear ordering $\reijosorder$ on $L_0^{\str{A}}$ as follows: $\domelemtuplec \reijosorder \domelemtupled$ iff there is an $i \leq \min\{|\domelemtuplec|,|\domelemtupled|\}$ such that $\domelemc_i \lesslex \domelemd_i$ and $\domelemc_j = \domelemd_j$ for every $j \lesslex i$ (note that if there is no such $i$, then the tuples are equal due to maximality).
One can show that $\domelemtuplec \reijosorder \domelemtupled$ implies that $(\heartsuit){:}$ if $\domelemtuplec$ and $\domelemtupled$ share some elements, then there exists $i,j$ and $k$ such that $\domelemtuplecfromto{i}{j} = \domelemtupledfromto{1}{k}$ and none of the elements $\domelemd_\ell$, for $\ell > k$, occur in $\domelemtuplec$.
To prove this, one needs to simply show that if $\domelemd_k$ occurs in~$\domelemtuplec$, then $(\domelemd_1,\ldots,\domelemd_k)$ is an affix of $\domelemtuplec$ (the proof goes via careful inspection of the definition of HAHs, \cf~\cref{appendix:HAHs-are-important-xD}).

We define $\wit_{\str{B}}$ inductively w.r.t $\reijosorder$. 
Consider a maximal $\sigma$-live tuple $\domelemtupled$ and suppose that we have defined $\wit_{\str{B}}$ for all the $\sigma$-live tuples $\domelemtuplec \reijosorder \domelemtupled$. 
There are two cases to consider.
\begin{itemize}
   \item There exists a tuple $\domelemtuplec \reijosorder \domelemtupled$ sharing at least one element with $\domelemtupled$. 
   By $(\heartsuit)$, for every such tuple $\domelemtuplec$ there are $i,j$ and $k$ so that $\domelemtuplecfromto{i}{j} = \domelemtupledfromto{1}{k}$ and none of the elements $d_\ell$, for $\ell > k$, occur in $\domelemtuplec$. 
   Let $\domelemtuplec$ be the tuple for which the corresponding value $k$ is the largest. 
   Since $\domelemtuplec$ is $\sigma$-live, by induction hypothesis there exists some $\domelemtuplee \sqin B$ such that $(\str{A},\domelemtuplec) \bisimto_{\Gaffix[\sigma]} (\str{B}, \domelemtuplee)$ holds. 
   Thus there exists some $\domelemtuplef \sqin B$ such that $\domelemtupleffromto{i}{j} = \domelemtupleefromto{i}{j}$ and $(\str{A},\domelemtupled) \bisimto_{\Gaffix[\sigma]} (\str{B}, \domelemtuplef)$. 
   We now extend $\wit_{\str{B}}$ in such a way that $\wit_{\str{B}}(\domelemtupled) = \domelemtuplef$.

   \item Otherwise $\domelemtuplec$ and $\domelemtupled$ do not share any elements. 
   Since $\domelemtupled$ is $\sigma$-live and $(\str{A},\domelemtuplea) \bisimto_{\Gaffix[\sigma]} (\str{B}, \domelemtupleb)$, there exists some $\domelemtuplee \sqin B$ such that $(\str{A},\domelemtupled) \bisimto_{\Gaffix[\sigma]} (\str{B}, \domelemtuplee)$. 
   We then simply extend $\wit_{\str{B}}$ in such a way that $\wit_{\str{B}}(\domelemtupled) = \domelemtuplee$.\\
   \hspace{-2em}The resulting mapping $\wit_{\str{B}}$. This finishes Step I.\\
\end{itemize}
\noindent\textbf{Base case: Step II.} We next complete types of all fresh (= all in this case) $\sigma$-live tuples of $\str{U}_0$. 
Take any maximal $\sigma$-live tuple $\domelemtupled$ from $\str{U}_0$ and perform the $\domelemtupled$-completion of $\str{U}_0$. 
It is easy to see that this process is conflict-free in the following sense: there is no tuple $\domelemtuplec$ and $R\in \sig(\psi)$ so that we end up specifying both $\domelemtuplec \in \relsymbolR^{\str{U}_0}$ and $\domelemtuplec \not\in \relsymbolR^{\str{U}_0}$. 
First, if two maximal $\sigma$-live tuples $\domelemtupled$ and $\domelemtuplee$ have affixes $\domelemtupledfromto{i}{j}$ and $\domelemtupleefromto{k}{\ell}$ such that $\domelemtupledfromto{i}{j}=\domelemtupleefromto{k}{\ell}$, then we know that $\wit_\str{B}(\domelemtupledfromto{i}{j})=\wit_\str{B}(\domelemtupleefromto{k}{\ell})$, and thus there are no conflicts in the ``intersections'' of $\sigma$-live tuples.  
Second, by construction $\tp{\Gaffix[\sigma]}{\str{A}}{\domelemtupled}=\tp{\Gaffix[\sigma]}{\str{B}}{\wit_\str{B}(\domelemtupled)}$ holds for all maximal $\sigma$-live tuple $\domelemtupled$, and hence the $\sigma$-infix-types that we assigned to $\sigma$-live tuples are indeed types, \ie they are consistent.
Thus our process is conflict-free. 

Note that our structure satisfies now conditions (\ref{enum:a}) and (\ref{enum:b}).\\

\noindent\textbf{Base case: Step III.} We finish the base case by providing witnesses for fresh $\sig(\psi)$-tuples via repairs.
Recall that $L_0^{\str{A}}$ denotes the set of all maximal $\sigma$-live tuples in $\str{U}_0^{\str{A}}$. 
For each $\domelemtupled \in L_0^{\str{A}}$ we perform the $(\str{B}, \wit_{\str{B}}(\domelemtupled))$-repair of $\domelemtupled$; the resulting structure will be taken to be $\str{U}_1$. 
Note that now every $\sig(\psi)$-live tuple in $\str{U}_1$ has its witnesses for the existential requirements, but there may be new $\sig(\varphi)$-live tuples without them. 
Moreover, $\wit_{\str{B}}$ is defined for all freshly added elements, but $\wit_{\str{A}}$ is not. 
Furthermore, we note that $\str{U}_1$ now satisfies condition (\ref{enum:e}), since all the $1$-fresh live tuples for which $\wit_{\str{A}}$ is not defined are present in the subtrees that we attached to $\str{U}_0$ during the repair, which is done only at (maximal) $\sigma$-live tuples.\\

\noindent\textbf{Base case: Step IV.} It could be the case that the structure $\str{U}_1$ produced in the previous step violates (\ref{enum:d}), due to the lack of realisation of a certain type from $\str{B}$. 
Thus, as an extra precaution, unique to the base case, we add a disjoint copy of $\str{B}$ to $\str{U}_1$ and define $\wit_{\str{B}}$ for it to be the identity. 
Note that now (\ref{enum:d}) will be satisfied in any extension of $\str{U}_1$.\\

\noindent\textbf{Inductive step.} 
The inductive step is analogous to Steps I-III from the base case, hence we keep its description short.
Assume that $\str{U}_n$ is defined and that in the previous step of the construction we employed $\str{B}$-repairs (the case of $\str{A}$-repairs is symmetric).
Given a component $\str{C}$ that was created during such a repair, we let $L_n^\str{C}$ denote the set of all maximal $n$-fresh $\sigma$-live tuples in $\str{C}$.
Since $\str{C}$ is essentially a HAT (up to renaming), we can again define a well-founded linear order $\reijosorder$ on $L_n^\str{C}$ in the same way as we did in the base case for $L_0^{\str{A}}$.
As in the base case, we then define missing values of $\wit_{\str{A}}$ for elements of $\str{C}$ inductively w.r.t~$\reijosorder$.

Observe that some of the tuples in $L_n^\str{C}$ might contain a proper prefix of elements of $U_{n{-}1}$. 
In the case of $\Gsuffix$ these tuples do not cause any problems to us, because suffix-types do not impose any constraints on proper prefixes. 
In the cases of $\Gprefix$ and $\Ginfix$ we handle these tuples by using the fact that $\str{U}_n$ satisfies condition (\ref{enum:e}) as follows. 
Let $\domelemtupled \in L_n^\str{C}$ be such a tuple and let $k$ be the largest index so that $\domelemtupledfromto{1}{k} \sqin U_{n{-}1}$. 
Using condition (\ref{enum:e}), we know that there exists a $\sigma$-live $n$-aged tuple $\domelemtuplec$ so that $\domelemtuplecfromto{i}{j} = \domelemtupledfromto{1}{k}$, for some $i$ and $j$. 
Employing condition (\ref{enum:a}), we know that $\domelemtuplee := \wit_{\str{A}}(\domelemtuplec)$, $\domelemtuplef := \wit_{\str{B}}(\domelemtuplec)$ and $\domelemtupleh := \wit_{\str{B}}(\domelemtupled)$ are defined and that $(\str{A},\domelemtuplee) \bisimto_{\Gaffix[\sigma]} (\str{B}, \domelemtuplef)$. 
Thus there exists a $\sigma$-live tuple $\domelemtupleg \sqin A$ such that $\domelemtupleffromto{i}{j} = \domelemtuplegfromto{i}{j}$ and $(\str{A},\domelemtupleg) \bisimto_{\Gaffix[\sigma]} (\str{B}, \domelemtupleh)$. 
We now extend $\wit_{\str{A}}$ in such a way that $\wit_{\str{A}}(\domelemtupled) = \domelemtupleg$.
 
The above procedure is repeated for all components $\str{C}$ that were introduced during the previous repair. 
Having defined $\wit_{\str{A}}$ for all the elements, we perform a completion that works for exactly the same reasons as described before.
Finally, letting $L_n$ denote the set of all maximal $n$-fresh $\sigma$-live tuples, we perform repair of every tuple in $L_n$, which results in a model that we select as $\str{U}_{n{+}1}$. 
We stress that every $\sig(\varphi)$-live tuple in $\str{U}_{n{+}1}$ has its witnesses for the existential requirements, but there can now be new $\sig(\psi)$-live tuples without them. 
We also emphasise that $\wit_{\str{A}}$ is defined for all the new elements but $\wit_{\str{B}}$ might not be. 
This concludes the inductive step and hence, also the construction of $\str{U}$ and the proof of~\cref{thm:Gaffix-have-CIP}.\\

We conclude the construction with the following remark.

\begin{remark}
The presented model construction is quite generic. 
Indeed, the only part of the construction which is really specific to $\Gaffix$ is the first step of the construction, namely the part where we define inductively the values of witness functions. 
We expect that the presented technique can be easily adapted to other logics, especially to other fragments of the guarded fragments. 
For instance, we believe that our technique can be adjusted, \eg to the case of the two-variable $\GF$ from~\cite{HooglandMO99} as well as to the uniform one-dimensional $\GF$ from~\cite{Jaakkola22}.
\end{remark}


\section{Conclusions}
In this paper kick-started a project of understanding the model theory of the family of guarded and unguarded ordered logics. 
We first investigated the relative expressive power of ordered logics by means of suitable bisimulations.
Afterwards, we proceed with the Craig Interpolation Property (CIP) showing that 
(i) the fluted and the forward fragments do not enjoy CIP,
(ii) while the other logics that we consider enjoy it. 
The fact that the fluted fragment does not posses CIP was quite unexpected in the light of already existing claims for the contrary~\cite[Thm. 14]{Purdy02}.
For the other logics we proposed a novel model-theoretic ``complete-and-repair'' method of creating a model out of two bisimilar forest-like structures. 

There are several interesting future work directions.

\begin{enumerate}

\item One example is to investigate the Łoś-Tarski Preservation Theorem as well as other preservation theorems. 
While we think that we already have a working construction for guarded ordered logics, the status of ŁTPT holding for $\Lprefix, \Lsuffix$, and $\Linfix$ is not clear.\footnote{Purdy provides a ``proof'' in~\cite{Purdy02} that $\Lsuffix$ has ŁTPT. However, his ``proof'' is sketchy and lacks sufficient mathematical arguments required to verify its correctness. In the light of our discovery of yet another false claim from~\cite{Purdy02}, we believe that it is safe to assume that ŁTPT for $\Lsuffix$ \emph{is open}.}

\item Another work direction is to take a look at on effective interpolation, similarly to what has been proposed in~\cite{BenediktCB16} as well as on the interpolant existence problem for $\Linfix$ and $\Lsuffix$, as done in~\cite{JungW21}. Preliminary results were obtained. 
It is also interesting whether the guarded ordered logics enjoy stronger versions of interpolations, \eg Lyndon's interpolation or Otto's interpolation. We are quite optimistic about it.

\item Can our upper bounds for the model checking problem for ordered logics be lifted to the case of the list encoding of structures?
\end{enumerate}

We are also actively working on the finitary versions of van Benthem theorem for the forward guarded fragment as well as the Lindström-style characterisation theorems.
This is an ongoing work of Benno Fünfstück, a master student at TU Dresden, under the supervision of B. Bednarczyk.

\clearpage
\bibliography{references}
\clearpage
\appendix
\tableofcontents

\section{Appendix for~\cref{subsec:model-checking}}

\subsection{Combined complexity of $\Lprefix$ and $\Lsuffix$}\label{appendix:model-checking-fluted-logic}

We will first present an alternating algorithm for the model checking problem of $\Lsuffix$, which requires only logarithmic amount of work-space, after which we will argue that a similar algorithm works also for $\Lprefix$. Since alternating $\LogSpace$ equals $\PTime$, the desired results will follow from this. Let $\varphi$ be a formula of either $\Lprefix$ or $\Lsuffix$.  Let $N = \max\{ar(R) \mid R \in \sig(\varphi)\}$. Our algorithm will make crucial use of the following fact: if $|\str{A}|$ denotes the size of the encoding of the structure $\str{A}$, then $\log_2(|\str{A}|) \geq \log_2(|A|^N) = N\log_2(|A|)$. 

\begin{algorithm}[h]
  \DontPrintSemicolon
  \KwData{An $\Lsuffix$ formula $\varphi(\varx_m,\dots,\varx_n)$, a structure $\str{A}$ \\ and an assignment $s:\{\varx_m,\dots,\varx_n\} \to A$}
  \caption{\textbf{ModelCheck}$(\str{A},s,\varphi)$}\label{algo:lsuffix-model-checking}

  \textbf{if} $\varphi(\varx_m,\dots,\varx_n) = R(\varx_m,\dots,\varx_n)$ and $(s(\varx_m),\dots,s(\varx_n)) \in R^{\str{A}}$ \textbf{then} \\      
  \ \textbf{return} \texttt{True}, otherwise \textbf{return} \texttt{False}

   \textbf{if} $\varphi(\varx_m,\dots,\varx_n)   = \psi(\varx_m,\dots,\varx_n) \lor \psi'(\varx_m,\dots,\varx_n)$ \textbf{then}\\
	\ \textbf{guess} $i\in \{1,2\}$ and \textbf{return Modelcheck}$(\str{A},s,\psi_i)$

  \textbf{if} $\varphi(\varx_m,\dots,\varx_n)   = \psi(\varx_m,\dots,\varx_n) \land \psi'(\varx_m,\dots,\varx_n)$ \textbf{then}\\
	\ \textbf{choose} $i\in \{1,2\}$ and \textbf{return Modelcheck}$(\str{A},s,\psi_i)$

  \textbf{if} $\varphi(\varx_m,\dots,\varx_n) = \exists \varx_{n+1} \psi(\varx_m,\dots,\varx_n,\varx_{n+1})$ \textbf{then} \\
  \ \textbf{if} $n - m + 1 \geq N$ \textbf{then} $s = s - \{(x_m,s(x_m))\}$ \\
  \ \textbf{guess} $a\in A$ and \textbf{return ModelCheck}$(\str{A},s(a/x_{n+1}),\psi)$

  \textbf{if} $\varphi(\varx_m,\dots,\varx_n) = \forall \varx_{n+1} \psi(\varx_m,\dots,\varx_n,\varx_{n+1})$ \textbf{then} \\
  \ \textbf{if} $n - m + 1 \geq N$ \textbf{then} $s = s - \{(x_m,s(x_m))\}$ \\
  \ \textbf{choose} $a\in A$ and \textbf{return ModelCheck}$(\str{A},s(a/x_{n+1}),\psi)$
\end{algorithm}

\cref{algo:lsuffix-model-checking} describes (informally) an alternating model checking algorithm for $\Lsuffix$. The algorithm is not really specific to the logic $\Lsuffix$, but the fact that it requires only logarithmic amount of work-space is. Indeed, at each step the algorithm will store a pointer to the current subformula and an assignment to at most $N$ variables (since previous variables are removed from the assignment when the total number of variables in the domain of the current assignment is about to become more than $N$), which in total requires only $O(\log_2(|\varphi|) + N\log_2(|A|))$ bits of memory. 

Notice that the algorithm indeed works for $\Lsuffix$, since the input formula has the following syntactical properties: its width is at most $N$ and if it has a subformula $\psi(x_m,\dots,x_n)$, where $(x_m,\dots,x_n)$ lists precisely the free variables of $\psi$, then $\psi$ does not have a subformula that has any of the variables $\{x_1,\dots,x_m\}$ as its free variables. On the other hand, $\Lprefix$ has similar syntactical properties: its width is at most $N$ and on every maximal path in the syntactical tree of any $\Lprefix$ formula of width at most $N$ at most $N$ distinct variables can occur. Thus it should be obvious that an algorithm very similar to~\cref{algo:lsuffix-model-checking} can be designed for $\Lprefix$.

\section{Appendix for~\cref{sec:expressive-power}}

\subsection{Proof of~\cref{lemma:linking-bisimilarity-and-equivalence}}\label{appendix:lemma:linking-bisimilarity-and-equivalence}


\begin{proof}
	Here we will restrict our attention to the case of $\Laffix$, since the case of $\Gaffix$ can be proved exactly the same way. We start with the left to right direction. Suppose that $\str{A}, \domelemtuplea \bisimto_{\Laffix[\sigma]} \str{B}, \domelemtupleb$ and let $\bisimZ$ be the promised bisimulation between  $(\omegasat{\str{A}},\domelemtuplec)$ and $(\omegasat{\str{B}},\domelemtupled)$.
	We claim that for every $n\in \N$, $\varphi(\vartuplexfromto{i}{j}) \in \Laffix(n)[\sigma]$ and $(\domelemtuplec,\domelemtupled) \in \bisimZ$, where $|\domelemtuplec|=|\domelemtupled|=n$, we have that
	\[\str{A} \models \varphi(\domelemtuplecfromto{i}{j}) \iff \str{B} \models \varphi(\domelemtupledfromto{i}{j}).\]
	Since $(\domelemtuplea,\domelemtupleb) \in \bisimZ$, this would imply that $\str{A}, \domelemtuplea \equiv_{\Laffix[\sigma]} \str{B}, \domelemtupleb$. To prove this claim, we use induction over the structure of $\Laffix[\sigma]$. Here we will
	limit ourselves on the case where $\varphi(\vartuplexfromto{i}{j}) = \exists \varx_{j+1} \psi(\vartuplexfromto{i}{j+1})$. Pick $(\domelemtuplec,\domelemtupled) \in \bisimZ$ and suppose that $\str{A} \models  \varphi(\domelemtuplecfromto{i}{j})$. Thus
	there exists $\domeleme\in A$ so that $\str{A} \models \psi(\domelemtuplecfromto{i}{j}\domeleme)$. Since $(\domelemtuplec,\domelemtupled) \in \bisimZ$, by \ref{bisim:forth} we know that there exists $\domelemf\in B$ so that $(\domelemtuplecfromto{i}{j}\domeleme,\domelemtupledfromto{i}{j}\domelemf) \in \bisimZ$.
	By induction, we have that $\str{B} \models \psi(\domelemtupledfromto{i}{j}f)$ and hence $\str{B} \models \varphi(\domelemtupledfromto{i}{j})$, which is what we wanted to show. The converse direction can be proved analogously using~\ref{bisim:back}.

	We then move on to the right to left direction. Suppose that $\str{A}$ and $\str{B}$ are $\omega$-saturated structures for which $\str{A}, \domelemtuplea \equiv_{\Laffix[\sigma]} \str{B}, \domelemtupleb$. We claim that the following set
	\[\bisimZ := \bigg\{(\domelemtuplec,\domelemtupled) \in \bigcup_{n < \omega} (A^n \times B^n) \mid \str{A}, \domelemtuplec \equiv_{\Laffix[\sigma]} \str{B}, \domelemtupled\bigg\}\]
	is a $\Laffix[\sigma]$-bisimulation between $(\str{A},\domelemtuplea)$ and $(\str{B},\domelemtupleb)$. First, by construction we know that $(\domelemtuplea,\domelemtupleb) \in \bisimZ$. Furthermore it is clear that $\bisimZ$ satisfies \ref{bisim:atomic-harmony}.

	Towards showing that $\bisimZ$ satisfies \ref{bisim:forth}, pick $(\domelemtuplec,\domelemtupled) \in \bisimZ$, an affix $\domelemtuplecfromto{i}{j}$ of $\domelemtuplec$ and an element $\domeleme \in A$. We claim that the following set
	\[\Sigma(\varx_{j + 1}) := \{\varphi(\domelemtuplecfromto{i}{j},\varx_{j + 1}) \mid \str{A} \models \varphi(\domelemtuplecfromto{i}{j},e), \varphi(\vartuplexfromto{i}{j {+} 1}) \in \Laffix[\sigma]\}\]
	where we use $\varphi(\vartuplexfromto{i}{j + 1})$ to denote a formula whose free variables form an affix of the sequence $\vartuplexfromto{i}{j + 1}$, is realised in $\str{B}$. Since $\str{B}$ is $\omega$-saturated, it suffices to show that each finite subset of
	$\Sigma(\varx_{j + 1})$ is realised in $\str{B}$. So, let $\Sigma'(\varx_{j + 1}) \subseteq \Sigma(\varx_{j + 1})$ be a finite set. By definition, we know that $\str{A} \models \exists \varx_{j+1} \bigwedge \Sigma'(\varx_{j+1})$. Since $\str{A}, \domelemtuplec \equiv_{\Laffix[\sigma]} \str{B}, \domelemtupled$,
	we can deduce that $\str{B} \models \exists \varx_{j+1} \bigwedge \Sigma'(\varx_{j+1})$, and hence $\Sigma'(\varx_{j+1})$ is indeed realised in $\str{B}$.

	Thus we know that $\Sigma(\varx_{j{+}1})$ is realised in $\str{B}$, say by $\domelemf$. Due to the choice of $\Sigma(\varx_{j+1})$, we know that $\str{A}, \domelemtuplecfromto{i}{j}\domeleme \equiv_{\Laffix[\sigma]} \str{B}, \domelemtupledfromto{i}{j}\domelemf$, and hence that
	$(\domelemtuplecfromto{i}{j}\domeleme,\domelemtupledfromto{i}{j}\domelemf) \in \bisimZ$. Thus $\bisimZ$ satisfies \ref{bisim:forth}. The proof that it satisfies \ref{bisim:back} is entirely analogous.
\end{proof}




\subsection{Proof of~\cref{thm:expressive-power-full-characterisation}}\label{appendix:thm:expressive-power-full-characterisation}

  We first show that $\Linfix$ and $\Lsuffix$ are equally expressive. 
  This is done by a routine rewriting process.

  \begin{lemma}\label{lemma:Linfix-and-Lsuffix-have-the-same-expressive-power}
    $\Linfix \approx \Lsuffix$. 
    Moreover, for a given $\varphi \in \Lsuffix$, the equivalent formula in $\Lsuffix$ can be computed in $|\varphi|$-fold exponential time. 
  \end{lemma}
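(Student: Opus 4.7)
The direction $\Lsuffix \preceq \Linfix$ is immediate from the definitions, since every suffix of $\vartuplexfromto{1}{n}$ is also an infix, and $\Lsuffix$'s prohibition on requantification is merely a syntactic restriction. The substantive direction is $\Linfix \preceq \Lsuffix$, which I sketch below.

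The key syntactic observation is that any infix $\vartuplexfromto{i}{j}$ of $\vartuplexfromto{1}{n}$ that mentions $\varx_n$ must satisfy $j = n$, by the contiguity of infixes, and hence is already a suffix of $\vartuplexfromto{1}{n}$. Consequently, inside the scope of a quantifier binding $\varx_n$, the only atoms failing the $\Lsuffix$-condition are those whose variable tuple does not mention $\varx_n$; such atoms can be pulled outside the quantifier by standard equivalences.

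The plan is to translate $\varphi \in \Linfix$ inside-out. I isolate a maximally nested subformula $Q \varx_n \chi$ with $\chi$ quantifier-free. For $Q = \exists$, I convert $\chi$ to DNF $\bigvee_i D_i$, split each disjunct as $D_i = D_i^{\varx_n} \wedge D_i^{-\varx_n}$ according to whether its literals mention $\varx_n$, and apply the equivalences $\exists \varx_n (A \vee B) \equiv \exists \varx_n A \vee \exists \varx_n B$ and $\exists \varx_n (A \wedge B) \equiv A \wedge \exists \varx_n B$ (the latter valid because $\varx_n$ is not free in $A$), to obtain
\[
\exists \varx_n \chi \;\equiv\; \bigvee_i \bigl( D_i^{-\varx_n} \wedge \exists \varx_n D_i^{\varx_n} \bigr).
\]
By the core observation, $\exists \varx_n D_i^{\varx_n}$ lies in $\Lsuffix(n{-}1)$; the pulled-out part $D_i^{-\varx_n}$ now resides at the outer depth $n{-}1$ and is processed by the next iteration. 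Universal quantifiers are handled dually, using CNF together with $\forall \varx_n (A \to B) \equiv A \to \forall \varx_n B$. Requantification, permitted in $\Linfix$ but not in $\Lsuffix$, is handled transparently: an inner rebinding scope is a self-contained subformula already translated by the induction hypothesis.

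The main obstacle is twofold. First, one must verify the bookkeeping: after each push-out step the resulting formula must syntactically respect $\Lsuffix$'s strict variable-indexing (the quantifier at depth $k$ binds precisely $\varx_{k+1}$, and every atom at depth $k$ is a suffix of $\vartuplexfromto{1}{k}$). A structural induction on the number of remaining non-suffix atoms, carried out level-by-level, makes this routine. Second, the complexity bound: each DNF or CNF conversion is exponential in the size of its scope, and iterating over up to $|\varphi|$ nested quantifier blocks compounds these blow-ups into a tower of height $|\varphi|$, matching the claimed $|\varphi|$-fold exponential time bound.
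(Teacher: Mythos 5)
Your proof is correct and follows essentially the same route as the paper: an inside-out rewriting that converts each innermost quantified scope to DNF (dually CNF for $\forall$), distributes the quantifier, and pulls the parts not mentioning the innermost variable outside its scope, with the same tower-of-exponentials cost analysis. If anything, your per-disjunct split into $D_i^{\varx_n}\wedge D_i^{-\varx_n}$ is a slightly more careful rendering than the paper's ``put outside the disjuncts in which $\vary$ does not appear'', since a disjunct that does mention $\varx_n$ may still contain non-suffix atoms omitting $\varx_n$ which must themselves be extracted.
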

  \begin{proof}
    We have $\Lsuffix \preceq \Linfix$ by definition. 
    To show $\Linfix \preceq \Lsuffix$, take any $\varphi \in \Linfix$ that is not in $\Lsuffix$.
    W.l.o.g. we assume that $\varphi$ does not contain any proper subsentences.
    Let $\mathcal{Q}{\vary}\; \psi(\vartuplex,\vary)$ be a maximally nested subformula of $\varphi$ violating the definition of $\Lsuffix$.
    Next, turn $\psi$ into the disjunctive normal form (treating subformulae of $\psi$ starting from quantifiers as atomic ones). 
    Finally, put outside $\psi$ these disjuncts of $\psi$ in which $\vary$ does not appear. 
    The obtained formula is obviously equivalent to $\varphi$ and the modified subformulae $\mathcal{Q}{\vary} \psi(\vartuplex,\vary)$ is in $\Lsuffix$.
    Hence, by repeating the process we eventually reach the formula in $\Lsuffix$ that is equivalent to $\varphi$. 
    The second part of the lemma follows directly from the proof (turning a formula into DNF requires exponential time and pessimistically we must convert every subformulae of $\varphi$ into DNF). 
  \end{proof}

  By chasing the presented diagram, it is easy to see that in order to conclude the proofs it suffices to show that the formulae $\forall{\varx_1}{\varx_2}{\varx_3}\; (\relsymbolR(\varx_1,\varx_2,\varx_3) \to \relsymbolT(\varx_2,\varx_3))$ is not definable in $\Lprefix$, and that $\forall{\varx_1}{\varx_2}{\varx_3}\; (\relsymbolR(\varx_1,\varx_2,\varx_3) \to \relsymbolS(\varx_1,\varx_2))$ is not definable in $\Gsuffix$. 
  We prove it in the consecutive lemmas.
  Below we depict the structures $\str{A}$ and $\str{B}$ that we will use in the proofs.

  \begin{figure}[H]
    \centering
    \begin{tikzpicture}[transform shape]
        \draw (-0.75, 0) node[] (A) {$\str{A} :=$};

        \draw (0, 0) node[ptrond, label=center:\small{$1$}] (OneA) {};
        \draw (0, -1) node[ptrond, label=center:\small{$2$}] (TwoA) {};
        \draw (1,-2) node[ptrond, label=center:\small{$3$}] (ThreeA) {};
        \draw (2, -1) node[ptrond, label=center:\small{$4$}] (FourA) {};
        \draw (2, 0) node[ptrond, label=center:\small{$5$}] (FiveA) {};
        \draw (1, 1) node[ptrond, label=center:\small{$6$}] (SixA) {};

        \path[->] (OneA) edge[vert] node[] {\( \relsymbolS \)} (TwoA); 
        \path[->] (TwoA) edge[rouge] node[] {\( \relsymbolT \)} (ThreeA); 
        \path[->] (ThreeA) edge[vert] node[] {\( \relsymbolS \)} (FourA); 
        \path[->] (FourA) edge[rouge] node[] {\( \relsymbolT \)} (FiveA); 
        \path[->] (FiveA) edge[vert] node[] {\( \relsymbolS \)} (SixA); 
        \path[->] (SixA) edge[rouge] node[] {\( \relsymbolT \)} (OneA); 

        \path[-] (OneA) edge[bend right=45, color=bleu] node[] {} (TwoA);
        \path[->] (TwoA) edge[bend right=45, color=bleu] node[] {} (ThreeA);
        \draw (-0.4, -1) node[] {$\relsymbolR$};

        \path[-] (ThreeA) edge[bend right=45, color=bleu] node[] {} (FourA);
        \path[->] (FourA) edge[bend right=45, color=bleu] node[] {} (FiveA);
        \draw (2.4, -1) node[] {$\relsymbolR$};

        \path[-] (FiveA) edge[bend right=45, color=bleu] node[] {} (SixA);
        \path[->] (SixA) edge[bend right=45, color=bleu] node[] {} (OneA);
        \draw (1, 1.4) node[] {$\relsymbolR$};

        \draw (4.25, 0) node[] (B) {$\str{B} :=$};

        \draw (7, 0) node[ptrond, label=center:\small{$1$}] (OneB) {};
        \draw (7, -1) node[ptrond, label=center:\small{$2$}] (TwoB) {};
        \draw (8,-2) node[ptrond, label=center:\small{$3$}] (ThreeB) {};
        \draw (9, -1) node[ptrond, label=center:\small{$4$}] (FourB) {};
        \draw (9, 0) node[ptrond, label=center:\small{$5$}] (FiveB) {};
        \draw (8, 1) node[ptrond, label=center:\small{$6$}] (SixB) {};
        \draw (6, 0) node[ptrond, label=center:\small{$7$}] (SevenB) {};
        \draw (5, 0) node[ptrond, label=center:\small{$8$}] (EightB) {};

        \path[->] (OneB) edge[vert] node[] {\( \relsymbolS \)} (TwoB); 
        \path[->] (TwoB) edge[rouge] node[] {\( \relsymbolT \)} (ThreeB); 
        \path[->] (ThreeB) edge[vert] node[] {\( \relsymbolS \)} (FourB); 
        \path[->] (FourB) edge[rouge] node[] {\( \relsymbolT \)} (FiveB); 
        \path[->] (FiveB) edge[vert] node[] {\( \relsymbolS \)} (SixB); 
        \path[->] (SixB) edge[rouge] node[] {\( \relsymbolT \)} (OneB); 

        \path[-] (EightB) edge[bend left=45, color=bleu] node[] {} (SevenB);
        \path[->] (SevenB) edge[bend left=45, color=bleu] node[] {} (OneB);
        \draw (6, 0.4) node[] {$\relsymbolR$};

        \path[-] (OneB) edge[bend right=45, color=bleu] node[] {} (TwoB);
        \path[->] (TwoB) edge[bend right=45, color=bleu] node[] {} (ThreeB);
        \draw (6.6, -1) node[] {$\relsymbolR$};

        \path[-] (ThreeB) edge[bend right=45, color=bleu] node[] {} (FourB);
        \path[->] (FourB) edge[bend right=45, color=bleu] node[] {} (FiveB);
        \draw (9.4, 0) node[] {$\relsymbolR$};

        \path[-] (FiveB) edge[bend right=45, color=bleu] node[] {} (SixB);
        \path[->] (SixB) edge[bend right=45, color=bleu] node[] {} (OneB);
        \draw (8, 1.4) node[] {$\relsymbolR$};

    \end{tikzpicture}
  \end{figure}

  \begin{lemma}
    $\forall{\varx_1}{\varx_2}{\varx_3}\; (\relsymbolR(\varx_1,\varx_2,\varx_3) \to \relsymbolT(\varx_2,\varx_3))$ cannot be expressed in $\Lprefix$.
  \end{lemma}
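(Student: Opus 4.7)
The plan is to prove the lemma by exhibiting an $\Lprefix[\{\relsymbolR, \relsymbolT\}]$-bisimulation $\bisimZ$ between the pointed structures $(\str{A}, \epsilon)$ and $(\str{B}, \epsilon)$ depicted above the lemma statement, and then invoking the soundness direction of~\cref{lemma:linking-bisimilarity-and-equivalence} to conclude that $\str{A} \equiv_{\Lprefix[\{\relsymbolR, \relsymbolT\}]} \str{B}$. Since $\str{A} \models \varphi$ while $\str{B} \not\models \varphi$ (where $\varphi$ is the formula appearing in the lemma statement), no $\Lprefix$-sentence can be equivalent to $\varphi$, which is exactly what we want.

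The main structural insight I will exploit is that the $(\sigma,n)$-prefix-type of an $n$-tuple records only atoms whose argument sequences are prefixes of $(\varx_1,\ldots,\varx_n)$. In particular, the prefix-type of a $3$-tuple sees only the binary atom $\relsymbolT(\varx_1,\varx_2)$ and the ternary atom $\relsymbolR(\varx_1,\varx_2,\varx_3)$; the binary atom $\relsymbolT(\varx_2,\varx_3)$---the very atom witnessing that $\str{A}$ and $\str{B}$ disagree on $\varphi$---is \emph{not} a prefix-atom and is therefore invisible to $\Lprefix$. This invisibility is what will let us pair the extra $\relsymbolR$-triple $(8,7,1)$ of $\str{B}$ with some matching $\relsymbolR$-triple of $\str{A}$, even though the $\relsymbolT$-values on the $(\varx_2,\varx_3)$-coordinate disagree on the two sides.

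Accordingly, I will define $\bisimZ$ to consist of the pair $(\epsilon,\epsilon)$ together with all pairs of $n$-tuples $(\domelemtuplec,\domelemtupled) \in A^n \times B^n$ whose $(\{\relsymbolR,\relsymbolT\}, n)$-prefix-types agree and whose forthcoming extensions remain compatible. Atomic harmony then holds by construction. The forth condition I would verify by observing that every prefix-extension of a tuple $\domelemtuplec$ of $\str{A}$ can be mirrored in $\str{B}$, since $\str{B}$ contains a faithful copy of the cyclic part of $\str{A}$. The back condition is more delicate: one must mirror any $\str{B}$-extension, including those that pick up the added elements $7$ or $8$. Here I would rely on the fact that every short prefix-type realised in $\str{B}$ (of length at most $3$, the maximum arity in the signature) is also realised in the cyclic part of $\str{A}$, so a matching $\str{A}$-side extension is always available.

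The main obstacle will be the back condition at depth $3$ for extensions that reach the $\relsymbolR$-triple $(8,7,1)$: we have to simultaneously match the ternary $\relsymbolR$-atom on $(\varx_1,\varx_2,\varx_3)$ and the binary $\relsymbolT$-atom on $(\varx_1,\varx_2)$, while being \emph{free to ignore} the atom $\relsymbolT(\varx_2,\varx_3)$. This is exactly where the proof exploits the asymmetry between prefixes and suffixes in $\Lprefix$. Once depth $3$ is handled, extensions to tuples of length $\geq 4$ come for free, since the signature contains no relation of arity greater than three, so no new prefix-atom constraints appear.
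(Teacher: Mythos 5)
Your overall strategy coincides with the paper's: exhibit an $\Lprefix[\{\relsymbolR,\relsymbolT\}]$-bisimulation between the two depicted structures and invoke \cref{lemma:linking-bisimilarity-and-equivalence}. The gap sits exactly where you flag the argument as ``delicate'', namely the back condition, and it cannot be closed for these particular structures. Your justification --- that every prefix-type of length at most $3$ realised in $\str{B}$ is also realised in $\str{A}$ --- is literally true but insufficient: the back move at the root requires that when the spoiler plays the element $7$ of $\str{B}$, the duplicator produce an element $\domeleme$ of $\str{A}$ such that the pair $((\domeleme),(7))$ can itself survive all further forth moves. No such $\domeleme$ exists. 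The element $7$ begins no $\relsymbolT$-edge and no $\relsymbolR$-triple of $\str{B}$, and both of these are prefix-visible properties of an element (through the length-$2$ and length-$3$ prefix-types of its one- and two-step extensions), whereas every element of $A=\{1,\dots,6\}$ begins either a $\relsymbolT$-edge ($2,4,6$) or an $\relsymbolR$-triple ($1,3,5$). Concretely, the $\Lprefix$-sentence $\exists\varx_1\,\bigl(\forall\varx_2\,\neg\relsymbolT(\varx_1,\varx_2)\wedge\forall\varx_2\forall\varx_3\,\neg\relsymbolR(\varx_1,\varx_2,\varx_3)\bigr)$ is true in $\str{B}$ (witnessed by $7$) and false in $\str{A}$, so the two structures are not $\Lprefix[\{\relsymbolR,\relsymbolT\}]$-equivalent and, by the sound direction of \cref{lemma:linking-bisimilarity-and-equivalence}, no bisimulation between them can exist.

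To be fair, this is not a defect you introduced: the paper's own proof, which answers $7$ with $f(7)=1$, founders on the same move, since from the pair $((1),(7))$ the spoiler extends $(1)$ to $(1,2,3)\in\relsymbolR^{\str{A}}$ and the duplicator has no $\relsymbolR$-triple of $\str{B}$ beginning at $7$ to respond with. Your core observation --- that the atom $\relsymbolT(\varx_2,\varx_3)$ is invisible to prefix-types --- is the right idea for the lemma, but to turn it into a proof the witnessing structures must be changed so that the auxiliary elements $7$ and $8$ also have outgoing behaviour (prefix-extensions) matching that of some elements of $\str{A}$; as drawn, the structures in the figure do not witness the claim, and the verification you defer cannot be completed.
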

  \begin{proof}
    Consider $\{ \relsymbolR, \relsymbolT \}$-structures $\str{A}, \str{B}$ defined as:
    \begin{itemize}\itemsep0em
      \item $A = \{ 1,2,3,4,5,6 \}$, $\relsymbolT^{\str{A}} = \{ (2,3), (4,5), (6,1) \}$, $\relsymbolR^{\str{A}} = \{ (1,2,3), (3,4,5), (5,6,1) \}$,
      \item $B = A \cup \{ 7,8 \}$, $\relsymbolT^{\str{B}} = \relsymbolT^{\str{A}}$ and $\relsymbolR^{\str{B}} = \relsymbolR^{\str{A}} \cup \{ (8,7,1) \}$.
    \end{itemize}
    Note that $\str{A}$ satisfies the required property (call it $\varphi$) but $\str{B}$ does not, since $(8,7,1) \in \relsymbolR^{\str{B}}$ but $(7,1) \not\in \relsymbolT^{\str{B}}$.
    Let $f : B \to A$, be identity on $A$, and let $f(7)=1, f(8)=2$ hold.
    We claim that the set $\bisimZ$, defined as $\bisimZ = \bigcup_{n < \omega} \bisimZ_n$, where:
    \begin{itemize}\itemsep0em
      \item $\bisimZ_0 = \{ (\epsilon, \epsilon) \}$,
      \item $\bisimZ_{n{+}1} = \bisimZ_n \cup \{ (\domelemtuplea f(\domelemd), \domelemtuplea \domelemd) \mid \domelemd \in B \}$, 
    \end{itemize}
    is a $\Lprefix[\{ \relsymbolR, \relsymbolT \}]$-bisimulation between $\str{A}$ and $\str{B}$. This implies, by~\cref{lemma:linking-bisimilarity-and-equivalence}, that $\varphi$ is not definable in $\Lprefix[\{ \relsymbolR, \relsymbolT \}]$.

    Note that \ref{bisim:forth} and \ref{bisim:back} are satisfied by the definition of $\bisimZ_{i{+}1}$.
    Checking \ref{bisim:atomic-harmony} boils down to verifying by hand all the possible cases, but there are not some many of them since the $\Laffix\{ \relsymbolR, \relsymbolT \}$-types of tuples longer than $3$ are equal to the $\Laffix\{ \relsymbolR, \relsymbolT \}$-type of their $3$-element prefixes.
  \end{proof}

  \begin{lemma}
  $\forall{\varx_1}{\varx_2}{\varx_3}\; (\relsymbolR(\varx_1,\varx_2,\varx_3) \to \relsymbolS(\varx_1,\varx_2))$ cannot be expressed in $\Gsuffix$.
  \end{lemma}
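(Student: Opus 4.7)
The plan is to reuse the very structures $\str{A}$ and $\str{B}$ depicted just above the previous lemma, but now viewed over the signature $\sigma := \{\relsymbolR, \relsymbolS\}$ (i.e.\ discarding $\relsymbolT$). One easily checks that $\str{A} \models \varphi$, since each $\relsymbolR$-triple $(1,2,3), (3,4,5), (5,6,1)$ has its first two coordinates connected by $\relsymbolS$, while $\str{B} \not\models \varphi$ because of the additional triple $(8,7,1) \in \relsymbolR^{\str{B}}$ together with $(8,7) \notin \relsymbolS^{\str{B}}$. It therefore suffices, by~\cref{lemma:linking-bisimilarity-and-equivalence}, to exhibit a $\Gsuffix[\sigma]$-bisimulation between $\str{A}$ and $\str{B}$.

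To construct it, I would define $f \colon B \to A$ to be the identity on $\{1,\dots,6\}$ with $f(7) := 6$ and $f(8) := 5$, and let $\bisimZ$ consist of all pairs $(f(\domelemtupled), \domelemtupled)$ where $\domelemtupled$ ranges over tuples of length at most $3$ in $B$ (together with $(\epsilon, \epsilon)$). The crucial check is atomic harmony at the ``new'' triple: the $(\sigma, 3)$-suffix-type of $(8,7,1)$ in $\str{B}$ records only the atoms $\relsymbolR(\varx_1, \varx_2, \varx_3)$ and $\relsymbolS(\varx_2, \varx_3)$, since $(\varx_1, \varx_2)$ is a prefix \emph{and not} a suffix of $(\varx_1, \varx_2, \varx_3)$. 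Both $(8,7,1)$ and $f(8,7,1) = (5,6,1)$ realise the same suffix-type, namely $\relsymbolR(\varx_1, \varx_2, \varx_3) \land \neg \relsymbolS(\varx_2, \varx_3)$, precisely because the offending atom $\relsymbolS(\varx_1, \varx_2)$ is invisible to the suffix-type. For all other tuples in $\bisimZ$ the mapping $f$ acts as the identity, so atomic harmony is trivial; singletons carry the empty type since $\sigma$ contains no unary symbol.

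The guarded back and forth conditions reduce to checking that every $\sigma$-live tuple in $\str{B}$ has a $\sigma$-live preimage/image under $f$, which follows by direct inspection: every $\relsymbolS$-pair and every $\relsymbolR$-triple of $\str{B}$ either already lies in $\str{A}$ or equals $(8,7,1)$, mapped to $(5,6,1) \in \relsymbolR^{\str{A}}$. For suffixes of the critical pair $((5,6,1),(8,7,1))$ one observes that the only nontrivial case is the full tuple or the empty suffix, handled as above, while the intermediate suffixes $(6,1)$ and $(7,1)$ admit no $\sigma$-live extensions on either side, so the corresponding clauses are vacuously satisfied. The main obstacle is not conceptual depth but careful bookkeeping of precisely which suffixes of $(8,7,1)$ admit $\sigma$-live witnesses; once that is done, \cref{lemma:linking-bisimilarity-and-equivalence} yields $\str{A} \equiv_{\Gsuffix[\sigma]} \str{B}$, contradicting any purported $\Gsuffix[\sigma]$-definition of $\varphi$.
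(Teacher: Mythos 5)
Your guiding observation --- that the $(\{\relsymbolR,\relsymbolS\},3)$-suffix-type of a triple cannot see the atom $\relsymbolS(\varx_1,\varx_2)$ --- is correct, and it is the same idea the paper itself uses, with the same two structures. Nevertheless the argument has a genuine gap, and it is not one of bookkeeping: the relation you define is not a $\Gsuffix[\{\relsymbolR,\relsymbolS\}]$-bisimulation, and in fact no such bisimulation between these $\str{A}$ and $\str{B}$ exists. A first, local problem is that you place $(f(\domelemtupled),\domelemtupled)$ in $\bisimZ$ for \emph{all} tuples of length at most $3$; then $((5,6),(8,7))\in\bisimZ$, yet a $2$-tuple is a suffix of itself, so its suffix-type does record $\relsymbolS$, and $(5,6)\in\relsymbolS^{\str{A}}$ while $(8,7)\notin\relsymbolS^{\str{B}}$, so \ref{bisim:atomic-harmony} fails (your claim that $f$ acts as the identity ``for all other tuples'' overlooks every tuple containing $7$ or $8$). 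The fatal problem is the pair $((5),(8))$. Some pair $((y),(8))$ must occur in any candidate bisimulation (apply \ref{bisim:gback} at $(\epsilon,\epsilon)$ to the live $1$-tuple $(8)$), and applying \ref{bisim:gback} again at $((y),(8))$ to the live triple $(8,7,1)$ forces $y\in\{1,3,5\}$. But then \ref{bisim:gforth} fails at $((y),(8))$: the tuple $(y,y{+}1)\in\relsymbolS^{\str{A}}$ is a $\sigma$-live tuple extending the suffix $(y)$, whereas no $\sigma$-live $2$-tuple of $\str{B}$ begins with $8$, because no $\relsymbolS$-edge leaves $8$. This is exactly the case that your bookkeeping of ``which suffixes admit live witnesses'' skips: you inspect suffixes of the triple $(8,7,1)$ but never the live extensions of the singleton $(8)$ itself.

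The failure is intrinsic to the structures, not to your verification: they are separated by the $\Gsuffix[\{\relsymbolR,\relsymbolS\}]$-sentence $\forall\varx_1\,\bigl(\exists\varx_2\exists\varx_3\,\relsymbolR(\varx_1,\varx_2,\varx_3)\to\exists\varx_2\,\relsymbolS(\varx_1,\varx_2)\bigr)$ (the atoms serve as their own guards), which is true in $\str{A}$ and false in $\str{B}$ because of the element $8$; by \cref{lemma:linking-bisimilarity-and-equivalence} this rules out any bisimulation. So the counterexample must be re-engineered --- for instance $8$ needs an outgoing $\relsymbolS$-edge to a fresh element whose forward behaviour mimics that of $6$, which in turn forces a live $\relsymbolR$-triple extending that edge, and so on --- rather than merely re-checked. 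Be aware that the proof of this lemma in the paper's appendix uses the same structures (with an even less workable map $f(7)=1$, $f(8)=2$) and omits the verification of \ref{bisim:gforth}, so it is subject to the same objection; you should not calibrate your argument against it.
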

  \begin{proof}
    Similarly to the above lemma, consider $\{ \relsymbolR, \relsymbolS \}$-structures $\str{A}, \str{B}$ defined as:
      \begin{itemize}\itemsep0em
        \item $A = \{ 1,2,3,4,5,6 \}$, $\relsymbolS^{\str{A}} = \{ (1,2), (3,4), (5,6) \}$, $\relsymbolR^{\str{A}} = \{ (1,2,3), (3,4,5), (5,6,1) \}$,
        \item $B = A \cup \{ 7,8 \}$, $\relsymbolS^{\str{B}} = \relsymbolS^{\str{A}}$ and $\relsymbolR^{\str{B}} = \relsymbolR^{\str{A}} \cup \{ (8,7,1) \}$.
      \end{itemize}
    Note that $\str{A}$ satisfies the required property (call it $\psi$) but $\str{B}$ does not, since $(8,7,1) \in \relsymbolR^{\str{B}}$ but $(8,7) \not\in \relsymbolS^{\str{B}}$.
    We show that $\str{A}$ and $\str{B}$ are $\Gsuffix[\{ \relsymbolR, \relsymbolS \}]$-bisimilar.
    Again, it would imply that $\psi$ is not $\Gsuffix[\{ \relsymbolS, \relsymbolR \}]$-definable.

    Let $f:B \to A$ be identity on $A$, and let $f(7) = 1, f(8) = 2$ hold. We define a set $\bisimZ$ by specifying that for every live tuple $\domelemtupleb$ of elements of $B$ the pair $(f[\domelemtupleb],\domelemtupleb)$ is added to $\bisimZ$. We claim that $\bisimZ$ is a $\Gaffix[\{R,S\}]$-bisimulation between $\str{A}$ and $\str{B}$. First, it is straightforward to check that it satisfies \ref{bisim:atomic-harmony}, since there is only a small number of live tuples in $\str{B}$. The fact that $\bisimZ$ satisfies \ref{bisim:back} follows immediately from its construction. Verifying that $\bisimZ$ satisfies \ref{bisim:forth} requires checking a small number of cases and we will omit it here.
\end{proof}

This finishes the exhaustive study of relative expressive power among guarded and unguarded ordered logics.

\section{Appendix for~\cref{sec:interpolation}}

\subsection{Proof of~\cref{lemma:joint-consistency-vs-interpolation}}\label{appendix:lemma:joint-consistency-vs-interpolation}
  \begin{proof}
    For the proof we basically follow the proof of Lemma 2 from the appendix of~\cite{JungW21}.

    Suppose there is an interpolant $\chi \in \logicL[\tau]$ with~$\varphi \models \chi \models \psi$, but $\varphi$ and $\neg \psi$ are jointly $\logicL[\tau]$-consistent.
    By joint-consistency, take $\str{A} \models \varphi$ and $\str{B} \models \neg \psi$.
    Then $\str{A} \models \chi$ ($\chi$ is an interpolant), $\str{B} \models \chi$ (by joint $\logicL[\tau]$-consistency), and hence, $\str{B} \models \psi$ ($\chi$ is an interpolant). 
    A contradiction.

    For the other direction, assume that for all $\str{A}, \str{B}$ satisfying $\str{A} \models \varphi$ and $\str{B} \models \neg\psi$ we have $\str{A} \not\bisimto_{\logicL[\tau]} \str{B}$. 
    Now take $\Phi := \{ \lambda \in \logicL[\tau] \mid \varphi \models \lambda \}$ to be the set of $\logicL[\tau]$-consequences of $\varphi$. 
    Obviously $\varphi \models \Phi$. 
    We will also show that $\Phi \models \psi$. 
    Let $\str{B}$ be a model of $\Phi$ and let $\Omega := \{ \chi \in \logicL[\tau] \mid \str{B} \models \chi \} \cup \{ \varphi \}$. 
    By compactness and the definition of $\Omega$, we conclude that $\Omega$ is satisfiable and let $\str{A}$ be its model. 
    Now, we pass to $\omega$-saturated extensions $\hat{\str{A}}, \hat{\str{B}}$ of $\str{A}, \str{B}$.
    Note that $\hat{\str{A}} \equiv_{\logicL[\tau]} \hat{\str{B}}$, and hence we conclude $\hat{\str{A}} \bisimto_{\logicL[\tau]} \hat{\str{B}}$ (by~\cref{lemma:linking-bisimilarity-and-equivalence}).
    By our initial assumption we conclude $\str{B} \models \psi$.
    Hence, we proved that $\Phi \models  \psi$.
    By compactness there is a finite subset $\Phi_0$ of $\Phi$ that entails $\psi$, and $\bigwedge \Phi_0$ is the desired interpolant for $\phi \models \psi$.
  \end{proof}


\subsection{Proof of~\cref{lemma:normal-forms}}\label{appendix:sec:normal-forms}

Let $\varphi(\vartuplex), \psi(\vartuplex) \in \logicL$, where $\logicL \in \{\Lprefix, \Gaffix\}$. 
Suppose that there exists structures $\str{A}$ and $\str{B}$ such that $\str{A} \models \varphi(\domelemtuplea), \str{B} \models \psi(\domelemtupleb)$ and $(\str{A},\domelemtuplea) \bisimto_{\logicL[\tau]} (\str{B},\domelemtupleb)$. 
We start by replacing $\varphi(\vartuplex)$ and $\psi(\vartuplex)$ with the formulae $\relsymbolH(\vartuplex) \land \forall \vartuplex (\relsymbolH(\vartuplex) \to \varphi(\vartuplex))$ and $\relsymbolH(\vartuplex) \land \forall \vartuplex (\relsymbolH(\vartuplex) \to \psi(\vartuplex))$ respectively. 
Letting $\varphi'(\vartuplex)$ and $\psi'(\vartuplex)$ denote the resulting formulae, we clearly have that they entail, respectively, $\varphi(\vartuplex)$ and $\psi(\vartuplex)$.
Take $\str{A}'$ and $\str{B}'$ to be the extensions of $\str{A}$ and $\str{B}$ obtained by setting $\relsymbolH^{\str{A}'} = \{\domelemtuplea\}$ and $\relsymbolH^{\str{B}'} = \{\domelemtupleb\}$.
Hence, by bisimilarity of $(\str{A}, \domelemtuplea)$ and $(\str{B}, \domelemtupleb)$ and the way we interpret $\relsymbolH$, we infer $(\str{A}', \domelemtuplea) \bisimto_{\logicL[\tau \cup \{\relsymbolH\}]} (\str{B}', \domelemtupleb)$. 
What remains to be done is to convert the subsentences $\forall \vartuplex (\relsymbolH(\vartuplex) \to \varphi(\vartuplex))$ and $\forall \vartuplex (\relsymbolH(\vartuplex) \to \psi(\vartuplex))$ of $\varphi'(\vartuplex)$ and $\psi'(\vartuplex)$, respectively, into normal form, and to appropriately extend the models $\str{A}'$ and $\str{B}'$.

Let $\varphi := \forall \vartuplex (\relsymbolH(\vartuplex) \to \varphi(\vartuplex))$ and $\psi := \forall \vartuplex (\relsymbolH(\vartuplex) \to \psi(\vartuplex))$. 
We start with the case where $\varphi, \psi \in \Gaffix$. 
We repeat the following rewriting process on $\varphi$: take any maximally nested subformula $\chi$ of $\varphi$, with the maximal block of quantifiers, having the form
\[
        \chi := \exists{\vartuplexfromto{\ell}{k}} \; (\relsymbolS(\vartuplexfromto{r}{k}) \land \lambda(\vartuplexfromto{r}{k})),
\]
where $r \leq \ell$ and $\lambda \in \Gaffix$ is quantifier-free. 
We first discuss the case when $\chi$ is a sentence. 
We then check whether $\str{A} \models \chi$: if so, then we substitute $\chi$ in $\varphi$ with $\top$ and append $\chi$ as a conjunct to the resulting sentence in normal form.
Otherwise, $\chi$ is substituted by $\bot$ and the appended formula will be $\neg\chi$.

Next, we discuss the case when $\chi$ is not a sentence. 
Since $\varphi$ is a sentence, $\chi$ must occur in a scope of an another formula of the form
\[
	\chi' := \exists{\vartupleyfromto{\ell'}{k'}} \; (\relsymbolS'(\vartupleyfromto{s}{k'}) \land \lambda'(\vartupleyfromto{s}{k'})),
\]
where $s \leq \ell'$ and $\vartupleyfromto{s}{k'}$ contains $\vartuplexfromto{r}{\ell {-} 1}$ as an affix. 
Letting $\relsymbolS'$ denote the guard of the innermost such formula, we replace $\varphi$ with the following sentence
\[
	\varphi[\chi(\vartuplexfromto{r}{\ell {-} 1})/\relsymbolR(\vartuplexfromto{r}{\ell {-} 1})] \land \forall \vartuplexfromto{1}{\ell {-} r {-} 1}(\relsymbolR(\vartuplexfromto{1}{\ell {-} r {-} 1}) \to \exists \vartuplexfromto{\ell {-} r}{k {-} r}(\relsymbolS(\vartuplexfromto{1}{k{-}r}) \land \lambda (\vartuplexfromto{1}{k-r})))
\]
\[
	\land \forall \vartuplexfromto{1}{k' {-} s}(\relsymbolS'(\vartuplexfromto{1}{k' {-} s}) \to (\neg \relsymbolR(\vartuplexfromto{r {-} s}{\ell {-} s {-} 1}) \to \forall \vartuplexfromto{\ell {-} s}{k {-} s} (\relsymbolS(\vartuplexfromto{r {-} s}{k {-} s}) \to \neg \lambda(\vartuplexfromto{r {-} s}{k {-} s})))),
\]
where $\relsymbolR$ is a fresh relation symbol and $\varphi[\chi(\vartuplexfromto{r}{\ell {-} 1})/\relsymbolR(\vartuplexfromto{r}{\ell {-} 1})]$ is the sentence obtained from $\varphi$ by replacing the subformula $\chi(\vartuplexfromto{r}{\ell {-} 1})$ with the atom $\relsymbolR(\vartuplexfromto{r}{\ell {-} 1})$.

The above rewriting is applied until we arrive at a sentence $\varphi'$ which is in normal form~\ref{NForm-Gaffix}.
It is easy to see that $\varphi' \models \varphi$ and that there exists an extension $\str{A}''$ of $\str{A}'$ which is a model of $\varphi'$. 
Similarly, by repeating the above rewriting on $\psi$, we arrive at a sentence $\psi'$ in normal form~\ref{NForm-Gaffix} with the property that $\psi' \models \psi$. 
As in the case of $\varphi'$, there exists an extension $\str{B}''$ of $\str{B}'$ which is a model of $\psi'$. 
Note that we were always using fresh auxiliary relation symbols $\relsymbolR$ during the rewriting processes. 
Hence, this guarantees that  $\sig(\varphi') \cap \sig(\psi') = \tau \cup \{ \relsymbolH \}$ and thus also $(\str{A}'', \domelemtuplea) \bisimto_{\logicL[\tau \cup \{\relsymbolH\}]} (\str{B}'', \domelemtupleb)$.

The proof in the case of $\Lprefix$ is mostly analogous to the case of $\Gaffix$, the main difference being that we will be replacing subformulae of the form $\chi := \exists \varx_k \lambda(\vartuplexfromto{1}{k})$
with fresh atomic formulae. 
Otherwise the proof in the case of $\Gaffix$ goes through also in the case of $\Lprefix$.

\subsection{Proof of~\cref{lemma:aux-lemma-for-interpolation}}\label{appendix:lemma:aux-lemma-for-interpolation}
\begin{proof}
  Suppose that there are $\logicL$-formulae $\varphi(\vartuplex), \psi(\vartuplex)$ such that $\varphi(\vartuplex) \models \psi(\vartuplex)$, but there is no interpolant for this entailment. 
  By~\cref{lemma:joint-consistency-vs-interpolation}, $\varphi(\vartuplex)$ and $\neg \psi(\vartuplex)$ are jointly consistent.
  Using~\cref{lemma:normal-forms}, we can deduce that there are two $\logicL$-formulae $\varphi'(\vartuplex)$ and $\psi'(\vartuplex)$ which are jointly consistent and which entail $\varphi(\vartuplex)$ and
  $\neg \psi(\overline{x})$ respectively. 
  Hence, we infer that there is a model $\str{U} \models \varphi'(\vartuplex) \land \psi'(\vartuplex)$.
  Thus $\varphi(\vartuplex) \land \neg \psi(\vartuplex)$ is satisfiable, contradicting $\varphi(\vartuplex) \models \psi(\vartuplex)$.
\end{proof}


\newcommand{\Gaifmanof}[1]{\str{G}_{#1}}
\newcommand{\pathrho}{\rho}
\newcommand{\last}{\mathsf{last}}
\newcommand{\PathsGaifmanof}[1]{\mathsf{Paths}_{\Gaifmanof{#1}}}
\newcommand{\vertexu}{\domelem{u}}
\newcommand{\vertexv}{\domelem{v}}
\newcommand{\vertexw}{\domelem{w}}
\newcommand{\unrav}[1]{{#1}^{\leadsto}}

\subsection{Proof of~\cref{lemma:hat-vs-jointconsistency}}\label{appendix:sec:proof-unrav}
A bit of preliminaries first.

Given a tuple $\domelemtupled$ let $\last(\domelemtupled)$ denote the last element of $\domelemtupled$.
Moreover, given a tuple of tuples/sequences $\domelemtupled$ we denote with $\last[\domelemtupled]$ the tuple obtained by ``mapping'' the last function to every element of $\domelemtupled$, namely the tuple $(\last(\domelemtupled_1), \last(\domelemtupled_2), \ldots)$.

The \emph{forward Gaifman graph} $\Gaifmanof{\str{A}} = (V_{\Gaifmanof{\str{A}}}, E_{\Gaifmanof{\str{A}}})$ of a structure $\str{A}$ is a directed graph with the set of nodes equal to $A$ and with directed edges between $\domelemd$ and $\domeleme$ whenever there are (possibly empty) tuples of elements $\domelemtuplec, \domelemtuplef$ and a relational symbol $\relsymbolR$ witnessing $(\domelemtuplec, \domelemd, \domeleme, \domelemtuplef) \in \relsymbolR^{\str{A}}$.
A~path $\pathrho := \vertexv_1 \vertexv_2 \ldots \vertexv_n$ in the forward Gaifman graph $\Gaifmanof{\str{A}}$ of $\str{A}$ is a finite non-empty word from $(V_{\Gaifmanof{\str{A}}})^+$ in which two consecutive positions $\vertexv_i, \vertexv_{i{+}1}$ are connected by a directed edge in~$\Gaifmanof{\str{A}}$. 
The set of all paths in $\Gaifmanof{\str{A}}$ is denoted by $\PathsGaifmanof{\str{A}}$.

We now can proceed with a suitable notion of unravelling:

\begin{definition}\label{def:unravelling}
    A \emph{HAH-unravelling} of a structure $\str{A}$ is a structure $\unrav{\str{A}}$ with 
    the domain $\PathsGaifmanof{\str{A}}$ and with interpretation of relations defined as follows: for each $k$-ary symbol $\relsymbolR$ we put $\domelemtuplecfromto{1}{k} \in \relsymbolR^{\unrav{\str{A}}}$ iff $(\last(\domelemtuplec_1), \last(\domelemtuplec_2), \ldots, \last(\domelemtuplec_k)) \in \relsymbolR^{\str{A}}$.
\end{definition}

We next observe that the HAH-unravellings of countable structures produce HAHs. 
\begin{lemma}\label{lemma:unravelling-makes-HATs}
Given a countable $\str{A}$ with the domain $A \subseteq \N$, the structure $\unrav{\str{A}}$ is a HAH.\@
\end{lemma}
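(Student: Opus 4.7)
The plan is to verify the two defining conditions of a HAH directly from the definitions: the extended structure $\unrav{\str{A}} \cup \{\varepsilon\}$ must have a prefix-closed domain in $\N^*$, and its relational tuples must extend level-by-level.

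First, since $A \subseteq \N$, every path $\pathrho = \vertexv_1 \vertexv_2 \ldots \vertexv_n \in \PathsGaifmanof{\str{A}}$ is literally a nonempty word in $\N^+$, so the extended domain $\PathsGaifmanof{\str{A}} \cup \{\varepsilon\}$ sits inside $\N^*$. For prefix-closure, I would argue that the only prefix of $\varepsilon$ is itself, while any nonempty prefix $\vertexv_1 \ldots \vertexv_i$ of a path $\pathrho$ is again a path of $\Gaifmanof{\str{A}}$: the edges witnessing adjacency of consecutive vertices of $\pathrho$ at positions below $i$ restrict, verbatim, to edges witnessing that $\vertexv_1 \ldots \vertexv_i$ is a walk in the forward Gaifman graph.

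Second, I would verify the level-by-level condition on the relations. By~\cref{def:unravelling}, a tuple $(\domelemtupled_1, \ldots, \domelemtupled_k)$ belongs to $\relsymbolR^{\unrav{\str{A}}}$ exactly when $(\last(\domelemtupled_1), \ldots, \last(\domelemtupled_k)) \in \relsymbolR^{\str{A}}$ and, following the standard tree-unraveling convention implicit in the definition, when consecutive path-entries extend each other by one vertex. Reading off the latter, we obtain $\domelemtupled_{i+1} = \domelemtupled_i \cdot n_i$ with $n_i := \last(\domelemtupled_{i+1}) \in \N$ for every $i < k$, which is precisely the required HAT extension property. Combining the two ingredients shows that $\unrav{\str{A}} \cup \{\varepsilon\}$ is a HAT, and hence $\unrav{\str{A}}$ itself is a HAH.

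The only point that warrants care is pinning down which path-tuples are admitted into the relations of $\unrav{\str{A}}$; since the HAT axioms already force consecutive entries of any relational tuple to extend by a single symbol, this implicit constraint is the only coherent reading of~\cref{def:unravelling}, and no substantive obstacle arises beyond routine bookkeeping on prefixes and last-elements.
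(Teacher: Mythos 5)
Your proof is correct and follows essentially the same direct verification as the paper's own (two-sentence) argument: non-empty prefixes of paths are again paths, so the domain extended by $\varepsilon$ is prefix-closed in $\N^*$, and the relational tuples extend level-by-level by construction. You are also right to flag that \cref{def:unravelling} yields the lemma only under the implicit convention that a relational tuple of paths consists of successively-extending paths (each entry being the previous one with a single vertex appended) — without that reading the statement would fail, and the paper's proof silently assumes it.
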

\begin{proof}
    Note that the domain of $\unrav{\str{A}}$ is a prefix closed (since the set of all paths is closed under taking subpaths) subset of $\N^+$ (since $A \subseteq \N$). 
    Moreover, the requirement on interpreting relations (namely the condition from~\cref{def:hat}) is fulfilled by the way how we define relations in the unravelled structure.
\end{proof}

A crucial property of unravellings is their preservation of infix-types, as stated below.
\begin{lemma}\label{lemma:unravelling-preserves-types}
    Let $(\str{A}, \domelema)$ be a pointed $\tau$-structure, and be its unravelling $\str{B} := \unrav{\str{A}}$.
    Then for all $\sigma$-live tuples $\domelemtupled \sqin B$ we have $\tp{\Laffix[\sigma]}{\str{A}}{\last[\domelemtupled]} = \tp{\Laffix[\sigma]}{\str{B}}{\domelemtupled}$.
\end{lemma}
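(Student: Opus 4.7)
The plan is to derive the claim directly from the definition of $\unrav{\str{A}}$, exploiting that the affix-type is a purely quantifier-free piece of data determined by which atoms $\relsymbolR(\domelemtupledfromto{l}{k})$ hold on affixes of the tuple. Since $\unrav{\str{A}}$ was set up so that atomic predicates are exactly pulled back through the $\last$ operation, preservation of affix-types will be essentially automatic.

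First, I will unfold~\cref{def:unravelling} to record the key atomic compatibility statement: for every $k$-ary $\relsymbolR \in \tau$ and every $k$-tuple of paths $\domelemtuplec = (\domelemtuplec_1, \ldots, \domelemtuplec_k) \sqin B$, we have $\domelemtuplec \in \relsymbolR^{\str{B}}$ if and only if $(\last(\domelemtuplec_1), \ldots, \last(\domelemtuplec_k)) \in \relsymbolR^{\str{A}}$, which under the convention for $\last[\cdot]$ reads $\last[\domelemtuplec] \in \relsymbolR^{\str{A}}$.

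Next, fix a $\sigma$-live tuple $\domelemtupled \sqin B$ of length $n$ and recall that $\tp{\Laffix[\sigma]}{\str{B}}{\domelemtupled}$ is the unique $(\sigma,n)$-affix-type realised by $\domelemtupled$ in $\str{B}$, namely a conjunction that ranges over all $\relsymbolR \in \sigma$ and all affixes $\vartuplexfromto{l}{k}$ of $\vartuplexfromto{1}{n}$ of length $\arity(\relsymbolR)$, and includes either $\relsymbolR(\vartuplexfromto{l}{k})$ or its negation. Applying the atomic compatibility statement above to each affix $\domelemtupledfromto{l}{k}$ of $\domelemtupled$ yields that $\domelemtupledfromto{l}{k} \in \relsymbolR^{\str{B}}$ iff $\last[\domelemtupledfromto{l}{k}] \in \relsymbolR^{\str{A}}$. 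Since $\last[\domelemtupledfromto{l}{k}]$ is, by the entrywise definition of $\last[\cdot]$, precisely the $[l,k]$-affix of $\last[\domelemtupled]$, the two conjunctions defining $\tp{\Laffix[\sigma]}{\str{B}}{\domelemtupled}$ and $\tp{\Laffix[\sigma]}{\str{A}}{\last[\domelemtupled]}$ coincide conjunct by conjunct, and the identity of affix-types follows.

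There is no real obstacle here: the statement is essentially definition-chasing, and the whole argument boils down to the atomic compatibility observation together with the trivial commutation of $\last[\cdot]$ with taking affixes. The only design choice to highlight is that we do not need any induction on formula structure, because $\Laffix$-affix-types collapse to quantifier-free information on affixes, and our unravelling is built precisely to transport such information faithfully from $\str{A}$ to $\str{B}$.
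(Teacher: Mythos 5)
Your overall strategy---reduce the identity of affix-types to an atomic compatibility statement and observe that $\last[\cdot]$ commutes with taking affixes---is the same as the paper's. The gap is in the atomic compatibility statement itself: you assert, for \emph{every} $k$-tuple of paths $\domelemtuplec \sqin B$, that $\domelemtuplec \in \relsymbolR^{\str{B}}$ iff $\last[\domelemtuplec] \in \relsymbolR^{\str{A}}$, and as a consequence you never use the hypothesis that $\domelemtupled$ is $\sigma$-live. That hypothesis is not decorative. On the reading of \cref{def:unravelling} under which \cref{lemma:unravelling-makes-HATs} holds---i.e.\ under which $\unrav{\str{A}}$ really is a HAH---a tuple of paths can belong to $\relsymbolR^{\str{B}}$ only if it additionally satisfies the successor condition of \cref{def:hat}, namely that each path in the tuple extends the previous one by a single letter. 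For tuples violating that condition the ``if'' direction of your biconditional fails: take $A = \{\domelema,\domelemb\}$ with $(\domelema,\domelemb) \in \relsymbolR^{\str{A}}$ and consider the pair of one-letter paths $(\domelema,\domelemb)$ in $B$; its $\last$-image lies in $\relsymbolR^{\str{A}}$, yet the pair cannot lie in $\relsymbolR^{\str{B}}$ without destroying the HAH structure. The same pair shows that the lemma itself is false for non-$\sigma$-live tuples, so any correct proof must invoke liveness at some point.

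The repair is precisely the first step of the paper's proof. Since $\domelemtupled$ is $\sigma$-live, either $|\domelemtupled| \le 1$ or $\domelemtupled \in \relsymbolR^{\str{B}}$ for some $\relsymbolR \in \sigma$; in the latter case, because $\str{B}$ is a HAH, $\domelemtupled$ satisfies the successor condition, and hence so does every affix $\domelemtupledfromto{l}{k}$ of $\domelemtupled$ (affixes are contiguous sub-tuples). For such tuples, and only for such tuples, membership in $\relsymbolS^{\str{B}}$ is equivalent in both directions to membership of the $\last$-image in $\relsymbolS^{\str{A}}$, and from that point your conjunct-by-conjunct comparison of the two affix-types goes through verbatim.
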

    \begin{proof}
    Let $\domelemtuplee$ be an infix of $\domelemtupled$. 
    Since $\domelemtupled$ is $\sigma$-live it satisfies the condition on interpretation of relations from~\cref{def:unravelling}, thus so does $\domelemtuplee$.
    By definition of HAH it follows that $\domelemtuplee \in \relsymbolS^{\str{B}}$ iff $\last[\domelemtuplee] \in \relsymbolS^{\str{A}}$. 
    By the choice of $\domelemtuplee$ we infer the desired equality $\tp{\Laffix[\sigma]}{\str{A}}{\last[\domelemtupled]} = \tp{\Laffix[\sigma]}{\str{B}}{\domelemtupled}$.
\end{proof}

Now we can show that HAH-unravellings are bisimilarity preserving.
\begin{lemma}\label{lemma:aux-hat-bisimilation-between-restricted-structures}
For every pointed $\sigma$-structure $(\str{A}, \domelemtuplea)$ with a $\sigma$-live $\domelemtuplea$ we have that $\str{A} \bisimto_{\Ginfix[\sigma]} (\unrav{\str{A}}, \domelemtupleb)$, where $\domelemtupleb$ is the unique $|\domelemtuplea|$-tuple satisfying $\domelemtupleb_i = \domelemtupleafromto{1}{i}$ for all $1 \leq i \leq |\domelemtuplea|$.
\end{lemma}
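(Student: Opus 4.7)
The plan is to exhibit an explicit $\Ginfix[\sigma]$-bisimulation $\bisimZ$ that ``lifts'' every $\sigma$-live tuple of $\unrav{\str{A}}$ back to its componentwise image under $\last$ in $\str{A}$. Concretely, I set
$$\bisimZ := \{(\last[\domelemtupled], \domelemtupled) \mid \domelemtupled \sqin \unrav{A} \text{ and } \domelemtupled \text{ is $\sigma$-live in } \unrav{\str{A}}\}.$$
First I check $(\domelemtuplea, \domelemtupleb) \in \bisimZ$. Clearly $\last[\domelemtupleb] = \domelemtuplea$; since $\domelemtuplea$ is $\sigma$-live there is $\relsymbolR \in \sigma$ with $\domelemtuplea \in \relsymbolR^{\str{A}}$ (the case $|\domelemtuplea|\leq 1$ is immediate). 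Each $\domelemtupleb_i = \domelemtupleafromto{1}{i}$ is then a path in $\Gaifmanof{\str{A}}$, because any two consecutive entries of a tuple lying in $\relsymbolR^{\str{A}}$ are by definition adjacent in the forward Gaifman graph; hence $\domelemtupleb$ lives in $\unrav{A}^{|\domelemtuplea|}$, and \cref{def:unravelling} yields $\domelemtupleb \in \relsymbolR^{\unrav{\str{A}}}$, so $\domelemtupleb$ is $\sigma$-live.

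Atomic harmony of $\bisimZ$ is direct from \cref{lemma:unravelling-preserves-types}. For \ref{bisim:gback}, given $(\last[\domelemtupled], \domelemtupled) \in \bisimZ$, an infix $\domelemtupledfromto{i}{j}$ and a $\sigma$-live $\domelemtupleg$ in $\unrav{\str{A}}$ with $\domelemtuplegfromto{1}{j{-}i{+}1} = \domelemtupledfromto{i}{j}$, I simply set $\domelemtuplee := \last[\domelemtupleg]$. By \cref{def:unravelling}, $\sigma$-liveness of $\domelemtupleg$ transfers to $\domelemtuplee$; the pair $(\domelemtuplee,\domelemtupleg) = (\last[\domelemtupleg], \domelemtupleg)$ lies in $\bisimZ$, and $\domelemtupleefromto{1}{j{-}i{+}1} = \last[\domelemtupledfromto{i}{j}]$ as required.

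The only nontrivial part is \ref{bisim:gforth}. Given $(\last[\domelemtupled], \domelemtupled) \in \bisimZ$, an infix $\domelemtupledfromto{i}{j}$ of length $k := j{-}i{+}1$, and a $\sigma$-live tuple $\domelemtuplee \in A^m$ such that $\last[\domelemtupledfromto{i}{j}] = \domelemtupleefromto{1}{k}$, I construct the required lift $\domelemtupleg \in (\unrav{A})^m$ inductively by $\domelemtupleg_\ell := \domelemtupled_{i+\ell-1}$ for $1 \leq \ell \leq k$, and $\domelemtupleg_\ell := \domelemtupleg_{\ell-1}\cdot \domeleme_\ell$ for $k < \ell \leq m$ (in the empty-affix case $k=0$, I instead initialise with the one-vertex path $\domelemtupleg_1 := (\domeleme_1)$). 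The main obstacle is verifying that each $\domelemtupleg_\ell$ is really an element of $\unrav{A}$, i.e.\ a path in $\Gaifmanof{\str{A}}$: this is exactly where the design of the forward Gaifman graph pays off, since any $\relsymbolR \in \sigma$ witnessing $\sigma$-liveness of $\domelemtuplee$ supplies, by definition of $\Gaifmanof{\str{A}}$, a directed edge between every two consecutive entries of $\domelemtuplee$. Finally, $\last[\domelemtupleg] = \domelemtuplee \in \relsymbolR^{\str{A}}$ combined with \cref{def:unravelling} forces $\domelemtupleg \in \relsymbolR^{\unrav{\str{A}}}$, so $\domelemtupleg$ is $\sigma$-live, its prefix of length $k$ coincides with $\domelemtupledfromto{i}{j}$ by construction, and $(\last[\domelemtupleg], \domelemtupleg) = (\domelemtuplee, \domelemtupleg) \in \bisimZ$ closes the verification.
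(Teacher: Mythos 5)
Your proposal is correct and follows essentially the same route as the paper: the paper's proof consists of the single observation that the set of pairs $(\domelemtupled,\last[\domelemtupled])$ over $\sigma$-live $\domelemtupled$ is a $\Ginfix[\sigma]$-bisimulation (citing the type-preservation lemma for atomic harmony), which is exactly the relation you define. Your write-up merely spells out the \ref{bisim:gforth}/\ref{bisim:gback} verifications that the paper leaves implicit, and does so correctly.
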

\begin{proof}
By~\cref{lemma:unravelling-preserves-types} it is immediate to check that the set $\bisimZ$ composed of all pairs of the form $(\domelemtupled, \last[\domelemtupled])$ is a $\Ginfix$-bisimulation between $\unrav{\str{A}}$ and $\str{A}$.
\end{proof}

What remains to be done is to show~\cref{lemma:hat-vs-jointconsistency}.
\begin{proof}
Take $\str{A} \models \varphi(\domelemtuplea)$ and $\str{B} \models \psi(\domelemtupleb)$ such that $(\str{A},\domelemtuplea) \bisimto_{\Gaffix[\sigma]} (\str{B}, \domelemtupleb)$.
Let $\str{C}, \str{D}$ be HAT-unravellings, respectively, of $\str{A}$ and $\str{B}$. 
Take a $|\domelemtuplea|$-tuple $\domelemtuplec$ (resp. $\domelemtupled$) defined as $\domelemtuplec_i := \domelemtupleafromto{1}{i}$ (resp. $\domelemtupled_i := \domelemtuplebfromto{1}{i}$) for all $1 \leq i \leq |\domelemtuplea|$. Note that $\domelemtuplec \sqin C$ (resp. $\domelemtuplec \sqin D$) since $\domelemtuplea \in \relsymbolH^{\str{A}}$ (resp. $\domelemtupleb \in \relsymbolH^{\str{B}}$).
By~\cref{lemma:aux-hat-bisimilation-between-restricted-structures} we infer $(\str{A}, \domelemtuplea) \bisimto_{\Gaffix[\sig(\varphi)]} (\str{C}, \domelemtuplec)$ and $(\str{B}, \domelemtupleb) \bisimto_{\Gaffix[\sig(\psi)]} (\str{D}, \domelemtupled)$.

To complete the proof it suffices to show the following two properties:
\begin{itemize}
    \item $(\str{C}, \domelemtuplec) \models \varphi$ and $(\str{D}, \domelemtupled) \models \psi$.\\
    Follows from~\cref{lemma:linking-bisimilarity-and-equivalence}.

    \item  $(\str{C}, \domelemtuplec) \bisimto_{\Gaffix[\sigma]} (\str{D}, \domelemtupled)$.\\
    As $\bisimto_{\Gaffix[\sigma]} \supseteq \bisimto_{\Gaffix[\sig(\varphi)]}$ and $\bisimto_{\Gaffix[\sigma]} \supseteq \bisimto_{\Gaffix[\sig(\psi)]}$ we infer $(\str{A}, \domelemtuplea) \bisimto_{\Gaffix[\sigma]} (\str{C}, \domelemtuplec)$ and $(\str{B}, \domelemtupleb) \bisimto_{\Gaffix[\sigma]} (\str{D}, \domelemtupled)$. 
    Now, by applying symmetry and transitivity of $\bisimto$ we are done.
\end{itemize}
This completes the proof. 
\end{proof}

\subsection{Missing details from the proof of~\cref{thm:FL-and-FF-doesnt-have-CIP}}\label{appendix:thm:FL-and-FF-doesnt-have-CIP} 
  Recall that 
  \[ 
  \varphi :=  \forall{\varx_1}\forall{\varx_2}\forall{\varx_3}[(\relsymbolR(\varx_1,\varx_2) \land \relsymbolR(\varx_2,\varx_3)) \to (\relsymbolP_1(\varx_1) \land \relsymbolP_2(\varx_3))] \land \; \forall{\varx_1}\forall{\varx_2}[(\relsymbolP_1(\varx_1) \land \relsymbolP_2(\varx_2)) \to \relsymbolR(\varx_1,\varx_2)],
  \]
  \[
  \psi := \exists{\varx_1}\exists{\varx_2} \exists{\varx_3} [\relsymbolR(\varx_1,\varx_2) \land \relsymbolR(\varx_2,\varx_3) \land \relsymbolQ_1(\varx_1) \land \relsymbolQ_2(\varx_3)] \land \;  \forall{\varx_1}\forall{\varx_2}[(\relsymbolQ_1(\varx_1) \land \relsymbolQ_2(\varx_2)) \to \neg\relsymbolR(\varx_1,\varx_2)],
  \]
  and let us define $(\sig(\varphi) \cap \sig(\psi)) = \{ \relsymbolR \}$-structures $\str{A}, \str{B}$ with the domains $A=\{\domelema, \domelemb, \domelemc\}, B=\{1,2,3\}$ and $\relsymbolR^{\str{A}} =  \{(\domelema,\domelemb),(\domelemb,\domelemc),(\domelema,\domelemc),(\domelemc,\domelemc)\}$ and $\relsymbolR^{\str{B}} =  \{(1,2),(2,3),(3,3)\}$, as depicted below:

  \begin{figure}[H]
    \centering
    \begin{tikzpicture}[transform shape]
        \node[] at (1.5, 0.5) {$\str{A} :=$};
        \draw (2, 0) node[ptrond, label=center:\small{$\domelema$}] (a) {};
        \draw (4, 0) node[ptrond, label=center:\small{$\domelemb$}] (b) {};
        \draw (6, 0) node[ptrond, label=center:\small{$\domelemc$}] (c) {};

        \path[->] (a) edge node[yshift=-6] {\( \relsymbolR \)} (b); 
        \path[->] (b) edge node[yshift=-6] {\( \relsymbolR \)} (c); 
        \path[->] (a) edge[bend left=30] node[yshift=6] {\( \relsymbolR \)} (c); 
        \path[->] (c) edge[loop above] node[] {\( \relsymbolR \)} (c); 

        \node[] at (7.5, 0.5) {$\str{B} :=$};
        \draw (8, 0) node[ptrond, label=center:\small{$1$}] (one) {};
        \draw (10, 0) node[ptrond, label=center:\small{$2$}] (two) {};
        \draw (12, 0) node[ptrond, label=center:\small{$3$}] (three) {};

        \path[->] (one) edge node[yshift=-6] {\( \relsymbolR \)} (two); 
        \path[->] (two) edge node[yshift=-6] {\( \relsymbolR \)} (three); 
        \path[->] (three) edge[loop above] node[] {\( \relsymbolR \)} (three); 

        \draw (2, -0.5) node[label=center:\small{$\relsymbolP_1$}] (below_a) {};

        \draw (4, -0.5) node[label=center:\small{$\relsymbolP_1$}] (below_b) {};

        \draw (6, -0.5) node[label=center:\small{$\relsymbolP_1, \relsymbolP_2$}] (below_c) {};

        \draw (8, -0.5) node[label=center:\small{$\relsymbolQ_1$}] (below_one) {};

        \draw (12, -0.5) node[label=center:\small{$\relsymbolQ_2$}] (below_three) {};
    \end{tikzpicture}
  \end{figure}

  Let us first verify that $\str{A}$ can be expanded to a $\sig(\varphi)$-model $\str{A}' \models  \varphi$, and $\str{B}$ can be expanded to a $\sig(\psi)$-model $\str{B}' \models \psi$. 
  First, we can obtain a suitable extension $\str{A}'$ of $\str{A}$ by defining $\relsymbolP_1^{\str{A}'} = \{\domelema,\domelemb,\domelemc\}$ and $\relsymbolP_2^{\str{A}'} = \{ \domelemc \}$.
  To obtain a suitable extension $\str{B}'$ of $\str{B}$ we can define $\relsymbolQ_1^{\str{B}'} = \{1\}$ and $\relsymbolQ_2^{\str{B}'} = \{3\}$. 
  A straightforward calculation reveals that $\str{A}'$ and $\str{B}'$ are models of $\varphi$ and $\psi$ respectively.
  To the end of this proof we will show that $\str{A} \bisimto_{\Linfix}^{\{\relsymbolR\}} \str{B}$, which implies the non-existence of an $\Linfix[\sig(\varphi) \cap \sig(\varphi)]$ interpolant for $\varphi \models \neg \psi$, since any such interpolant $\chi$ would be true in $\str{A}$ and false in $\str{B}$.  

  To define bisimulation $\bisimZ = \bigcup_{n < \omega} \bisimZ_n$ we proceed as follows. 
  Obviously $\bisimZ_0 := \{ (\epsilon, \epsilon) \}$. 
  To define $\bisimZ_1$ we think about the game: we reply against $\domelema/\domelemb$ with $2$, against $\domelemc$ (resp. $3$) with $3$ (resp. $\domelemc$), and against the other elements with $\domelemb$.
  Thus $\bisimZ_1 = \{ (\domelema, 2), (\domelemb, 2), (\domelemc, 3), (\domelemb, 1) \}$.
  With such $\bisimZ_1$ it is immediate that \ref{bisim:back}, \ref{bisim:forth} and \ref{bisim:atomic-harmony} are satisfied for $(\epsilon, \epsilon)$.
  Then the general strategy for other rounds for the duplicator is as follows:
  \begin{itemize}\itemsep0em
  \item If the previous position was $\domelema, 1$ then reply according to the following map: $\domelema \mapsto 1$, $\domelemb / \domelemc \mapsto 2$, $1 / 3 \mapsto \domelema$, $2 \mapsto \domelemb$.
  \item If the previous position was $\domelemb, 2/3$ then reply according to the following map: $\domelema \mapsto 1$, $\domelemb / \domelemc \mapsto 3$, $1 / 2 \mapsto \domelema$, $3 \mapsto \domelemc$.
  \item If the previous position was $\domelemb, 1$ then reply according to the following map: $\domelema / \domelemb \mapsto 1$, $\domelemc \mapsto 2$, $1 / 3 \mapsto \domelemb$, $2 \mapsto \domelemc$.
  \item If the previous position was $\domelemb, 2$ or $\domelemc, 2/3$ then reply according to the following map: $\domelema / \domelemb \mapsto 2$, $\domelemc \mapsto 3$, $1 / 2 \mapsto \domelemb$, $3 \mapsto \domelemc$.
  \item If the previous position was $\domelemb, 3$ then reply according to the following map: $\domelema / \domelemb \mapsto 1$, $\domelemc \mapsto 3$, $1 / 2 \mapsto \domelemb$, $3 \mapsto \domelemc$.
  \item The position $\domelemc, 1$ is not reachable according to the above strategy.
  \end{itemize}
  It can be easily verified that playing according to the above scenarios the (non) presence of $\relsymbolR$ between the previously selected element and the current element between two structures is preserved.
  Hence, we can define a set $\bisimZ_{n{+}1}$ from $\bisimZ_n$ as $\{ (\domelemtuplea \varx \vary, \domelemtupleb \varv \varu) \}$ with $(\domelemtuplea \varx, \domelemtupleb \varv) \in \bisimZ_n$ and with $(\varx, \varu)$ ranging the aforementioned ``previous positions'' and with replies $(\vary, \varv)$ described with $\vary \mapsto \varv$ above.
  Checking if each $\bisimZ_n$ satisfies  \ref{bisim:back}, \ref{bisim:forth} and \ref{bisim:atomic-harmony} is routine. 
  The first two properties follows immediately by definition of $\vary \mapsto \varv$ above and the last property can be easily verified by hand (notice that since the vocabulary is binary, only the last two positions of tuples in $\bisimZ_n$ must be verified). We omit the boring calculations (but we did them!).

\subsection{Two-variable formulae from \texorpdfstring{$\Lsuffix$}{Lsuffix} without any \texorpdfstring{$\Linfix$}{Linfix} interpolant}\label{appendix:thm:two-variable-fragments-do-not-have-CIP}
  \begin{proof}
    Take $\varphi, \psi$ as in the proof of~\cref{thm:FL-and-FF-doesnt-have-CIP}.
    It is easy to see, by turning the innermost subformulae into DNF and by shifting quantifiers, that the following primed formulae are equivalent to their non-primed versions.
    \[
     \varphi' := [\forall{\varx_1}\; \relsymbolP_1(\varx_1) \vee \forall{\varx_2}\; [\neg \relsymbolR(\varx_1,\varx_2) \lor \forall{\varx_3}\; (\relsymbolP_2(\varx_3) \lor \neg \relsymbolR(\varx_2,\varx_3))]] \land \; \forall{\varx_1}\; [ \relsymbolP_1(\varx_1) \to \forall{\varx_2} \; \relsymbolP_2(\varx_2) \to \relsymbolR(\varx_1,\varx_2)],
    \]
    \[
    \psi' := \exists{\varx_1}( \relsymbolQ_1(\varx_1) \land \exists{\varx_2} (\relsymbolR(\varx_1,\varx_2) \land \exists{\varx_3} [ \relsymbolR(\varx_2,\varx_3) \land \relsymbolQ_2(\varx_3)])) \land \;  \forall{\varx_1} \relsymbolQ_1(\varx_1) \to \forall{\varx_2}[ \relsymbolQ_2(\varx_2) \to \neg\relsymbolR(\varx_1,\varx_2)].
    \]

    We next introduce fresh unary symbols $\relsymbolA, \relsymbolB$ and use them to get rid of subformulae starting with $\mathcal{Q}{\varx_3}$ as follows:
    \[
      \varphi_{\relsymbolA} := \forall{\varx_1}\; [ \relsymbolA(\varx_1) \leftrightarrow (\forall{\varx_2}\; (\relsymbolP_2(\varx_2) \lor \neg \relsymbolR(\varx_1,\varx_2))], \qquad
      \psi_{\relsymbolB} := \forall{\varx_1}\; [ \relsymbolB(\varx_1) \leftrightarrow \exists{\varx_2} [ \relsymbolR(\varx_1,\varx_2) \land \relsymbolQ_2(\varx_2)] ]
    \]
    and hence, $\varphi'$ and $\psi'$, relying on $\varphi_{\relsymbolA}, \psi_{\relsymbolB}$, can be rewritten into:
    \[
     \varphi'' := \varphi_{\relsymbolA} \land [\forall{\varx_1}\; \relsymbolP_1(\varx_1) \vee \forall{\varx_2}\; [\neg \relsymbolR(\varx_1,\varx_2) \lor \relsymbolA(\varx_2) ]] \land \; \forall{\varx_1}\; [ \relsymbolP_1(\varx_1) \to \forall{\varx_2} \; \relsymbolP_2(\varx_2) \to \relsymbolR(\varx_1,\varx_2)],
    \]
    \[
    \psi'' := \psi_{\relsymbolB} \land \exists{\varx_1}( \relsymbolQ_1(\varx_1) \land \exists{\varx_2} (\relsymbolR(\varx_1,\varx_2) \land \relsymbolB(\varx_2) )) \land \;  \forall{\varx_1} \relsymbolQ_1(\varx_1) \to \forall{\varx_2}[ \relsymbolQ_2(\varx_2) \to \neg\relsymbolR(\varx_1,\varx_2)].
    \]
    Note that both $\varphi''$ and $\psi''$ are in $\Lsuffix^2$. 
    As in the proof of~\cref{thm:FL-and-FF-doesnt-have-CIP} we have that $\varphi'' \models \neg \psi''$. Indeed, if $\str{A} \models \varphi''$, then either we have that $\str{A} \not\models \psi_B$, in which case also $\str{A} \not\models \psi''$, or $\str{A} \models \psi_B$, in which case also $\str{A} \not\models \psi''$, since $\varphi''$ entails that $R$ must be transitive while $\psi''$ states that this is not the case.

    We now show that there is no $\Linfix[\{ \relsymbolR \}]$-interpolant $\chi$ satisfying $\varphi'' \models \chi \models \neg \psi''$ by presenting two $\Linfix[\{ \relsymbolR \}]$-bisimilar structures satisfying $\varphi''$ and~$\psi''$. 
    Take $\str{A}''$ and $\str{B}''$ as depicted below.

    \begin{figure}[H]
      \centering
      \begin{tikzpicture}[transform shape]
          \node[] at (1.5, 0.5) {$\str{A}'' :=$};
          \draw (2, 0) node[ptrond, label=center:\small{$\domelema$}] (a) {};
          \draw (2, -0.5) node[label=center:\small{$\relsymbolP_1$}] (below_a) {};
          \draw (4, 0) node[ptrond, label=center:\small{$\domelemb$}] (b) {};
          \draw (4, -0.5) node[label=center:\small{$\relsymbolP_1$}] (below_b) {};
          \draw (6, 0) node[ptrond, label=center:\small{$\domelemc$}] (c) {};
          \draw (6, -0.5) node[label=center:\small{$\relsymbolP_1, \relsymbolP_2, \relsymbolA$}] (below_c) {};

          \path[->] (a) edge node[yshift=-6] {\( \relsymbolR \)} (b); 
          \path[->] (b) edge node[yshift=-6] {\( \relsymbolR \)} (c); 
          \path[->] (a) edge[bend left=30] node[yshift=6] {\( \relsymbolR \)} (c); 
          \path[->] (c) edge[loop above] node[] {\( \relsymbolR \)} (c); 

          \node[] at (7.5, 0.5) {$\str{B}'' :=$};
          \draw (8, 0) node[ptrond, label=center:\small{$1$}] (one) {};
          \draw (8, -0.5) node[label=center:\small{$\relsymbolQ_1$}] (below_one) {};
          \draw (10, 0) node[ptrond, label=center:\small{$2$}] (two) {};
          \draw (10, -0.5) node[label=center:\small{$\relsymbolB$}] (below_two) {};
          \draw (12, 0) node[ptrond, label=center:\small{$3$}] (three) {};
          \draw (12, -0.5) node[label=center:\small{$\relsymbolQ_2, \relsymbolB$}] (below_three) {};

          \path[->] (one) edge node[yshift=-6] {\( \relsymbolR \)} (two); 
          \path[->] (two) edge node[yshift=-6] {\( \relsymbolR \)} (three); 
          \path[->] (three) edge[loop above] node[] {\( \relsymbolR \)} (three); 
      \end{tikzpicture}
    \end{figure}
    It can be verified easily that $\str{A}'' \models \varphi''$ and $\str{B}'' \models \psi''$. 
    Moreover, the $\{ \relsymbolR \}$-reducts of $\str{A}''$ and $\str{B}''$ are exactly $\str{A}$ and $\str{B}$ from~\cref{appendix:thm:FL-and-FF-doesnt-have-CIP}, which were shown to be $\Linfix[\{ \relsymbolR \}]$-bisimilar. 
    This concludes the proof.
  \end{proof}

\subsection{HAHs in ``Base case: Step I'' of~\cref{thm:Gaffix-have-CIP}}\label{appendix:HAHs-are-important-xD}

\begin{claim}
If $\domelemd_k$ occurs in $\domelemtuplec$, then $(\domelemd_1,\ldots,\domelemd_k)$ is an affix of $\domelemtuplec$.
\end{claim}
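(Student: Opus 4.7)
The plan is to exploit the tree structure of HAHs together with the maximality of $\domelemtuplec$. Suppose $\domelemd_k$ occurs in $\domelemtuplec$, say $\domelemd_k = \domelemc_\ell$ for some $\ell \in \{1,\dots,|\domelemtuplec|\}$. The key observation is that in a HAH, any $\sigma$-live tuple of length at least two lies in some relation of arity $\ge 2$, and so by~\cref{def:hat} its consecutive entries are related by parent-child links in $\N^*$: each entry equals the previous one with exactly one number appended. Since every element of $\N^*$ has a unique parent, I will back-propagate the equality $\domelemd_k = \domelemc_\ell$ one position at a time to obtain $\domelemd_{k-j} = \domelemc_{\ell-j}$ for every $j = 0, 1, \dots, \min(k,\ell)-1$.

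Once this is established, I split into two cases. If $k \le \ell$, then setting $j = k-1$ gives $\domelemd_m = \domelemc_{\ell - k + m}$ for $m = 1, \dots, k$, so $(\domelemd_1, \dots, \domelemd_k) = (\domelemc_{\ell - k + 1}, \dots, \domelemc_\ell)$, which is an infix (hence an affix) of $\domelemtuplec$, as required.

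The main obstacle is the remaining case $k > \ell$, which must be ruled out. Here the back-propagation yields $\domelemc_m = \domelemd_{k - \ell + m}$ for $m = 1, \dots, \ell$, so $\domelemtuplec$ is a proper suffix of $\domelemtupled$; in particular, $\domelemtuplec$ is an affix of the $\sigma$-live tuple $\domelemtupled$. Since the strict ordering $\domelemtuplec \reijosorder \domelemtupled$ guarantees $\domelemtuplec \neq \domelemtupled$, this contradicts the maximality of $\domelemtuplec$ in $L_0^{\str{A}}$. The corner cases (where $|\domelemtuplec| = 1$ or $|\domelemtupled| = 1$) are subsumed by the same argument: a length-one tuple sitting inside a longer $\sigma$-live tuple still violates maximality, and if both have length one then $\ell = k = 1$ and the claim is immediate. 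This completes the proof.
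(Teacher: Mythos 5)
Your back-propagation step and the case $k \le \ell$ are exactly right and coincide with the paper's argument. The gap is in how you dispose of the case $k > \ell$. From $\domelemc_m = \domelemd_{k-\ell+m}$ for $m = 1,\dots,\ell$ you conclude that ``$\domelemtuplec$ is a proper suffix of $\domelemtupled$'', but this does not follow: the back-propagation only identifies the length-$\ell$ \emph{prefix} of $\domelemtuplec$ with an infix of $\domelemtupled$, and says nothing about the remaining entries $\domelemc_{\ell+1},\dots,\domelemc_{|\domelemtuplec|}$, which need not occur in $\domelemtupled$ at all; hence maximality of $\domelemtuplec$ yields no contradiction. Concretely, in a HAT whose domain is $\{\varepsilon,0,00,001\}$ with two maximal $\sigma$-live triples $\domelemtupled = (\varepsilon,0,00)$ and $\domelemtuplec = (0,00,001)$ (each in its own ternary relation, neither an affix of any longer live tuple), we have $\domelemd_3 = \domelemc_2$, i.e.\ $k = 3 > 2 = \ell$, yet $\domelemtuplec$ is not an affix of $\domelemtupled$, no maximality is violated, and indeed $(\domelemd_1,\domelemd_2,\domelemd_3)$ is \emph{not} an affix of $\domelemtuplec$ --- so the statement is outright false without the hypothesis $\domelemtuplec \reijosorder \domelemtupled$, which your proof never uses beyond extracting $\domelemtuplec \neq \domelemtupled$; that alone signals that something must be off. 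The paper rules out $k > \ell$ precisely with this hypothesis: if the back-propagation reaches $\domelemc_1 = \domelemd_{k-\ell+1}$ with $k-\ell+1 > 1$, then $\domelemd_1$ is a proper prefix (an ancestor in $\N^*$) of $\domelemd_{k-\ell+1} = \domelemc_1$, hence $\domelemd_1 \lesslex \domelemc_1$ and therefore $\domelemtupled \reijosorder \domelemtuplec$, contradicting $\domelemtuplec \reijosorder \domelemtupled$. (In the example above it is indeed $\domelemtupled$ that precedes $\domelemtuplec$.) Replacing your maximality argument in the case $k>\ell$ with this order-theoretic one repairs the proof.
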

\begin{proof}
Suppose that $d_k = c_\ell$. 
If $k = 1$, then there is nothing to show, so we assume that $k > 1$. 
Now we must also have that $\ell > 1$, since otherwise we would have that $\domelemtupled \reijosorder \domelemtuplec$ as $d_1 \lesslex c_1$ (since the length of $d_{k{-}1}$ --- and hence also the length of $d_1$ --- is strictly less than the length of $\domelemd_k = \domelemc_\ell$). 
Due to the definition of HAH, there exist $n_k, n_\ell \in \N$ such that $\domelemd_k = \domelemd_{k{-}1} \cdot n_k$ and $\domelemc_\ell = \domelemc_{\ell {-} 1} \cdot n_\ell$. 
Since $\domelemd_k = \domelemc_k$, we must have that $\domelemd_{k{-}1} \cdot n_k = \domelemc_{\ell-1} \cdot n_\ell$, which clearly implies that $\domelemd_{k-1} = \domelemc_{\ell-1}$. 
By repeating the above argument sufficiently many times we arrive at the desired conclusion.
\end{proof}

\end{document}